\documentclass[letterpaper]{scrartcl}

\usepackage[]{group} 

\newcommand{\conv}{\mathrm{conv}}
\newcommand{\aff}{\mathrm{aff}}
\newcommand{\cl}{\mathrm{cl}}
\usepackage{multirow}
\usepackage{arydshln}
\usepackage{ifthen}
\usepackage{blkarray}
\usepackage{calc}
\usepackage{geometry}
\usepackage{subcaption}
\usepackage{thm-restate}
\usepackage{float}

\usetikzlibrary{arrows.meta,decorations.markings}

\setlength\dashlinedash{0.5pt}
\setlength\dashlinegap{1pt}
\setlength\arrayrulewidth{0.5pt}

\newtheorem{claim}{Claim}

\makeatletter
\renewcommand*\env@matrix[1][*\c@MaxMatrixCols c]{%
  \hskip -\arraycolsep
  \let\@ifnextchar\new@ifnextchar
  \array{#1}}
\makeatother

\renewcommand{\int}[1]{\mathrm{int}({#1})}

\renewcommand{\epsilon}{\varepsilon}
\newcommand{\vnm}{\mathit{vNM}} 
\newcommand{\wl}{\mathit{WL}} 
\newcommand{\dich}{\mathit{dich}} 
\newcommand{\dist}{\mathrm{dist}}

\title{Arrovian Aggregation of Convex Preferences}

\author{Florian Brandl \qquad Felix Brandt\\
Technische Universit\"at M\"unchen\\ 
\texttt{\small \{brandlfl,brandtf\}@in.tum.de}}

\begin{document}

\maketitle

\begin{abstract}
We consider social welfare functions that satisfy Arrow's classic axioms of \emph{independence of irrelevant alternatives} and \emph{Pareto optimality} when the outcome space is the convex hull of some finite set of alternatives. 
Individual and collective preferences are assumed to be continuous and convex, which guarantees the existence of maximal elements and the consistency of choice functions that return these elements, even without insisting on transitivity.
We provide characterizations of both the domains of preferences and the social welfare functions that allow for anonymous Arrovian aggregation. 
The domains admit arbitrary preferences over alternatives, which completely determine an agent's preferences over all mixed outcomes. 
On these domains, Arrow's impossibility turns into a complete characterization of a unique 
social welfare function, which can be readily applied in 
settings involving divisible resources such as probability, time, or money.
\end{abstract}

\section{Introduction}
\label{sec:intro}

A central concept in welfare economics are social welfare functions (SWFs) in the tradition of Arrow, i.e., functions that map a collection of individual preference relations over some set of alternatives to a social preference relation over the alternatives.
Arrow's seminal theorem states that every SWF that satisfies Pareto optimality and independence of irrelevant alternatives is dictatorial \citep{Arro51a}. 
This sweeping impossibility significantly strengthened an earlier observation by \citet{Cond85a} and
sent shockwaves throughout welfare economics, political philosophy, economic theory, and even seemingly unrelated disciplines such as philosophy of science and engineering design \citep[see, e.g.,][]{MaSe14a,Sen17a,Gaer18a}. 
A large body of subsequent work has studied whether more positive results can be obtained by modifying implicit assumptions on the domain of admissible preferences, both individually and collectively. 
Two main approaches can be distinguished.

The first approach, pioneered by \citet{Sen69a}, is based on the observation that ``Arrow's requirement that social preferences be transitive is probably the least defensible of the conditions that lead to his dictatorship result'' \citep{BlPo82a}. Consequently, the idea is to \emph{weaken the assumption of transitivity of collective preferences} to quasi-transitivity, acyclicity, path independence, or similar conditions.
Although this does allow for non-dictatorial aggregation functions that meet Arrow's criteria, these functions turned out to be highly objectionable, usually on grounds of involving weak kinds of dictatorships or violating other conditions deemed to be indispensable for reasonable preference aggregation \citep[for an overview of the extensive literature, see][]{Kell78a,Sen77a,Sen86a,Schw86a,CaKe02a}.  
Particularly noteworthy are results about acyclic collective preference relations \citep[e.g.,][]{MCSo72a,Brow75a,BlDe77a,BlPo82a,BlPo83a,Bank95a} because acyclicity is necessary and sufficient for the existence of maximal elements when there is a finite number of alternatives.
\citet{Sen95a} concludes that ``the arbitrariness of power of which Arrow's case of dictatorship is an extreme example, lingers in one form or another even when transitivity is dropped, so long as \emph{some} regularity is demanded (such as the absence of cycles).''

Another stream of research has analyzed the implications of \emph{imposing additional structure on individual preferences}. This has resulted in a number of positive results for restricted  domains, such as dichotomous or single-peaked preferences, which allow for attractive SWFs \citep[e.g.,][]{Blac48a,Arro51a,Inad69a,SePa69a,EhSt08a}. 
Many domains of economic interest are concerned with infinite sets of outcomes, which satisfy structural restrictions such as compactness and convexity. 
Preferences over these outcomes are typically assumed to satisfy some form of continuity and convexity, i.e., they are robust with respect to minimal changes in outcomes and with respect to convex combinations of outcomes. Various results have shown that Arrow's impossibility remains intact under these assumptions \citep[e.g.,][]{KMS79a,Bord83b,BoLe89a,BoLe90a,BoLe90b,Camp89a,Rede95a}. 
\citet{LeWe11a} provide an overview and conclude that ``economic domain restrictions do not provide a satisfactory way of avoiding Arrovian social choice impossibilities, except when the set of alternatives is one-dimensional and preferences are single-peaked.''

The point of departure for the present approach is the observation that all these impossibilities involve some form of transitivity (e.g., acyclicity), even though no such assumption is necessary to guarantee the existence of maximal elements in domains of continuous and convex preferences. \citet{Sonn71a} has shown that all continuous and convex preference relations admit a maximal element in every non-empty, compact, and convex set of outcomes. Moreover, returning maximal elements under the given conditions satisfies standard properties of choice consistency introduced by \citet{Sen69a,Sen71a}. Continuous and convex preference relations can thus be interpreted as rationalizing relations for the choice behavior of rational agents and 
there is little justification for demanding transitivity.\footnote{Another frequently cited reason to justify transitivity is the \emph{money pump}, where an agent with cyclic preferences over three outcomes is deceived into paying unlimited amounts of money in an infinite series of cyclical exchanges. As \citet{Fish91a} notes, however, the money pump ``applies transitive thinking [in the form of money] to an intransitive world [given the agent's preferences]''. Another issue with the money pump in our framework is that it cleverly avoids convexity of the feasible set by splitting it up into three subsets whose union is not convex. If the agent were confronted with a choice from the convex hull of the three original outcomes, he could simply pick his (unique) most-preferred mixed outcome and would not be tempted to exchange it when offered any other outcome in the future.}  
Transitivity has repeatedly been criticized for being overly demanding \citep[see, e.g.,][]{May54a,Fish70c,BaMa88a,Fish91a,Anan93a,Anan09a}. \citet{Anan09a} concludes, ``once considered a cornerstone of rational choice theory, the status of transitivity has been dramatically reevaluated by economists and philosophers in recent years.''

\subsection*{Summary of Results}

We show that Arrow's theorem ceases to hold for convex outcome sets when dispensing with transitivity, and, moreover, Arrow's axioms characterize a unique anonymous SWF that we refer to as \emph{pairwise utilitarianism}. The SWF is utilitarian because collective preferences are obtained by adding the canonical skew-symmetric bilinear (SSB) utility functions representing the agents' ordinal preferences. SSB utility functions assign a utility value to each \emph{pair} of alternatives and are more general than traditional linear utility functions. The SWF is pairwise in the sense that it merely takes into account the numbers of agents who prefer one alternative to another and also satisfies Condorcet's pairwise majority criterion.

More precisely, we consider a convex set of outcomes consisting of all probability measures on some finite abstract set of alternatives, which we refer to as pure outcomes. 
These outcome sets, for example, arise when allocating a divisible resource (such as probability, time, or money) to alternatives. The canonical example is the standard unstructured social choice setting that also allows for lotteries between alternatives.
Individual and collective preference relations over these outcomes are assumed to satisfy continuity, convexity, and a mild symmetry condition. 
In order to motivate these assumptions, we prove that continuity and convexity are necessary and sufficient for consistent choice behavior, mirroring a classic characterization by \citet{Sen71a} in the finite non-convex choice setting (see \Cref{prop:rationalizable} in \Cref{sec:ratchoice}).
We then show that there is a unique inclusion-maximal Cartesian domain of preference profiles that allows for anonymous Arrovian aggregation and satisfies minimal richness conditions (see \Cref{thm:pcdomain} in \Cref{sec:domain}). 
This domain allows for arbitrary preferences over pure outcomes, which in turn completely determine an agent's preferences over all remaining outcomes. When interpreting outcomes as lotteries, this preference extension has a particularly simple and intuitive explanation: \emph{one lottery is preferred to another if and only if the former is more likely to return a more preferred alternative}. Incidentally, this preference extension, which constitutes a central special case of SSB utility functions as introduced by \citet{Fish82c}, is supported by recent experimental evidence (see \Cref{sec:domain}). We also provide an alternative characterization of SSB utility functions using continuity and convexity, which may be of independent interest (see \Cref{prop:ssb} in \Cref{sec:ssb}).

Our main theorem shows that the only Arrovian SWFs on this domain are affine utilitarian with respect to the underlying SSB utility functions (see \Cref{thm:swf} in \Cref{sec:swf}). As a consequence, there is a unique anonymous Arrovian SWF, which compares outcomes by the sign of the bilinear form given by the pairwise majority margins. The resulting collective preference relation over \emph{pure} outcomes coincides with majority rule and the corresponding choice function is therefore consistent with Condorcet's principle of selecting a pure outcome that is majority-preferred to every other pure outcome whenever this is possible.\footnote{It is therefore in line with \citet{DaMa08a} who, also based on Arrow's axioms, have forcefully argued in favor of majority rule in domains where Condorcet winners are guaranteed to exist. Our arguments extend to unrestricted preferences over pure outcomes.} This relation is naturally extended to mixed outcomes such that every compact and convex set of outcomes admits a collectively most preferred outcome. In the context of lotteries, the collective preference relation admits a very intuitive interpretation: in order to compare two lotteries $p$ and $q$, randomly sample a pure outcome $a$ from $p$, a pure outcome $b$ from $q$, and an agent $i$ from the uniform distribution over agents. Then, $p$ is collectively preferred to $q$ if and only if the probability that agent $i$ prefers $a$ to $b$ is greater than the probability that he prefers $b$ to $a$.

We also show that, when restricting attention to von Neumann-Morgenstern preferences over outcomes, anonymous Arrovian aggregation is only possible for dichotomous preferences and the only SWF to do so corresponds to the well-known approval voting rule (see Theorems \ref{thm:dichdomain} and \ref{thm:dichswf} in \Cref{app:dich}).
Our results can thus be interpreted as a generalization of both the domain of dichotomous preferences and approval voting.

\subsection*{Illustrative Example}

The most natural setting where our assumptions apply is that of preference aggregation when the outcome set consists of all lotteries over a finite set of alternatives. 
By contrast, this subsection discusses the application of our results to a budget allocation setting from public finance. Even though the assumption of convex preferences is somewhat restrictive in this context, we think that the example nicely illustrates the premises and consequences as well as the limitations of our results.

Imagine that a representative body, consisting of 100 delegates, aims at reaching a joint decision on how to divide a nation's tax budget between four departments: education, transportation, health, and military.
The delegates belong to different parties ($A$, $B$, $C$, and $D$) and each party already put forward a favored budget proposal (see \Cref{tab:example}).\footnote{One issue ignored here is how to arrive at the set of budget proposals. In particular, it is required that the proposals are affinely independent.} 
We assume, for simplicity, that there are four groups of delegates with identical preferences, which could---for example---correspond to the four parties, but emphasize that no such assumption is required for our results. In fact, no restrictions whatsoever are imposed on the delegates' preferences over proposals. As a consequence, Arrow's theorem implies that every non-dictatorial Pareto optimal SWF violates independence of irrelevant alternatives (IIA), i.e., collective preferences over pairs of proposals may depend on individual preferences over other, unrelated, proposals. Moreover, in the given profile of preferences, the pairwise majority relation is cyclic and every proposal is unstable in the sense that it can be overthrown by a majority of the delegates.

\begin{table}
\centering
\begin{tabular}{lccccc}
\cmidrule[\heavyrulewidth]{1-5}
& $A$ & $B$ & $C$ & $D$ & \footnotesize\parbox[b]{\widthof{\footnotesize Utilitarianism}}{\centering\sffamily Pairwise\\Utilitarianism}\\
\cmidrule{1-5}
Education & 40\,\% & 30\,\% & 20\,\% & 10\,\% & \footnotesize\sffamily 25.0\,\%\\
Transportation & 30\,\% & 10\,\% & 30\,\% & 30\,\% & \footnotesize\sffamily 26.7\,\%\\
Health & 20\,\% & 40\,\% & 30\,\% & 20\,\% & \footnotesize\sffamily 30.0\,\%\\
Military & 10\,\% & 20\,\% & 20\,\% & 40\,\% & \footnotesize\sffamily 18.3\,\%\\
\cmidrule[\heavyrulewidth]{1-5}
\end{tabular}
\hspace{6ex}
\begin{tabular}{cccc}
\midrule[\heavyrulewidth]
$25$ & $20$ & $45$ & $10$ \\
\midrule
$A$ & $B$ & $C$ & $D$\\
$B$ & $A$ & $A$ & $B$\\
$C$ & $C$ & $D$ & $C$\\
$D$ & $D$ & $B$ & $A$\\
\midrule[\heavyrulewidth]
\end{tabular}
\caption{Public finance example. Four budget proposals ($A$, $B$, $C$, and $D$), the preferences of 100 delegates over these proposals, and the rounded budget allocation returned by pairwise utilitarianism.}
\label{tab:example}
\end{table}

In the described setting, it seems fairly natural to extend the space of possible outcomes by allowing to compromise between the different proposals such that the set of outcomes is now the convex hull of $A$, $B$, $C$, and $D$. A fifty-fifty mixture of proposals $A$ and $B$, for instance, will be written as $\nicefrac{1}{2}\, A + \nicefrac{1}{2}\, B$ and assigns 35\% of the budget to education, 20\% to transportation, 30\% to health, and 15\% to military.
When assuming that the delegates' preferences over this enlarged, infinite, set of outcomes satisfy continuity and convexity and an innocuous symmetry condition, such preferences can be represented by bilinear utility functions. Note, however, that these assumptions do not allow for decreasing marginal returns for increasingly large investments as well as complementarities and substitutabilities among departments.

One may now wonder whether there exist Arrovian SWFs for this modified setting. 
Our first theorem shows that, since we insist on allowing arbitrary preferences over proposals, anonymous Arrovian aggregation is only possible when preferences over proposals are extended to preferences over mixtures of proposals by sampling proposals from each mixture and preferring the mixture which wins more pairwise comparisons. 
According to this preference extension, the 25 delegates who prefer proposals $A$ to $B$ to $C$ to $D$ are, for example, assumed to prefer $\nicefrac{2}{3}\, A + \nicefrac{1}{3}\, C$ to proposal $B$ and to be indifferent between proposal $C$ and $\nicefrac{1}{2}\, A + \nicefrac{1}{2}\, D$.
Even though preferences over proposals are transitive, some preferences over mixed outcomes will be cyclic. This effect is known as the Steinhaus-Trybula paradox (see \Cref{fig:pccycle} on Page \pageref{fig:pccycle}).

Our second theorem shows that there is a unique anonymous Arrovian SWF on this preference domain, which we refer to as pairwise utilitarianism. This SWF is based on the bilinear form given by the matrix of pairwise majority margins
\[
\phi =
\begin{blockarray}{ccccc}
	A & B & C & D\\
	\begin{block}{(cccc)c}
	  	0 & 40 & -10 & 80 & A\\
		-40 & 0 & 10 & -10 & B\\
  	  	10 & -10 & 0 & 80 & C\\
		-80 & 10 & -80 & 0 & D\\
	\end{block}
\end{blockarray}\text.
\]
Here, $\phi(A,B)=40$ because the number of delegates who prefer $A$ to $B$ minus the number of delegates who prefer $B$ to $A$ is $(25+45)-(20+10)=40$. Collective preferences are obtained by checking the sign of the corresponding value of $\phi$: 
proposal $A$ is preferred to proposal $B$ because $\phi(A,B)=40>0$, $B$ is preferred to $C$ because $\phi(B,C)=10>0$, $C$ is preferred to $A$ because $\phi(C,A)=10>0$. Bilinearity implies that, for example, $\nicefrac{1}{2}\, A+\nicefrac{1}{2}\, C$ is preferred to $B$ because $\phi(\nicefrac{1}{2}\, A+\nicefrac{1}{2}\, C, B)=15>0$. It follows from the Minimax Theorem that every convex and closed set contains at least one most-preferred outcome. 
In our example, the unique most-preferred outcome is a convex combination of the first three proposals \[p = \nicefrac{1}{6}\,A + \nicefrac{1}{6}\,B + \nicefrac{2}{3}\,C\text.\]
This corresponds to the budget allocation given in the pairwise utilitarianism column of \Cref{tab:example}. The \emph{choice} function that returns maximal pairwise utilitarian outcomes satisfies contraction consistency. Hence, the optimal allocation is not affected if proposal $D$ is retracted. Moreover, the choice function satisfies expansion consistency, i.e., if allocation $p$ is not only chosen in the example described above, but also in an alternative choice setting in which proposal $D$ is replaced with another proposal $E$, then $p$ would also be chosen if all five proposals (and their convex combinations) were feasible.

The pairwise utilitarian SWF satisfies Pareto optimality. Proposal $C$, for example, is socially preferred to $\nicefrac{1}{2}\, A + \nicefrac{1}{2}\, D$ because the agents represented in columns 1, 2, and 4 are indifferent between both outcomes while those represented in column 3 strictly prefer the former to the latter. 
The SWF also satisfies IIA in the sense that collective preferences between outcomes in the convex closure of some subset of proposals only depend on the individual preferences between these outcomes. For example, all preferences between outcomes in the convex closure of two proposals only depend on the pairwise majority relation between these proposals. Moreover, the collective preferences between all outcomes in the convex closure of any triple of proposals are independent of individual preferences over outcomes that involve the fourth proposal.
Of course, more significant than the observation that pairwise utilitarianism satisfies Pareto optimality and IIA is the fact that it is the \emph{only} such SWF.\footnote{The public finance example can also be used to illustrate two further desirable properties of the maximal pairwise utilitarian choice function. It is \emph{population-consistent} in the sense that merging two bodies of delegates, each of which came to the same conclusion, will not affect the outcome and it is \emph{composition-consistent}, which---among other things---implies that the choice function cannot be manipulated by introducing additional proposals, which are very similar to existing ones. 
\citep[see][]{Bran13a}.}

\section{Related Work}
\label{sec:related}

A special case of our setting, which has been well-studied, concerns individual preferences over lotteries that satisfy the \emph{von Neumann-Morgenstern (vNM) axioms}, i.e., preferences that can be represented by assigning cardinal utilities to alternatives such that lotteries are compared based on the expected utility they produce.  \citet{Samu67a} conjectured that Arrow's impossibility still holds under these assumptions and \citet{KaSc77b} showed that this is indeed the case when there are at least four alternatives. \citet{Hyll80b} later pointed out that a continuity assumption made by \citeauthor{KaSc77b} is not required.
There are other versions of Arrow's impossibility for vNM preferences, which differ in modeling assumptions and whether SWFs aggregate cardinal utilities or the preference relations represented by these utilities.
A detailed comparison of these results is given in Appendix~\ref{app:vnm}.

Our results apply to Arrovian aggregation of preferences over lotteries under much loosened assumptions about preferences over lotteries. In particular, the axioms we presume entail that preferences over lotteries can be represented by \emph{skew-symmetric bilinear (SSB) utility functions}, which assign a utility value to each pair of lotteries. One lottery is preferred to another lottery if the SSB utility for this pair is positive.
SSB utility theory is a generalization of linear expected utility theory due to \citet{vNM47a}, which does not require the controversial independence axiom and transitivity \citep[see, e.g.,][]{Fish82c,Fish84c,Fish88a}.
Independence prescribes that a lottery $p$ is preferred to lottery $q$ if and only if a coin toss between $p$ and a third lottery $r$ is preferred to a coin toss between $q$ and $r$ (with the same coin used in both cases). There is experimental evidence that independence is systematically violated by human decision makers. The Allais Paradox is perhaps the most famous example \citep{Alla53a}.
Detailed reviews of such violations, including those reported by \citet{KaTv79a}, have been provided by \citet{Mach83a,Mach89a} and \citet{McCl88a}. An interesting historical perspective on the independence axiom has been given by \citet{FiWa95a}.

Our characterization of Arrovian SWFs is related to Harsanyi's \emph{Social Aggregation Theorem} \citep{Hars55a}, which shows that, for von Neumann-Morgenstern preferences over lotteries, affine utilitarianism already follows from Pareto indifference \citep[see][for an excellent exposition and various extensions of this theorem]{FSW08a}. Harsanyi's theorem is a statement about \emph{Bergson-Samuelson social welfare functions}, i.e., a single preference profile is considered in isolation. As a consequence, the weights associated with the agents' utility functions may depend on their preferences. This can be prevented by adding axioms that connect the collective preferences across different profiles. The SWF that derives the collective preferences by adding up utility representations normalized to the unit interval is known as \emph{relative utilitarianism} \citep{Dhil98a,DhMe99a,BoCh15a,BoCh15b}. It was characterized by \citet{DhMe99a} using a weakening of IIA and further axioms (see \Cref{fn:ira}).
As shown by \citet{FiGe87a} and further explored by \citet{TuWe99a}, aggregating SSB utility functions is fundamentally different from aggregating von Neumann-Morgenstern utility functions in that Harsanyi's Pareto indifference axiom (and strengthenings thereof) do not imply affine utilitarianism. As we show in this paper, this can be rectified by considering Arrow's multi-profile framework and assuming IIA.
\citet[][Proposition 1]{Mong94a} gives a similar characterization of affine utilitarianism for social welfare functionals (SWFLs), which operate on profiles of cardinal utility functions. The framework of SWFLs (without any invariance property on utility functions) involves interpersonal comparisons of utilities, which significantly weakens the force of IIA (see also \Cref{app:vnm}).

The probabilistic voting rule that returns the maximal elements of the unique anonymous Arrovian SWF is known as \emph{maximal lotteries} \citep{Krew65a,Fish84a} and was recently axiomatized using two consistency conditions \citep{Bran13a}. 
Independently, maximal lotteries have also been studied in the context of randomized matching and assignment \citep[see, e.g.,][]{KMN11a,ABS13a}.

When the set of outcomes cannot be assumed to be convex (for example, because it is finite), a common approach to address the intransitivity of collective preferences is to define alternative notions of maximality, rationalizability, or welfare, leading to concepts such as \emph{transitive closure maximality} or the \emph{uncovered set} \citep[see, e.g.,][]{Lasl97a,BrHa11a,Nish17a,BBSS14a}. Interestingly, the support of maximal lotteries, which is known as the \emph{bipartisan set} or the \emph{essential set} \citep{LLL93b,Lasl00a}, also appears in this literature, even though this approach is fundamentally different from the one pursued in this paper.

\section{Preliminaries}\label{sec:prelim}

Let $\mathcal U$ be a non-empty and finite universal set of alternatives. 
By $\Delta$ we denote the set of all probability measures on $\mathcal U$. 
We assume that $\Delta$ is equipped with the topology induced by the standard Euclidian topology on $\mathbb R^\mathcal U$.
For $X\subseteq \mathcal U$, let $\Delta_X$ be the set of probability measures in $\Delta$ with support in $X$, i.e., $\Delta_X = \{p\in \Delta\colon p(X) = 1\}$. 
We will refer to elements of $\Delta$ as \emph{outcomes} and one-point measures in $\Delta$ as \emph{pure outcomes}.
The \emph{convex hull} of a set of outcomes $X\in \Delta$ will be denoted by $\conv(X)$. 
For $p,q,r\in\Delta$, we write $\conv(p,q,r)$ instead of the more clumsy $\conv(\{p,q,r\})$. 
For $p,q\in\Delta$ and $\lambda\in [0,1]$, we write $[p,q] = \conv(p,q)$, $[p,q) = [p,q]\setminus\{q\}$, $(p,q] = [p,q]\setminus\{p\}$, $(p,q) = [p,q)\cap (p,q]$, and $p\lambda q = \lambda p + (1-\lambda) q$ for short.

The preferences of an agent are represented by an asymmetric binary relation $\succ$ over $\Delta$ called the \emph{preference relation}.
Given two outcomes $p,q\in\Delta$, we write $p\sim q$ when neither $p\succ q$ nor $q\succ p$, and $p\succsim q$ if $p\succ q$ or $p \sim q$.
For $p\in\Delta$, let $U(p) = \{q\in \Delta\colon q\succ p\}$  and $L(p) = \{q\in \Delta\colon p\succ q\}$
be the \emph{strict upper} and \emph{strict lower contour set} of $p$ with respect to $\succ$; $I(p) = \{q\in\Delta\colon p\sim q\}$ denotes the \emph{indifference set} of $p$. 
For $X\subseteq \Delta$, ${\succ}|_X = \{(p,q)\in {\succ}\colon p,q\in X\}$ is the preference relation $\succ$ restricted to outcomes in $X$.
We will consider preference relations that are continuous, i.e., small perturbations to outcomes retain a strict preference, and convex, i.e., preferences are preserved when taking convex combinations of outcomes. This amounts to the following restrictions on contour sets for any $p\in\Delta$.
\begin{gather*}
	U(p) \text{ and } L(p) \text{ are open} \text.\tag{Continuity}\\
	U(p), L(p), U(p)\cup I(p)\text{, and } L(p)\cup I(p) \text{ are convex}\text.\tag{Convexity}
\end{gather*}

\section{Rational and Consistent Choice}
\label{sec:ratchoice}

The existence of maximal elements is usually quoted as the main reason for insisting on transitivity of preference relations.
It was shown by \citet{Sonn71a} that continuity and convexity are already sufficient for the existence of maximal elements in non-empty, compact, and convex sets, even when preferences are intransitive \citep[see also][]{Berg92a,Llin98a}.
We will refer to any such subset of outcomes $X\subseteq \Delta$ as \emph{feasible} and denote the set of all feasible sets by $\mathcal F(\Delta)$. Moreover, define $\max_{\succ}(X)=\{ x\in X \colon x \succsim y \text{ for all }y\in X\}$ for 
any preference relation $\succ$ and feasible set $X$. 

\begin{proposition}\label{thm:sonnenschein}\citep{Sonn71a}
If $\succ$ is a continuous and convex preference relation, then $\max_{\succ}(X)\neq\emptyset$ for every feasible set $X$.\footnote{\citeauthor{Sonn71a} only required that upper contour sets are convex and that lower contour sets are open.} 
\end{proposition}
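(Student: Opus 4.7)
The plan is to establish the finite intersection property of a family of ``not-strictly-dominated'' sets and conclude by compactness, with the combinatorial step supplied by the Knaster--Kuratowski--Mazurkiewicz (KKM) lemma. For each $y\in X$, set $F(y) = X\setminus L(y) = \{x\in X \colon y\not\succ x\}$. By continuity, $L(y)$ is open, so $F(y)$ is closed in the compact set $X$; asymmetry of $\succ$ gives $y\in F(y)$, so each $F(y)$ is nonempty. An element $x^\ast\in X$ lies in $\max_\succ(X)$ exactly when $x^\ast\in\bigcap_{y\in X}F(y)$, so it suffices to show that this intersection is nonempty, and by compactness it is enough to verify the finite intersection property.

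Fix $y_1,\dots,y_n\in X$ and let $S=\conv(y_1,\dots,y_n)$, which lies in $X$ by convexity of $X$. Put $C_i = F(y_i)\cap S$, each closed in $S$. I claim the KKM covering condition holds: for every nonempty $I\subseteq\{1,\dots,n\}$, $\conv(\{y_i\colon i\in I\})\subseteq\bigcup_{i\in I}C_i$. Suppose for contradiction that some $x\in\conv(\{y_i\colon i\in I\})$ satisfies $x\notin C_i$ for every $i\in I$, i.e., $y_i\succ x$ for all $i\in I$. Then $\{y_i\colon i\in I\}\subseteq U(x)$, and since $U(x)$ is convex we obtain $x\in U(x)$, contradicting asymmetry. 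The KKM lemma then yields $\bigcap_{i=1}^n C_i\neq\emptyset$, and hence $\bigcap_{i=1}^n F(y_i)\neq\emptyset$, completing the argument.

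The crux is picking the right combinatorial tool. KKM is perfectly matched to the hypotheses: after taking complements, convexity of upper contour sets translates directly into the KKM covering condition on faces of $S$. Convexity of $X$ is needed only to guarantee $S\subseteq X$, so that KKM's output is a feasible candidate, and the only attributes of $\succ$ actually used are asymmetry, openness of $L(y)$, and convexity of $U(y)$---which matches the minimal hypotheses of Sonnenschein's original formulation mentioned in the footnote. Alternative routes via Browder's fixed-point theorem or Fan's minimax inequality would work equally well, but would be heavier machinery for what is essentially a KKM-style argument.
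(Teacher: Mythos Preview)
Your proof is correct. The paper does not supply its own proof of this proposition; it simply cites \citet{Sonn71a} and moves on. Your KKM-based argument is a standard and clean way to establish the result, and as you note it uses exactly the minimal hypotheses flagged in the footnote (openness of $L(y)$ and convexity of $U(y)$, together with asymmetry). There is nothing to compare against in the paper itself.
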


In a model that assumes the feasibility of all finite non-empty subsets of outcomes, 
\citet{Sen71a} has shown that two intuitive choice consistency conditions, known as \emph{Sen's $\alpha$} (or \emph{contraction}) and \emph{Sen's $\gamma$} (or \emph{expansion}), are equivalent to choosing maximal elements of an underlying preference relation \citep[see also][]{Sen77a}.
Moreover, when the total number of outcomes is finite, such a rationalizing preference relation has to be acyclic because acyclicity is necessary and sufficient for the existence of maximal elements. With convex feasible sets such as the ones we are considering, this is no longer the case, and other properties (such as continuity and convexity) can take over the role of acyclicity. In the following, we show that Sen's theorem can be salvaged in our setting when defining choice functions and choice consistency conditions appropriately.\footnote{Inspired by the classic contributions of \citet{Samu38a}, \citet{Rich66a}, and \citet{Afri67a}, there has been renewed interest in the rationalizability of choice functions in economic domains such as consumer demand \citep[see, e.g.][]{Reny15a,ChEc16a,NOQ17a}. We are, however, not aware of a contraction-expansion-based characterization of rationalizable choice from convex feasible sets.}

A choice function is a function that maps any feasible set to a feasible subset. Formally, we define choice functions as (upper hemi-) continuous functions $S\colon \mathcal F(\Delta)\rightarrow \mathcal F(\Delta)$ such that for all $X\in \mathcal F(\Delta)$, $S(X)\subseteq X$, and for all $p,q\in\Delta$, $S([p,q])\in\{\{p\},\{q\},[p,q])\}$.\todo{FFX: This assumption may be surprising and we maybe should comment on it. FB: Agreed, but I'd rather postpone this to the next version.}
Contraction consistency requires that if an outcome is chosen from some set, then it is also chosen from any subset that it is contained in. A choice function $S$ satisfies \emph{contraction} if for all $X,Y\in \mathcal F(\Delta)$ with $X\cap Y\neq\emptyset$,
\[
 	S(X)\cap Y \subseteq S(X\cap Y)\text.\tag{Contraction}
\]
Expansion consistency demands that an outcome that is chosen from two sets $X$ and $Y$, should also be chosen from their union $X\cup Y$. Since we only consider convex feasible sets, we strengthen this condition by taking the convex hull $\conv(X\cup Y)$ in the consequence.
$S$ satisfies \emph{expansion} if for all $X,Y\in \mathcal F(\Delta)$,
\[
	S(X)\cap S(Y) \subseteq S(\conv(X\cup Y))\text.\tag{Expansion}
\]
Following \citet{Schw76a}, the conjunction of contraction and expansion can be nicely written as a single condition, where for all $X,Y\in\mathcal F(\Delta)$ with $X\cap Y\neq\emptyset$,
\[
	S(X)\cap S(Y) = S(\conv(X\cup Y))\cap X \cap Y\text.\tag{Consistency}
\]
The inclusion from left to right is expansion whereas the inclusion from right to left is equivalent to contraction \citep[see also][]{BrHa11a}.
A choice function that satisfies contraction and expansion will be called \emph{consistent}.

A choice function $S$ is \emph{rationalizable} if there exists a preference relation $\succ$ such that for every $X\in\mathcal{F}(\Delta)$, $S(X)=\max_\succ X$. Let us now consider rationalizability in the context of continuous and convex preference relations.
Any choice function that returns maximal elements of some continuous and convex relation satisfies contraction and expansion.\footnote{There are also stronger versions of expansion, which, together with contraction, are equivalent to the \emph{weak axiom of revealed preference} or \emph{Arrow's choice axiom} \citep{Samu38a,Arro59a}. These conditions imply rationalizability via a \emph{transitive} relation and are therefore not generally satisfied when choosing maximal elements of continuous and convex relations.} 
As we prove in \Cref{app:choice}, the converse holds as well, i.e., if we insist on consistent choice, we may restrict our attention to continuous and convex preference relations.

\begin{restatable}{proposition}{proprationalizable}\label{prop:rationalizable}
	A choice function is rationalizable via a continuous and convex relation if and only if it is consistent.
\end{restatable}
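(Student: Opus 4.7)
The plan is to handle the two directions separately. For the reverse direction, assume $S = \max_\succ$ for some continuous convex $\succ$: contraction is immediate since $p \in S(X) \cap Y$ gives $p \succsim q$ for every $q \in X \supseteq X \cap Y$; expansion follows because $p \in S(X) \cap S(Y)$ yields $p \succsim q$ for every $q \in X \cup Y$, and convexity of $L(p) \cup I(p)$ extends this to $\conv(X \cup Y)$.

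For the forward direction I would define the base relation by $p \succ q$ iff $S([p,q]) = \{p\}$, so $p \succsim q$ iff $p \in S([p,q])$; note $\succsim$ is complete because $S([p,q])$ always contains at least one endpoint by the segment-structure axiom. Three things need to be verified in sequence. First, the rationalization $S(X) = \max_\succ X$: the inclusion $S(X) \subseteq \max_\succ X$ follows from contraction applied with $Y = [p, q] \subseteq X$. For the reverse inclusion, I would iterate expansion over the vertices $q_1, \ldots, q_k$ of a polytope $X \ni p$ with $p \succsim q_i$, starting from $p \in S([p, q_1]) \cap S([p, q_2]) \subseteq S(\conv(p, q_1, q_2))$ and growing the polytope one vertex at a time; arbitrary feasible $X$ are then handled by approximating from within by polytopes containing $p$ and invoking upper hemi-continuity of $S$. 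Second, continuity of $\succ$: if $q_n \to q$ with $p \in S([p, q_n])$, the convergence $[p, q_n] \to [p, q]$ and UHC force $p \in S([p, q])$, so $U(p)$ (and symmetrically $L(p)$) is open.

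Third, and the technical heart, is convexity. Convexity of $L(p) \cup I(p) = \{q : p \succsim q\}$ drops out of expansion: if $p \succsim q_1, q_2$, then $p \in S([p, q_1]) \cap S([p, q_2]) \subseteq S(\conv(p, q_1, q_2))$, and the rationalization then gives $p \succsim r$ throughout the triangle. The dual convexity of $U(p) \cup I(p) = \{q : q \succsim p\}$ is harder, because the witnesses $q_1 \in S([p, q_1])$ and $q_2 \in S([p, q_2])$ no longer share a common element for expansion. I would argue by contradiction: suppose $q_1, q_2 \succsim p$ yet $p \succ r(\lambda_0)$ for some $r(\lambda) = \lambda q_1 + (1-\lambda) q_2$. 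Since $\{\lambda : r(\lambda) \succsim p\}$ is closed by UHC and contains $\{0, 1\}$, an intermediate-value argument produces $a < \lambda_0 < b$ with $r(a), r(b) \sim p$ and $p \succ r(\lambda)$ for $\lambda \in (a,b)$. Examining the triangle $T' = \conv(p, r(a), r(b))$, expansion should place $p$, $r(a)$, and $r(b)$ all in $S(T')$, so that $T' \subseteq S(T')$ by convexity of $S(T')$; the rationalization then yields $r(\lambda_0) \succsim p$, contradicting $p \succ r(\lambda_0)$. Convexity of $U(p)$ and $L(p)$ should then follow from the two ``$\cup I(p)$'' convexities together with openness from continuity.

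The step I expect to be the main obstacle is forcing \emph{both} $r(a)$ and $r(b)$ into $S(T')$. When $r(a) \sim r(b)$ the expansions $r(a) \in S([p, r(a)]) \cap S([r(a), r(b)]) \subseteq S(T')$ and its mirror work cleanly. In the asymmetric case $r(a) \succ r(b)$, expansion only places one of them, and I expect to need an extra step: re-running the intermediate-value construction on sub-segments $[r(a), r(\lambda)]$ for $\lambda$ approaching $b$, and combining UHC at both boundary points with the already-proven convexity of $L(q) \cup I(q)$ for each $q \in [r(a), r(b)]$, in order to exclude the asymmetric configuration. This is where the three axioms must be used in combination rather than in isolation.
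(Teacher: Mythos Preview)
Your setup, rationalization argument, and continuity proof align with the paper. The convexity argument, however, has two genuine gaps.

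First, the final reduction---deducing convexity of $L(p)$ and $U(p)$ from convexity of $L(p)\cup I(p)$ and $U(p)\cup I(p)$ together with openness---is false as a pointwise geometric statement. Take $U(p)=\emptyset$, $I(p)$ a closed disk around $p$, and $L(p)$ the complement of that disk in $\Delta$: both weak contour sets are convex, both strict contour sets are open, yet $L(p)$ is not convex. So even granting both weak convexities, your last step needs more than openness; it would have to exploit the global relation, not just the contour sets at a single $p$, and you have not indicated how.

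Second, your own diagnosis of the asymmetric case $r(a)\succ r(b)$ is correct, but the proposed fix does not close the gap. Expansion places $p$ and $r(a)$ in $S(T')$ but gives no handle on $r(b)$, and nothing in your sketch rules out $S(T')=[p,r(a)]$, which leaves $r(\lambda_0)$ outside. The paper does not attempt to prove $U(p)\cup I(p)$ first and descend; it goes the other way. It first establishes a monotonicity property along segments (``lower betweenness'': on $[p,q]$ with $p\succsim q$, preference strictly decreases from $p$ down to some $p\lambda^\ast q$ and is then constant), uses it to show that indifference sets are convex (itself a nontrivial contradiction argument in a triangle), and then proves convexity of $L(p)$ and of $U(p)$ \emph{directly} via case analyses inside $\conv(p,q,r)$; convexity of $U(p)\cup I(p)$ comes last, as a corollary of the strict version plus convexity of $I(p)$. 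The missing ingredient in your plan is precisely this betweenness lemma---it is what gives control over interior points of segments where expansion alone is silent.
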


Hence, rational and consistent choice is not only possible without making transitivity (or acyclicity) assumptions, but even Sen's fundamental equivalence between rationality and consistency can be maintained. It follows from \citeauthor{Rich66a}'s~\citeyearpar{Rich66a} theorem that any consistent choice function is rationalized by its revealed preference relation as introduced by \citet{Hout50a}.
This relation moreover satisfies continuity and convexity.

\section{Skew-Symmetric Bilinear Utility Functions}\label{sec:ssb}

Convexity of preferences implies that indifference sets are convex. 
The symmetry axiom introduced by \citet{Fish82c} prescribes that the indifference sets for every triple of outcomes are straight lines that are either parallel or intersect in one point, which may be outside of their convex hull. 
For all $p,q,r\in\Delta$ and $\lambda\in(0,1)$,\footnote{\citeauthor{Fish82c}'s original definition of symmetry requires $p$, $q$, and $r$ to be linearly ordered (cf. \Cref{app:ssb}) and is thus slightly weaker than symmetry as defined here. Since our notion of convexity is weaker than \citeauthor{Fish82c}'s~\citeyearpar{Fish82c} dominance axiom, this stronger formulation of symmetry is required for \Cref{prop:ssb}.\label{fn:symmetry}}

\begin{equation}
\begin{aligned}
	\text{if } q\sim\nicefrac{1}{2}\,p+\nicefrac{1}{2}\,r \text{ and } p \lambda r \sim \nicefrac{1}{2}\,p+\nicefrac{1}{2}\,q
	\text{ then } r \lambda p \sim \nicefrac{1}{2}\,r+\nicefrac{1}{2}\,q\text.
\end{aligned}
\tag{Symmetry}
\end{equation}
\citet{Fish84c} justifies this axiom by stating that ``the degree to which $p$ is preferred to $q$ is equal in absolute magnitude but opposite in sign to the degree to which $q$ is preferred $p$.'' He continues by writing that he is ``a bit uncertain as to whether this should be regarded more as a convention than a testable hypothesis -- much like the asymmetry axiom [\dots], which can almost be thought of as a definitional characteristic of strict preference.''
Without symmetry, continuity and convexity still allow for rather unintuitive preference relations. For example,
let $\mathcal U =\{a,b\}$ and consider the preference relation $\succ$ on $\Delta$ with $p\succ q$ if and only if $p(a) > \nicefrac23$ and $q(a)< \nicefrac13$, which does not even satisfy transitivity in a one-dimensional outcome space.\todo{FFX: The strict relation is transitive, though. We could write ``transitivity of the indifference relation''. FB: I would prefer to leave as it is for simplicity.}

Let $\mathcal{R}$ denote the set of all continuous, convex, and symmetric preference relations over $\Delta$.
Despite the richness of $\mathcal R$, preference relations therein admit a particularly nice representation. A preference relation can be represented by a skew-symmetric and bilinear (SSB) utility function $\phi\colon\Delta\times\Delta\rightarrow \mathbb R$ if, for all $p,q\in\Delta$,
\[p\succ q \text{ if and only if }\phi(p,q) > 0\text.\]
Skew-symmetric requires that $\phi(p,q) = - \phi(q,p)$ for all $p,q\in\Delta$ and bilinearity that $\phi$ is linear in both arguments.
SSB utility was introduced by \citet{Fish82c}, who also gave a complete characterization of preference relations representable by SSB functions.\footnote{Fishburn's characterization uses continuity and symmetry conditions that are weaker than ours while our convexity notion is weaker than his dominance axiom (see \Cref{app:ssb} for more details). Our axioms are arguably more intuitive and match the axioms used in \Cref{prop:rationalizable}.}
We prove the following alternative characterization in \Cref{app:ssb} by reducing it to Fishburn's characterization.

\begin{restatable}{proposition}{propssb}\label{prop:ssb}
	A preference relation $\succ$ can be represented by an SSB function if and only if it satisfies continuity, convexity, and symmetry.
\end{restatable}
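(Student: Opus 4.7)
\emph{Plan.} I would prove both directions by invoking Fishburn's representation theorem \citep{Fish82c} as a black box, so the task reduces to an axiomatic equivalence between the system $(C_1, C_2, C_3) = (\text{continuity, convexity, symmetry})$ used here and Fishburn's system $(F_1, F_2, F_3) = (\text{weak continuity, weak symmetry, dominance})$ discussed in \Cref{app:ssb}. The necessity direction is a routine computation using bilinearity; the sufficiency direction requires showing that our weaker convexity, \emph{together} with our stronger continuity and symmetry, still forces Fishburn's dominance.

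\emph{Necessity.} Suppose $\succ$ is represented by an SSB function $\phi$. For any $p \in \Delta$, the map $q \mapsto \phi(q,p)$ is linear and hence continuous on the simplex, so $U(p) = \{q : \phi(q,p) > 0\}$ and $L(p) = \{q : \phi(q,p) < 0\}$ are preimages of open half-lines, giving continuity. Bilinearity yields $\phi(q \lambda q', p) = \lambda \phi(q,p) + (1-\lambda)\phi(q',p)$, so the sign (and weak sign) of $\phi(\cdot, p)$ is preserved under convex combinations; this gives the convexity of $U(p), L(p)$ and of $U(p) \cup I(p), L(p) \cup I(p)$. For symmetry, substituting the hypotheses $\phi(q,\nicefrac12 p + \nicefrac12 r) = 0$ and $\phi(\lambda p + (1-\lambda)r, \nicefrac12 p + \nicefrac12 q) = 0$ into the bilinear form and applying skew-symmetry reduces to a linear identity in $\phi(p,q), \phi(q,r), \phi(p,r)$ that can be rearranged to yield $\phi(\lambda r + (1-\lambda)p, \nicefrac12 r + \nicefrac12 q) = 0$.

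\emph{Sufficiency.} The plan is to verify Fishburn's axioms and then appeal to his theorem. Since $F_1$ and $F_2$ are (by the footnote and \Cref{app:ssb}) weaker than our $C_1$ and $C_3$, they hold immediately. The substantive step is to derive Fishburn's dominance axiom $F_3$. Our convexity axiom $C_2$ controls only mixtures of outcomes lying in the \emph{same} contour set, whereas dominance constrains mixtures that cross contour sets (typically, mixing an outcome strictly preferred to $p$ with one indifferent or weakly preferred to $p$). To bridge this gap I would argue as follows: given an outcome $q$ with $q \succ p$ and $r$ with $r \succsim p$, suppose for contradiction that $q \lambda r$ lies in $L(p) \cup I(p)$ for some $\lambda \in (0,1)$. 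By continuity, the set of such $\lambda$ has a boundary point $\lambda^*$ at which $q \lambda^* r \in I(p)$. Apply symmetry to the triple $(p, q, r)$: it forces the indifference set $I(p)$ restricted to $\conv(p,q,r)$ to be a straight line, and convexity of $I(p) \cup U(p)$ and of $I(p) \cup L(p)$ pins down this line's position. The resulting geometric configuration is incompatible with $q \succ p$, $r \succsim p$, and $q\lambda^* r \sim p$ simultaneously, yielding the contradiction. The remaining cases of dominance are handled analogously by permuting strict/indifferent roles.

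\emph{Main obstacle.} The hard part is the geometric case analysis used to derive $F_3$: since our convexity is strictly weaker than Fishburn's dominance, one must actively exploit symmetry's ``parallel or concurrent indifference lines'' structure, together with continuity to close the gap at boundary points. Everything else—verifying the necessity direction and observing that $C_1, C_3$ imply $F_1, F_2$—is routine.
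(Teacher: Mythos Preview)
Your approach matches the paper's: reduce to Fishburn's theorem by deriving dominance from continuity, convexity, and symmetry, with Archimedean continuity coming from \Cref{lem:archcont}. The necessity direction and the observation that $C_1, C_3$ dominate $F_1, F_2$ are handled the same way.

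However, your sketch of the dominance derivation glosses over the actual technical content and mischaracterizes the case structure. Two intermediate lemmas are needed before the main argument: (i) \emph{betweenness} ($p \succ q$ implies $p \succ p\lambda q \succ q$ for all $\lambda \in (0,1)$), whose proof already requires a nontrivial application of the symmetry axiom, and (ii) an ``all-indifferent triangle'' lemma stating that if three affinely spanning points of $\conv(p,q,r)$ are pairwise indifferent, then so are $p,q,r$ themselves. More importantly, the cases are not ``handled analogously by permuting strict/indifferent roles.'' The paper fixes $p \succ q$, $p \sim r$ and splits on the relation between $q$ and $r$; the case $r \succ q$ is substantially harder than $q \sim r$ and $q \succ r$, requiring a further sub-split and repeated use of both auxiliary lemmas. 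Finally, your heuristic that ``symmetry forces $I(p)$ restricted to $\conv(p,q,r)$ to be a straight line'' is the geometric picture one is working \emph{toward}; the formal symmetry axiom only constrains specific midpoint indifference configurations, so it cannot be invoked in that form at the outset.
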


SSB functions are unique up to scalar multiplications. 
We therefore write $\phi \equiv \hat\phi$ if and only if there is some $\alpha>0$ such that $\phi = \alpha \cdot \hat\phi$, i.e., if $\phi$ and $\hat\phi$ represent the same preferences.
We will also write ${\succ}\equiv \phi$ if ${\succ}$ is represented by the SSB function $\phi$.
Every preference relation ${\succ}\in \mathcal R$ other than complete indifference can be associated with a unique normalized SSB function on $\Delta\times\Delta$ whose largest positive value is equal to $1$. 
Let $\Phi$ denote the set of all SSB functions that are normalized in this way.\footnote{For vNM utility functions, this normalization boils down to the normalization of relative utilitarianism \citep{DhMe99a,BoCh15a}.}
Since all outcomes have finite support, $\phi(p,q)$ can be written as a convex combination of the values of $\phi$ for pure outcomes \citep{Fish84c}. For this purpose, we identify every alternative $a\in \mathcal U$ with the pure outcome that assigns probability $1$ to $a$. Then, for all $p,q\in\Delta$,
\[
	\phi(p,q) = \sum_{a,b\in \mathcal U} p(a)q(b)\phi(a,b)\text.
\]
We will often represent SSB functions restricted to $\Delta_X$ for $X\subseteq \mathcal U$ as skew-symmetric matrices in $\mathbb R^{X\times X}$.

When requiring transitivity on top of continuity, convexity, and symmetry, the four axioms characterize preference relations that can be represented by \emph{weighted linear (WL)} utility functions as introduced by \citet{Chew83a}.\footnote{A WL function is characterized by a linear utility function and a linear and positive weight function. An outcome $p$ is preferred to another outcome $q$ if the expected utility of $p$ divided by its weight is larger than the same quantity for $q$.
Thus, WL functions are more general than linear utility functions, as every linear utility function is equivalent to a WL function with constant weight function. See also \citet{Fish83a}.} 
We will denote this set of preference relations by $\mathcal R^{\wl}\subset \mathcal R$. 
When additionally requiring independence, then $\phi$ is separable, i.e., $\phi(p,q) = u(p) - u(q)$, where $u$ is a linear von Neumann-Morgenstern utility function representing $\succ$. The corresponding set of preference relations will be denoted by $\mathcal R^{\vnm}\subset \mathcal R^\wl$.
For independently distributed outcomes (as considered in this paper), SSB utility theory coincides with regret theory as introduced by \citet{LoSu82a} \citep[see also][]{LoSu87a,Blav06a}.

Through the representation of ${\succ}\in\mathcal R$ as a skew-symmetric matrix, it becomes apparent that the Minimax Theorem implies the existence of maximal elements of $\succ$ on $\Delta_X$. This was noted by \citet[][Theorem 4]{Fish84c} and already follows from \Cref{thm:sonnenschein}.

\section{Social Welfare Functions}\label{sec:swf}

In the remainder of this paper we deal with the problem of aggregating the preferences of multiple agents into a collective preference relation. The set of agents is $N = \{1,\dots,n\}$ for some $n\ge 2$. The preference relations of agents belong to some \emph{domain} $\mathcal D\subseteq \mathcal R$. A function $R\in\mathcal D^N$ from the set of agents to the domain is a preference profile. We will write preference profiles as tuples $(\succ_1,\dots,\succ_n)$ with indices in $N$. 
Given a preference profile $R$, let $N_{pq} = \{i\in N\colon p\succ_i q\}$ be the set of agents who strictly prefer $p$ over $q$. Also, let $I_{pq} = N\setminus(N_{pq}\cup N_{qp})$ be the set of agents who are indifferent between $p$ and $q$.
A \emph{social welfare function (SWF)} $f\colon \mathcal D^N\rightarrow \mathcal R$ maps a preference profile to a collective preference relation. When considering SWFs that may only map to collective preference relations in $\mathcal R^{\vnm}$, we will refer to SWFs with range $\mathcal R^{\vnm}$.

\citet{Arro51a} initiated the study of SWFs that satisfy two desirable properties: Pareto optimality and IIA. Pareto optimality prescribes that a unanimous preference of one outcome over another in the individual preferences should be reflected in the collective preference. Formally, an SWF $f$ satisfies \emph{Pareto optimality} if, for all $p,q\in\Delta$, $R\in\mathcal D^N$, and $f(R) = {\succ}$,
\begin{align*}
	\begin{aligned}
		&p\succsim_i q \text{ for all } i\in N \text{ implies } p \succsim q\text{, and}\\
		&\text{if additionally } p\succ_i q \text{ for some } i\in N \text{ then } p\succ q\text.
	\end{aligned}
	\tag{Pareto Optimality}
\end{align*}
The indifference part of Pareto optimality, which merely requires that $p \sim_i q$ for all $i\in N$ implies $p \sim q$, is usually referred to as \emph{Pareto indifference}.

Independence of irrelevant alternatives demands that collective preferences over some feasible set of outcomes should only depend on the individual preferences over this set (and not on the preferences over outcomes outside this set). 
In our framework, we will assume that feasible sets are based on the availability of alternatives and are therefore of the form $\Delta_X$ for $X\subseteq \mathcal U$.\footnote{Strengthening IIA by allowing \emph{all} convex subsets of $\Delta$ to be feasible completely ignores the structure of $\Delta$ as the convex hull of $\mathcal{U}$ and is too demanding when paired with our other axioms. Since pairwise utilitarianism violates this strong notion of IIA, \Cref{thm:swf} would turn into an impossibility. Our notion of IIA where only faces of $\Delta$ are feasible was also used in the impossibility theorem by \citet{KaSc77b} and, in an even weaker form, in the characterization of relative utilitarianism by \citet{DhMe99a}; see  \Cref{fn:ira}.}
Formally, we say that an SWF $f$ satisfies \emph{independence of irrelevant alternatives (IIA)} if, for all $R,\hat R\in\mathcal D^N$ and $X\subseteq \mathcal U$,
\[
	R|_{\Delta_X} = \hat R|_{\Delta_X} \text{ implies } f(R)|_{\Delta_X} = f(\hat R)|_{\Delta_X}\text.\tag{IIA}
\]

Any SWF that satisfies Pareto optimality and IIA will be called an \emph{Arrovian} SWF. Arrow has shown that, when no structure---such as convexity---is imposed on preference relations and feasible sets, every Arrovian SWF is dictatorial, i.e., there is $i\in N$ such that for all $p,q\in \Delta$, $R\in \mathcal{D}^N$, and $f(R)={\succ}$, $p \succ_i q$ implies $p \succ q$.
Dictatorships are examples of SWFs that are extremely biased towards one agent. In many applications, \emph{any} differentiation between agents is unacceptable and all agents should be treated equally. This property is known as anonymity. We denote by $\Pi_N$ the set of all permutations on $N$. For $\pi\in\Pi_N$ and a preference profile $R\in\mathcal D^N$, $R^\pi = R \circ \pi$ is the preference profile where agents are renamed according to $\pi$. Then, an SWF $f$ satisfies \emph{anonymity} if, for all $R\in\mathcal D^N$ and $\pi\in\Pi_N$,
\[
	f(R) = f(R^\pi)\text.\tag{Anonymity}
\]  
Anonymity is obviously a stronger requirement than non-dictatorship. 

Two straightforward aggregation rules that satisfy Pareto optimality, IIA, and anonymity are \emph{majority rule} ($p \succ q$ if and only if $|N_{pq}|>|N_{qp}|$) and \emph{Pareto rule} ($p \succ q$ if and only if $|N_{pq}|>0$ and $N_{qp}=\emptyset$). However, both rules do not constitute well-defined SWFs because they do not map to $\mathcal R$. While majority rule may not even produce maximal elements \citep{Zeck69a}, Pareto rule violates continuity and convexity. 

A natural subclass of SWFs can be defined by computing the weighted sum of the normalized individual utility representations. An SWF is called \emph{affine utilitarian} if and only if there are weights $w_1, \dots, w_n\in\mathbb{R}$ such that for all $R\in\mathcal{D}^N$ and $(\phi_i)_{i\in N}\in \Phi^N$ with $(\phi_i)_{i\in N}\equiv R$,
	\[f(R) \equiv \sum_{i\in N} w_i \phi_i \tag{Affine Utilitarianism}\text.\]
Affine utilitarian SWFs satisfy Pareto indifference. 
This still allows for constant SWFs (by setting all weights to $0$) or dictatorial SWFs (by setting all weights but one to $0$).
When requiring that all weights are positive, these SWFs are ruled out and all resulting SWFs satisfy Pareto optimality.
The unique anonymous and affine utilitarian SWF with positive weights is defined by setting all weights to $1$ and will simply be referred to as the \emph{utilitarian SWF}.
On the domain of vNM preferences, the utilitarian SWF coincides with \emph{relative utilitarianism} as introduced by \citet{DhMe99a}.
The utilitarian SWF violates IIA on domain $\mathcal R^\vnm$.
For example, for two agents and three alternatives, consider a preference profile where the first agent assigns normalized utilities $1$, $0$, and $0$ to the alternatives and the second agent assigns $\nicefrac{1}{3}$, $1$, and $0$.
Then, when deriving the collective preferences by employing utilitarianism, the first alternative is preferred to the second alternative. 
If instead the utility of the first agent for the second alternative was $\nicefrac{1}{2}$, the latter would be collectively preferred to the first alternative, even though the individual preferences over those two alternatives have not changed.
Consequently, the utilitarian SWF also violates IIA on the full domain $\mathcal R$.\footnote{\citeauthor{DhMe99a} introduced a weakening of IIA called \emph{independence of redundant alternatives}, which only considers feasible sets for which every infeasible outcome is \emph{unanimously} indifferent to some feasible outcome, and show that this condition is satisfied by the utilitarian SWF on domain $\mathcal R^\vnm$. However, it can be shown that this is no longer the case when considering the full domain $\mathcal R$.\label{fn:ira}}

\section{Characterization of the Domain}
\label{sec:domain}

When restricting attention to SWFs with range $\mathcal R^\vnm$, it is known that anonymous Arrovian aggregation on the full domain $\mathcal R$ is impossible because it is already impossible in the subdomain of vNM preferences (see \Cref{app:vnm}). 
On the other hand, interesting possibilities emerge in restricted domains such as in that of dichotomous vNM preferences $\mathcal R^\dich$ where each agent can only assign two different vNM utility values, say, $0$ and $1$ \citep{Inad69a}.
In \Cref{thm:dichdomain} (\Cref{app:dich}), we show that anonymous Arrovian aggregation of vNM preferences is \emph{only} possible in subdomains of $\mathcal R^\dich$.
In such domains, every affine utilitarian SWF satisfies IIA since for every $X\subseteq\mathcal{U}$, the individual preferences over all outcomes in $\Delta_X$ only depend on which alternatives in $X$ receive utility $1$. When all individual weights are positive, these SWFs furthermore satisfy Pareto optimality and thus constitute a natural class of Arrovian SWFs.
The utilitarian SWF corresponds to \emph{approval voting} and ranks pure outcomes by the number of approvals they receive from the agents \citep[see, e.g.,][]{BrFi07c}. This ranking is identical to majority rule, which happens to be transitive for dichotomous preferences, and is extended to all outcomes by comparing expected utilities. 
In \Cref{thm:dichswf} (\Cref{app:dich}), we prove that this SWF is in fact the \emph{only} anonymous Arrovian SWF with range $\mathcal R^\vnm$ when $\mathcal D\subseteq \mathcal R^\dich$ and $|\mathcal U|\ge 4$.\footnote{\label{fn:MaMo}\citet{MaMo15a} show a similar statement in the classic non-convex social choice setting. Since they only consider pure outcomes, their notions of Pareto optimality and IIA are weaker than ours. However, the consequence of their statement is also weaker because it only implies that pure outcomes are ranked according to their approval scores.}

 The goal of this section is to characterize the unique inclusion-maximal domain $\mathcal D\subseteq \mathcal R$ for which anonymous Arrovian SWFs exist.
To this end, we need to assume that $\mathcal D$ satisfies four richness conditions.
First, we require that it is neutral in the sense that it is not biased towards certain alternatives. 
For $\pi\in\Pi_\mathcal U$ and $p\in\Delta$, let $p^\pi\in\Delta$ such that $p^\pi(\pi(a)) = p(a)$ for all $a\in \mathcal U$.
Then, for ${\succ}\in\mathcal{R}$, we define ${\succ}^{\pi}$ such that $p^\pi\succ^{\pi} q^\pi$ if and only if $p\succ q$ for all $p,q\in\Delta$. 
It is assumed that 
\[
{\succ}\in\mathcal{D} \text{ if and only if }{\succ}^{\pi}\in\mathcal D \text{ for all }\pi\in\Pi_\mathcal U \text{ and } {\succ}\in\mathcal D\text. 
\tag{R1}
\label{eq:richneutral}
\]
Second, we require that it is possible to be completely indifferent, i.e.,
\[
	\emptyset\in\mathcal D\text.
	\tag{R2}
	\label{eq:richindiff}
\]
Third, it should also be possible for agents to declare completely opposed preferences. For ${\succ}\in \mathcal D$, ${\succ}^{-1}$ is the inverse of $\succ$, i.e., $p\succ^{-1} q$ if and only if $q\succ p$ for all $p,q\in\Delta$. Then,
\[
{\succ}\in\mathcal D \text{ implies }{\succ}^{-1}\in\mathcal D \text{ for all }{\succ}\in\mathcal R\text.
\tag{R3}
\label{eq:richinverse}
\]
Note that this condition is not implied by the previous neutrality condition because it allows the inversion of preferences over all outcomes, not only over pure outcomes.
Finally, we demand that for every preference relation in $\mathcal D$ and every set of up to four pure outcomes, there is a relation in $\mathcal D$ with the same preferences over pure outcomes such that these outcomes are all preferred to a fifth pure outcome.
\begin{align*}
	\begin{split}
		&\text{for all } {\hat\pref}\in\mathcal D\text{ and }X\subseteq \mathcal U\text{, }|X|\le 4 \text{, there is } {\pref}\in\mathcal D \text{ and }a\in\mathcal U\text{ such that }
		{\pref}|_X = {\hat\pref}|_X \text{ and}\\ &x\pref a\text{ for all } x\in X\text.
	\end{split}
	\tag{R4}
	\label{eq:richbottom}
\end{align*}
Any domain $\mathcal{D}\subseteq \mathcal{R}$ that satisfies~\ref{eq:richneutral},~\ref{eq:richindiff},~\ref{eq:richinverse}, and~\ref{eq:richbottom} is called \emph{rich}.
Note that any rich domain allows for arbitrary transitive preferences over up to five pure outcomes.

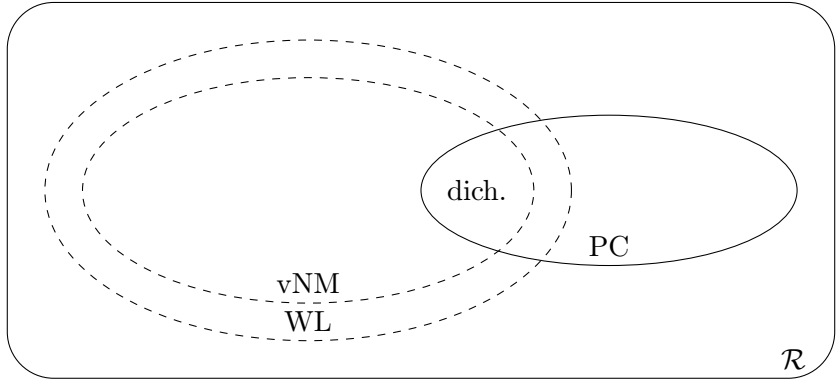
\begin{figure}[tb]
	\centering
	\begin{tikzpicture}[scale=0.5]
		\node[draw,rounded corners=0.618cm,minimum width=11cm,minimum height=5cm,label={[anchor=south east, above left, inner xsep=1em]south east:$\mathcal{R}$}] at (0,0) {};
		\node[draw,dashed,ellipse,minimum width=6cm,minimum height=3cm,label={[anchor=south,above]270:vNM}] at (-3,0) {};
		\node[draw,dashed,ellipse,minimum width=7cm,minimum height=4cm,label={[anchor=south,above]270:WL}] at (-3,0) {};		
		\node[draw,ellipse,minimum width=5cm,minimum height=2cm,label={[anchor=south,above]270:PC}] at (5,0) {};		
		\node (dich) at (1.5,0) {dich.};		
	\end{tikzpicture}
	\caption{Venn diagram showing the inclusion relationships between preference domains. The intersection of $\mathcal{R}^\vnm$ and $\mathcal{R}^\pc$ (pairwise comparison) contains exactly $\mathcal{R}^\dich$. The intersection of $\mathcal{R}^\wl$ (weighted linear utility) and $\mathcal{R}^\pc$ contains exactly the domain of PC preferences based on trichotomous weak orders (see \Cref{fig:pctrans} for an example). An example of a preference relation in $\mathcal{R}^\pc \setminus \mathcal{R}^\wl$ is given in \Cref{fig:pccycle}.
\Cref{thm:pcdomain} shows that $\mathcal{R}^\pc$ is the unique inclusion-maximal rich domain for which anonymous Arrovian aggregation is possible within $\mathcal{R}$. This, for example, implies impossibilities for $\mathcal{R}^\wl$ and $\mathcal{R}^\vnm$. 
	}
	\label{fig:venn}
\end{figure}
 
A rich domain that will turn out to be important for our characterization is defined as follows.
We say that $\phi\in\Phi$ is based on \emph{pairwise comparisons} if $\phi(a,b)\in\{-1,0,1\}$ for all $a,b\in \mathcal U$ and denote the set of SSB functions that are based on pairwise comparisons by $\Phi^{\pc}\subset \Phi$ and the corresponding set of preference relations by $\mathcal{R}^\pc=\{ {\succ} \in \mathcal{R} \colon {\succ} \equiv \phi \text{ for some } \phi\in\Phi^{\pc}\}$. $\mathcal{R}^\pc$ contains $\mathcal{R}^\dich$ (see \Cref{fig:venn}).
Other rich domains include $\mathcal R$, $\mathcal R ^\wl$, $\mathcal R ^\vnm$, and the subset of $\mathcal R^\pc$ in which all preferences over pure outcomes are transitive.

Anonymous Arrovian aggregation is only possible on rich subdomains of $\mathcal{R}^\pc$.

\begin{restatable}{theorem}{thmpcdomain}\label{thm:pcdomain}
	Let $f$ be an anonymous Arrovian SWF on some rich domain $\mathcal{D}$ with $|\mathcal U|\ge 4$.
	Then, $\mathcal{D}\subseteq \mathcal{R}^{\pc}$.
\end{restatable}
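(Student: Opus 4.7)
The plan is to prove the contrapositive: if $\mathcal D$ contains some $\succ^* \in \mathcal R \setminus \mathcal R^\pc$, then no anonymous Arrovian SWF on $\mathcal D$ can exist. Because $\phi^* \in \Phi$ is normalized so that its largest positive entry is $1$, $\phi^* \notin \Phi^\pc$ means that there exist pure outcomes $a, b$ with $|\phi^*(a,b)| = \alpha \in (0,1)$, coexisting with some pair carrying entry exactly $1$.

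First, I would pin down the collective preference on every pair. By IIA, the collective preference on $\Delta_{\{x,y\}}$ depends only on the profile restricted to that segment, which by continuity and convexity reduces to a ternary vote of each agent on the ordered pair $(x,y)$. Anonymity collapses the rule to a function of vote counts, Pareto optimality makes it agree with strict unanimities, and the neutrality consequence of richness condition~\ref{eq:richneutral} makes it balanced in $x$ and $y$. A May-type argument then forces simple majority: the sign of the collective SSB entry $\phi_c(x,y)$ matches the sign of the majority margin, with ties producing complete indifference on $\Delta_{\{x,y\}}$.

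Second, I would lift this to larger subsets $X \subseteq \mathcal U$. Individual preferences on $\Delta_X$ are encoded by the submatrices $\phi_i|_{X\times X}$ \emph{only up to positive scaling}, so IIA demands that the collective preference on $\Delta_X$ be invariant under independent positive rescalings of each individual submatrix --- a constraint which, unlike the purely sign-based constraint on pairs, is genuinely sensitive to the ratios among the entries of $\phi_i$. Using~\ref{eq:richindiff} and~\ref{eq:richinverse}, I can pair each $\succ_i$ with its inverse $\succ_i^{-1} \in \mathcal D$; Pareto indifference then allows me to form ``cancellation'' profiles that isolate the influence of specific agents and thereby translate the above invariance into a concrete constraint on the relationship between individual and collective SSB submatrices on $X$.

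Third, I would build the contradiction using the fourth alternative. By~\ref{eq:richbottom} applied to $\succ^*$ and any subset $\{a,b,c,d\}$ of four alternatives (available since $|\mathcal U| \ge 4$), I obtain a companion $\succ' \in \mathcal D$ with $\succ'|_{\{a,b,c,d\}} = \succ^*|_{\{a,b,c,d\}}$ placing a fifth pure outcome strictly below. Rotating with~\ref{eq:richneutral} and inverting with~\ref{eq:richinverse}, I construct two profiles $R$ and $R'$ in $\mathcal D^N$ that agree on the restriction to every pair in $\{a,b,c,d\}$ (hence, by the first step, induce identical $\phi_c$ signs there), but differ on the restriction to some triangle $\Delta_{\{a,b,c\}}$ in a way that cannot be undone by any simultaneous positive rescaling of submatrices. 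The intermediate value $\alpha$ injected by $\succ^*$ cannot be matched by submatrices assembled only from the companion copies, so the rescaling invariance of step two is violated on $\Delta_{\{a,b,c\}}$: two mixed outcomes there are ranked differently in $f(R)$ and $f(R')$ while the triangle-restricted profiles agree, contradicting IIA. The principal obstacle will be the explicit design of this cancellation: one must juggle the rescaling freedom on triangles against the fixed pairwise majorities while staying inside $\mathcal D$, and the hypothesis $|\mathcal U| \ge 4$ appears to be essential precisely because~\ref{eq:richbottom}'s ``bottom'' alternative consumes one pure outcome, leaving exactly a genuine triangle on which to detect the obstruction caused by $\alpha \notin \{-1,0,1\}$.
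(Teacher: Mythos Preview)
Your proposal has a genuine gap in step~3, and the overall architecture leaves the key mechanism unspecified.

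Step~3 is internally inconsistent. You first say that $R$ and $R'$ ``differ on the restriction to some triangle $\Delta_{\{a,b,c\}}$'', but then conclude with ``the triangle-restricted profiles agree, contradicting IIA''. If the restrictions to the triangle differ, IIA imposes no constraint there; if they agree, there is nothing for the intermediate value $\alpha$ to obstruct. The phrase ``rescaling invariance of step two is violated'' does not name an axiom or a previously established constraint that could fail: that the collective preference on $\Delta_X$ depends only on the equivalence classes $[\phi_i|_X]$ is definitional for an SWF acting on preference relations. As written, no contradiction is produced.

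More fundamentally, the paper does not argue by contradiction at the profile level. Its engine is Lemma~\ref{lem:uniquerelation}: if $\succ_0,\hat\succ_0\in\mathcal D$ agree on the pure outcomes in $A$, then they agree on all of $\Delta_A$. This is proved by forming the profile $(\succ_0,\hat\succ_0^{-1},\emptyset,\dots)$, using anonymity together with a field-expansion lemma (Lemma~\ref{lem:semidecisive}) to force collective indifference on every pair in $A$ and hence, by convexity, on all of $\Delta_A$; the strict part of Pareto optimality then rules out any $p,q\in\Delta_A$ with $p\succ_0 q$ but $q\mathrel{\hat\succsim_0} p$. With this in hand, the paper constrains each individual $\phi_0$ directly (Lemma~\ref{lem:pcdomain}): for $a\succ_0 b\succ_0 c$, the permutation $(ac)$ sends $\succ_0|_{\{a,b,c\}}$ to $\succ_0^{-1}|_{\{a,b,c\}}$, so Lemma~\ref{lem:uniquerelation} forces $\succ_0|_{\Delta_{\{a,b,c\}}}$ to be invariant under this swap, whence convexity of $I_0(b)$ gives $b\sim_0 \tfrac12 a+\tfrac12 c$, i.e.\ $\phi_0(a,b)=\phi_0(b,c)$. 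The remaining cases are handled similarly, with the extra alternative from~\ref{eq:richbottom} used to chain equalities. Your outline never isolates this ``same on pure $\Rightarrow$ same on mixed'' step, and without it the cancellation profiles of your step~2 do not yield a usable constraint on the \emph{individual} $\phi_i$.

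A smaller point: your step~1 invokes a May-type argument, but May's theorem requires positive responsiveness, which is neither assumed nor obviously derivable here. The paper avoids this entirely, working through decisiveness (Lemma~\ref{lem:semidecisive}) rather than establishing that the pairwise collective rule is majority.
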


The proof of \Cref{thm:pcdomain} is given in \Cref{app:domain}. 

\pc preferences are quite natural and can be seen as the canonical SSB representation consistent with a given ordinal preference relation over alternatives. For a preference relation ${\succ}\in \mathcal R^\pc$ and two outcomes $p,q\in\Delta$ we have that 
\[
p \succ q \quad\text{if and only if}\quad
\sum\limits_{a,b\colon a \succ b} p(a)\cdot q(b) > \sum\limits_{a,b\colon a \succ b} q(a)\cdot p(b)\text.
\]
If $p$ and $q$ are interpreted as independent lotteries, $p$ is preferred to $q$ if and only if $p$ is more likely to return a more preferred alternative than $q$. Alternatively, the terms in the inequality above can be associated with the probability of \emph{ex ante} regret. Then, $p$ is preferred to $q$ if its choice results in less \emph{ex ante} regret.
Since \pc preferences are completely determined by preferences over pure outcomes and transitive preferences over pure outcomes can be conveniently represented by weak rankings, we will compactly represent \pc preferences over some set of alternatives by putting the weak ranking of these alternatives in brackets (see Figures \ref{fig:pctrans} and \ref{fig:pccycle} for examples).

\begin{figure}[tb]
	\centering
	$
	\begin{bmatrix}
		a\\b\\c
	\end{bmatrix}\equiv
			\phi =
			\begin{blockarray}{cccc}
				a & b & c\\
				\begin{block}{(ccc)c}
		  		  	0 & 1 & 1 & a \\
		  			-1 & 0 & 1 & b \\
		  	  	  	-1 & -1 & 0 & c \\
				\end{block}
			\end{blockarray}
	$\hspace{14ex}
		\begin{tikzpicture}[baseline=(g.base),every node/.style={inner sep=0,outer sep=0}]
			\def\r{15ex}
			\def\dist{3ex}
			\def\arrow{1.5ex}
			\tikzstyle{myline}=[line width = .08ex]
		
			\node[label={[label distance=\dist]-150:$a$}] (a) at (-150:\r) {};
			\node[label={[label distance=\dist]90:$b$}] (b) at (90:\r) {};
			\node[label={[label distance=\dist]-30:$c$}] (c) at (-30:\r) {};
		
			\coordinate (f) at ($(-90:2*\r)$) {};
		
			\node (g) {};
		
			\draw (a.center) -- (b.center) -- (c.center) -- (a.center);
			
			\coordinate (q0) at (intersection of a--c and b--f);
			\draw[myline] (b) -- (q0);

			\coordinate (o1) at ($(f) + (110:4*\r)$);
			\coordinate (p1) at (intersection of f--o1 and a--b);
			\coordinate (q1) at (intersection of a--c and p1--f);
			\draw[myline] (p1) -- (q1);

			\coordinate (o2) at ($(f) + (70:4*\r)$);
			\coordinate (p2) at (intersection of f--o2 and c--b);
			\coordinate (q2) at (intersection of a--c and p2--f);
			\draw[myline] (p2) -- (q2);

			\coordinate (o3) at ($(f) + (100:4*\r)$);
			\coordinate (p3) at (intersection of f--o3 and a--b);
			\coordinate (q3) at (intersection of a--c and p3--f);
			\draw[myline] (p3) -- (q3);

			\coordinate (o4) at ($(f) + (80:4*\r)$);
			\coordinate (p4) at (intersection of f--o4 and c--b);
			\coordinate (q4) at (intersection of a--c and p4--f);
			\draw[myline] (p4) -- (q4);

			\foreach \x in {110,100,90,80,70}{
			
				\draw[myline,-latex] ($(f) + (\x:1.85*\r)$) -- ($(f) + (\x:1.85*\r) + (\x-90:\arrow)$);
			
			}
		\end{tikzpicture}
	\caption{Illustration of preferences based on pairwise comparisons for three alternatives when preferences over pure outcomes are given by the transitive relation $a\succ b\succ c$. The left-hand side shows the preference relation and the SSB function and the right-hand side the Marschak-Machina probability triangle. The arrows represent  normal vectors to the indifference curves (pointing towards the lower contour set). Each indifference curve separates the corresponding upper and lower contour set.}
	\label{fig:pctrans}
\end{figure}
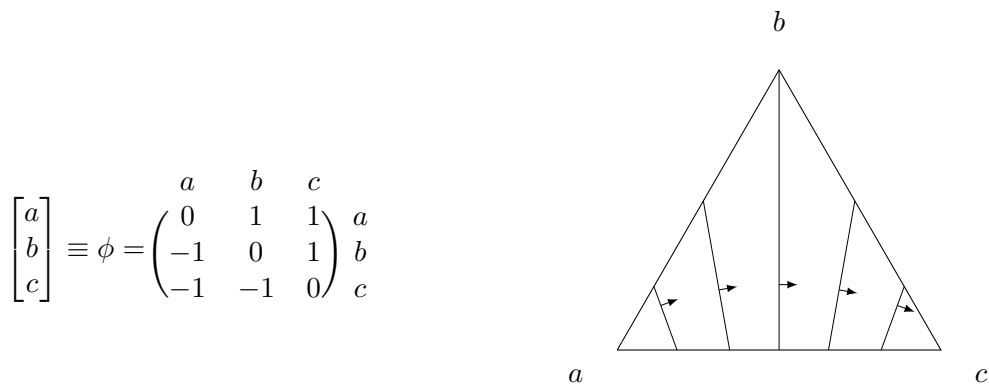

\pc preferences have previously been considered in decision theory \citep{Blyt72a,Pack82a,Blav06a}. \citet{Pack82a} calls them the \emph{rule of expected dominance} and \citet{Blav06a} refers to them as a \emph{preference for the most probable winner}. \citet{ABB14b,ABBB15a} and \citet{BBH15c} have studied Pareto efficiency, strategyproofness, and related properties with respect to these preferences. \citet{Blav06a} gives an axiomatic characterization of \pc preferences using the SSB axioms (continuity, convexity, and symmetry), and an additional axiom called \emph{fanning-in}, which essentially prescribes that indifference curves are not parallel, but fanning in at a certain rate (see \Cref{fig:pctrans}).
As a corollary of \Cref{thm:pcdomain}, fanning-in is implied by Fishburn's SSB axioms and Arrow's axioms.
 \citeauthor{Blav06a} cites extensive experimental evidence for the fanning-in of indifference curves.

\Cref{fig:pctrans} illustrates \pc preferences for three transitively ordered pure outcomes.\footnote{For three alternatives, \pc preferences as depicted in \Cref{fig:pctrans} can be represented by a WL function with utility function $u(a) = u(b) = 1$ and $u(c) = 0$ and weight function $w(a) = 0$ and $w(b) = w(c) = 1$.}
When there are at least four alternatives, \pc preferences can be cyclic even when preferences over pure outcomes are transitive. This phenomenon, known as the \emph{Steinhaus-Trybula paradox}, is illustrated in \Cref{fig:pccycle} \citep[see, e.g.,][]{StTr59a,Blyt72a,Pack82a,RuSe12a,BuPo18a}. \citet{BuPo18a} have conducted an extensive experimental study of the Steinhaus-Trybula paradox and found significant evidence for \pc preferences. 

\begin{figure}[tb]
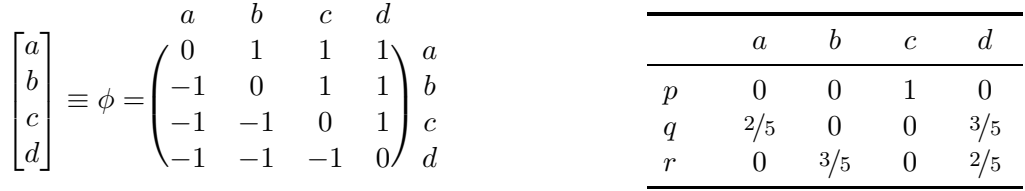

	\centering
	$
		\begin{bmatrix}
			a\\b\\c\\d
		\end{bmatrix}\equiv
		\phi =
		\begin{blockarray}{ccccc}
			a & b & c & d\\
			\begin{block}{(cccc)c}
	  		  	0 & 1 & 1 & 1& a \\
	  			-1 & 0 & 1 & 1& b \\
	  	  	  	-1 & -1 & 0 & 1& c \\
				-1 & -1 & -1 & 0 &d\\
			\end{block}
		\end{blockarray}
	$\hspace{14ex}
	\newcolumntype{C}{>{\centering\let\newline\\\arraybackslash\hspace{0pt}}m{\widthof{999}}} 
	\begin{tabular}{p{\widthof{999}} C C C C} 
	\toprule
	& $a$ & $b$ & $c$ & $d$\\
	\midrule
	$p$ & $0$ & $0$ & $1$ & $0$\\
	$q$ & $\nicefrac{2}{5}$ & $0$ & $0$ & $\nicefrac{3}{5}$ \\
	$r$ & 0 & $\nicefrac{3}{5}$ & $0$ & $\nicefrac{2}{5}$ \\
	\bottomrule
	\end{tabular}
	\caption{Illustration of preferences based on pairwise comparisons for four alternatives when preferences over pure outcomes are given by the transitive relation $a\succ b\succ c\succ d$. The left-hand side shows the preference relation and the SSB function. The preferences between the three outcomes $p$, $q$, and $r$, defined in the table on the right-hand side, are cyclic: $\phi(p,q)=\nicefrac{3}{5} - \nicefrac{2}{5}=\nicefrac{1}{5}>0$, $\phi(q,r)=\nicefrac{2}{5} - (\nicefrac{3}{5})^2 =\nicefrac{1}{25}>0$, and $\phi(r,p)=\nicefrac{3}{5} - \nicefrac{2}{5}=\nicefrac{1}{5}>0$. Hence, $p\succ q\succ r\succ p$.
}
	\label{fig:pccycle}
\end{figure}

\section{Characterization of the Social Welfare Function}

\Cref{thm:pcdomain} has established that anonymous Arrovian aggregation is only possible if individual preferences are based on pairwise comparisons, i.e., $\mathcal D\subseteq \mathcal R^\pc$. This raises the question \emph{which} SWFs (if any) are Arrovian on $\mathcal D$. 

It turns out that for any ${\succ}\in \mathcal D$, ${\pref}\equiv \phi\in\Phi$, and $X\subseteq \mathcal U$, ${\succ}|_{\Delta_X}$ uniquely determines $\phi|_X$ (not only up to a positive scalar).
Hence, affine utilitarian SWFs satisfy IIA.
Since any affine utilitarian SWF with positive weights furthermore satisfies Pareto optimality, these SWFs are Arrovian. 
Our next theorem shows that these are indeed the \emph{only} Arrovian SWFs. More precisely, we show that SWFs on domain $\mathcal D$ satisfy Pareto indifference and IIA if and only if they are affine utilitarian. 
Affine utilitarian SWFs may assign negative or null weights to agents. 
As mentioned in \Cref{sec:swf}, this, for example, allows for dictatorial SWFs where the collective preferences are identical to the preference relation of one pre-determined agent.
However, when assuming full Pareto optimality, the weights assigned to these SSB functions have to be positive, which rules out dictatorial SWFs.

\begin{restatable}{theorem}{thmswf}\label{thm:strongpareto}\label{thm:swf}
	Let $f$ be an Arrovian SWF on some rich domain $\mathcal{D}\subseteq \mathcal{R}^\pc$ with $|\mathcal U|\ge 5$.
	Then, $f$ is affine utilitarian with positive weights.
\end{restatable}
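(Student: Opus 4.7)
The plan is to first reformulate IIA as a statement about restrictions of SSB matrices, then use Pareto indifference together with the bilinearity of SSB functions to show that the collective SSB is always a linear combination of the individual SSBs, and finally leverage IIA together with richness to upgrade these per-profile weights to profile-independent positive weights.

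I would begin by translating the axioms into SSB language. Since $\mathcal D\subseteq\mathcal R^\pc$, each ${\succ}_i$ corresponds to a unique $\phi_i\in\Phi^\pc$ with entries in $\{-1,0,1\}$, and, as the paper observes just before the theorem, ${\succ}_i|_{\Delta_X}$ uniquely determines $\phi_i|_X$. Hence IIA becomes: for every $X\subseteq\mathcal U$, $\phi^R|_X$ (up to a positive scalar) depends only on $(\phi_i|_X)_{i\in N}$. For an in-profile linearity step, fix $X$ and write $\phi^R(p,q)=\langle\phi^R|_X,w(p,q)\rangle$ for $p,q\in\Delta_X$, where $w(p,q)$ is the vector of $2\times 2$ minors $p(a)q(b)-p(b)q(a)$ for $a\neq b$ in $X$. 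Pareto indifference says that whenever $\langle\phi_i|_X,w(p,q)\rangle=0$ for every $i$, we also have $\langle\phi^R|_X,w(p,q)\rangle=0$. Since $\{w(p,q):p,q\in\Delta_X\}$ has full span in the space of skew-symmetric matrices on $X$, this forces $\phi^R|_X\in\mathrm{span}\{\phi_i|_X\}_{i\in N}$; specializing to $X=\mathcal U$ yields $\phi^R=\sum_{i\in N}w_i^R\phi_i$ for some real weights that may a priori depend on the profile.

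To remove the profile-dependence of the weights, I would single out the ``basis'' profiles $R^{(i)}=(\emptyset,\dots,{\succ}_i,\dots,\emptyset)$ in which only agent $i$ has nontrivial preferences; these are in $\mathcal D^N$ by R2. On $R^{(i)}$, strict Pareto and Pareto indifference together force $\phi^{R^{(i)}}\equiv\phi_i$, thereby fixing a canonical positive contribution $w_i>0$ of agent $i$. I would then compare an arbitrary profile $R$ with the basis profiles via IIA applied on every $X\subseteq\mathcal U$ of size at most four (the maximum allowed by R4), constructing auxiliary profiles that agree with $R$ on $X$ but interpolate toward the basis profiles off of $X$. The pair-level instance of IIA, combined with neutrality R1 and inversion R3, shows that $\mathrm{sign}(\phi^R(a,b))$ depends only on $(\phi_i(a,b))_i$; the triple-level instance pins down not only signs but the ratios among the three pairwise values of $\phi^R$ within the triple. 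With $|\mathcal U|\ge 5$, any two triples can be linked through a chain of overlapping four-element subsets, and these linked IIA constraints force the coefficients in $\phi^R=\sum_i w_i^R\phi_i$ to agree with the canonical weights $w_i$ across all $R$; positivity of every $w_i$ then follows directly from strict Pareto on $R^{(i)}$.

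The main obstacle is exactly the promotion in the previous paragraph, i.e., passing from per-profile linear representations to one set of positive weights valid for every profile. As the paper notes, following \citet{FiGe87a} and \citet{TuWe99a}, aggregating SSB utility functions is fundamentally different from the vNM case in that Pareto indifference alone does not yield affine utilitarianism; IIA, applied across a sufficiently rich collection of subsets, has to do the essential work. The delicate part is showing that the IIA constraints produced by varying a single agent's preferences on successively larger subsets, and then intersecting the resulting linear constraints, are strong enough to determine the weights uniquely and globally. The hypothesis $|\mathcal U|\ge 5$ is precisely what provides enough room beyond the four-alternative sets guaranteed by R4 to chain subsets together and propagate weight consistency across the whole domain.
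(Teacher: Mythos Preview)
Your first step has a genuine gap that undermines the whole plan. You claim that since $\{w(p,q):p,q\in\Delta_X\}$ spans the space of skew-symmetric matrices, Pareto indifference forces $\phi^R|_X\in\mathrm{span}\{\phi_i|_X\}$. But the relevant set is not all $w(p,q)$; it is only those $w(p,q)$ with $\langle\phi_i,w(p,q)\rangle=0$ for every $i$. These are rank-at-most-two skew-symmetric matrices lying in $(\mathrm{span}\{\phi_i\})^\perp$, and there is no reason they should span that orthogonal complement. Your inference ``$\phi^R$ annihilates a subset of $V^\perp$, hence $\phi^R\in V$'' simply does not follow. This is not a technicality: it is exactly the obstruction identified by \citet{FiGe87a} and \citet{TuWe99a}, which you yourself cite in your final paragraph as showing that Pareto indifference alone does \emph{not} yield affine utilitarianism for SSB functions. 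Your step~1 directly contradicts that result, so the per-profile weights $w_i^R$ you want to promote in steps~2--3 need not exist in the first place.

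The paper's proof avoids this trap by using IIA from the outset rather than reserving it for a final ``weight-consistency'' step. The key device is a calibration alternative: richness condition~\ref{eq:richbottom} guarantees, for any four alternatives, a fifth alternative $e$ that every agent ranks strictly below them. Pareto indifference combined with \Cref{lem:indiff} then forces $\phi(a,e)=\phi(b,e)=\phi(c,e)=\phi(d,e)$, which lets one compare $\phi(a,b)$ and $\phi(c,d)$ within a single profile and, via IIA on three-element sets containing $e$, across profiles. This yields a function $g(N_{ab},I_{ab})=\phi(a,b)$ depending only on the pattern of individual preferences over $\{a,b\}$. Additivity of $g$ (hence affine utilitarianism) is then extracted from Pareto indifference applied to a few explicitly constructed mixed outcomes, not from an annihilator argument. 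In short, IIA is doing essential work already at the stage where you invoke only Pareto indifference; your proposed division of labor between the two axioms cannot succeed.
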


The proof of \Cref{thm:swf} is given in \Cref{app:swf}.

On subdomains of $\mathcal R^\pc$, affine utilitarianism with positive weights admits an intuitive probabilistic interpretation: in order to compare two lotteries $p$ and $q$, randomly sample a pure outcome $a$ from $p$, a pure outcome $b$ from $q$, and an agent $i$ with probabilities proportional to the agents' weights. Then, $p$ is collectively preferred to $q$ if and only if the probability that agent $i$ prefers $a$ to $b$ is greater than the probability that he prefers $b$ to $a$.

\Cref{thm:swf} can be seen as a multi-profile version of Harsanyi's Social Aggregation Theorem (see \Cref{sec:related}) for SSB preferences, where IIA allows us to connect weights across different profiles.
When furthermore assuming anonymity, the weights of all SSB functions have to be identical and we obtain the following complete characterization.

\begin{corollary}\label{cor:ml}
	Let $|\mathcal U|\ge 5$ and $\mathcal D$ be a rich domain.
	An anonymous SWF is Arrovian if and only if it is the utilitarian SWF and $\mathcal{D}\subseteq \mathcal{R}^\pc$.
\end{corollary}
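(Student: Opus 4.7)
The plan is to derive the corollary by combining \Cref{thm:pcdomain} and \Cref{thm:swf}, then promoting ``affine utilitarian with positive weights'' to the unweighted utilitarian SWF under the anonymity hypothesis. The only-if direction carries all the content; the converse is a routine verification, and the crux is the equal-weights step.

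For the only-if direction, suppose $f$ is an anonymous Arrovian SWF on a rich domain $\mathcal D$. Since $|\mathcal U|\ge 5\ge 4$, \Cref{thm:pcdomain} gives $\mathcal D\subseteq \mathcal R^\pc$, and then \Cref{thm:swf} yields positive weights $w_1,\dots,w_n$ with $f(R)\equiv \sum_{i\in N} w_i\phi_i$ for every profile $R$ and every $(\phi_i)_{i\in N}\in\Phi^N$ representing $R$. To show the weights coincide, fix any pair $i\ne j$. Using the richness axioms (in particular \ref{eq:richneutral}, \ref{eq:richindiff}, and \ref{eq:richbottom}, which together ensure $\mathcal D$ contains a rich supply of transitive orderings over any five pure outcomes), build a profile $R$ in which ${\succ}_i\equiv \phi\in\Phi^\pc$ and ${\succ}_j\equiv \psi\in\Phi^\pc$ for some linearly independent $\phi,\psi$ (e.g., two orderings on five pure outcomes differing by a single adjacent swap), and every other agent is completely indifferent. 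Writing $\pi=(i\;j)$, we obtain $f(R)\equiv w_i\phi+w_j\psi$ and $f(R^\pi)\equiv w_j\phi+w_i\psi$; anonymity together with the uniqueness of SSB representations up to a positive scalar forces $w_i\phi+w_j\psi=\alpha(w_j\phi+w_i\psi)$ for some $\alpha>0$. Linear independence of $\phi$ and $\psi$ then yields $w_i=\alpha w_j$ and $w_j=\alpha w_i$, hence $\alpha=1$ and $w_i=w_j$. All weights thus coincide and $f$ is the utilitarian SWF.

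For the converse, let $\mathcal D\subseteq \mathcal R^\pc$ be rich and let $f$ be the utilitarian SWF. Anonymity is immediate from the symmetry of $\sum_{i\in N}\phi_i$ in its indices. Pareto optimality follows from all weights being positive, as noted in \Cref{sec:swf}. For IIA, I use the key observation preceding \Cref{thm:swf}: for any ${\succ}\equiv\phi\in\Phi^\pc$ and any $X\subseteq \mathcal U$, the restriction ${\succ}|_{\Delta_X}$ determines $\phi|_X$ outright, not merely up to a positive scalar, because each entry $\phi(a,b)\in\{-1,0,1\}$ is pinned down by the ordinal comparison of the pure outcomes $a$ and $b$. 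Hence $R|_{\Delta_X}=\hat R|_{\Delta_X}$ implies $\phi_i|_X=\hat\phi_i|_X$ for every $i$, so $(\sum_i\phi_i)|_X=(\sum_i\hat\phi_i)|_X$, giving $f(R)|_{\Delta_X}=f(\hat R)|_{\Delta_X}$, which is IIA. The main obstacle, as flagged above, is exhibiting linearly independent $\phi,\psi\in\Phi^\pc$ inside $\mathcal D$; the richness axioms, combined with $|\mathcal U|\ge 5$, supply the required flexibility.
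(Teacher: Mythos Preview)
Your proof is correct and follows the same approach the paper implicitly takes: combine \Cref{thm:pcdomain} and \Cref{thm:swf}, then use anonymity to force all weights equal, and verify the converse directly via the observation that on $\mathcal R^\pc$ the restriction ${\succ}|_{\Delta_X}$ pins down $\phi|_X$ exactly. Your construction of a profile with two linearly independent $\phi,\psi\in\Phi^\pc$ (available in $\mathcal D$ via iterated \ref{eq:richbottom} and \ref{eq:richneutral}) is a clean way to extract $w_i=w_j$ from $f(R)=f(R^{(ij)})$, and is more explicit than what the paper writes down.
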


\begin{figure}[tb]
	\centering
	$
			\phi =
			\begin{blockarray}{cccc}
				a & b & c\\
				\begin{block}{(ccc)c}
		  		  	0 & 1 & -1 & a \\
		  			-1 & 0 & 1 & b \\
		  	  	  	1 & -1 & 0 & c \\
				\end{block}
			\end{blockarray}
	$\hspace{14ex}
	\begin{tikzpicture}[baseline=(g.base),every node/.style={inner sep=0,outer sep=0}]
		\def\r{15ex}
		\def\dist{3ex}
		\def\arrow{1.5ex}
		
		\tikzstyle{myline}=[line width = .08ex]
		
		\node[label={[label distance=\dist]-150:$a$}] (a) at (-150:\r) {};
		\node[label={[label distance=\dist]90:$b$}] (b) at (90:\r) {};
		\node[label={[label distance=\dist]-30:$c$}] (c) at (-30:\r) {};
		
		\node (f) {\tiny\textbullet};
		
		\draw (a.center) -- (b.center) -- (c.center) -- (a.center);

		\foreach \x in {10,50,90,130,170,210,250,290,330}{

		\ifthenelse{\x > -1}{
		\ifthenelse{\x < 30}{

			\coordinate (p) at ($(f) + (\x:2*\r)$);
			\coordinate (q) at (intersection of f--p and c--b);
			\coordinate (r) at (intersection of f--p and a--b);
			\draw[myline] (q) -- (r);

		}{}
		}{}
		
		\ifthenelse{\x > 29}{
		\ifthenelse{\x < 90}{

			\coordinate (p) at ($(f) + (\x:2*\r)$);
			\coordinate (q) at (intersection of f--p and c--b);
			\coordinate (r) at (intersection of f--p and a--c);
			\draw[myline] (q) -- (r);

		}{}
		}{}
		
		\ifthenelse{\x > 89}{
		\ifthenelse{\x < 170}{

			\coordinate (p) at ($(f) + (\x:2*\r)$);
			\coordinate (q) at (intersection of f--p and a--b);
			\coordinate (r) at (intersection of f--p and a--c);
			\draw[myline] (q) -- (r);

		}{}
		}{}
		
		\ifthenelse{\x > 149}{
		\ifthenelse{\x < 210}{

			\coordinate (p) at ($(f) + (\x:2*\r)$);
			\coordinate (q) at (intersection of f--p and a--b);
			\coordinate (r) at (intersection of f--p and b--c);
			\draw[myline] (q) -- (r);

		}{}
		}{}
		
		\ifthenelse{\x > 209}{
		\ifthenelse{\x < 270}{

			\coordinate (p) at ($(f) + (\x:2*\r)$);
			\coordinate (q) at (intersection of f--p and a--c);
			\coordinate (r) at (intersection of f--p and b--c);
			\draw[myline] (q) -- (r);

		}{}
		}{}
		
		\ifthenelse{\x > 269}{
		\ifthenelse{\x < 330}{

			\coordinate (p) at ($(f) + (\x:2*\r)$);
			\coordinate (q) at (intersection of f--p and a--c);
			\coordinate (r) at (intersection of f--p and b--a);
			\draw[myline] (q) -- (r);

		}{}
		}{}
		
		\ifthenelse{\x > 329}{
		\ifthenelse{\x < 361}{

			\coordinate (p) at ($(f) + (\x:2*\r)$);
			\coordinate (q) at (intersection of f--p and b--c);
			\coordinate (r) at (intersection of f--p and b--a);
			\draw[myline] (q) -- (r);

		}{}
		}{}
		
		\coordinate (y) at (\x:5ex);
		\draw[myline,-latex] (y) -- ($(y) + (\x-90:\arrow)$);
		
		}
			
	\end{tikzpicture}

	\caption{Illustration of collective preferences returned by the unique anonymous Arrovian SWF in the case of Condorcet's paradox. The left-hand side shows the collective SSB function and the right-hand side the Marschak-Machina probability triangle.
	The arrows represent normal vectors to the indifference curves (pointing towards the lower contour set). Each indifference curve separates the corresponding upper and lower contour set.
	The unique most preferred outcome is $\nicefrac{1}{3}\, a + \nicefrac{1}{3}\, b + \nicefrac{1}{3}\, c$.
	}
	\label{fig:cond}
\end{figure}
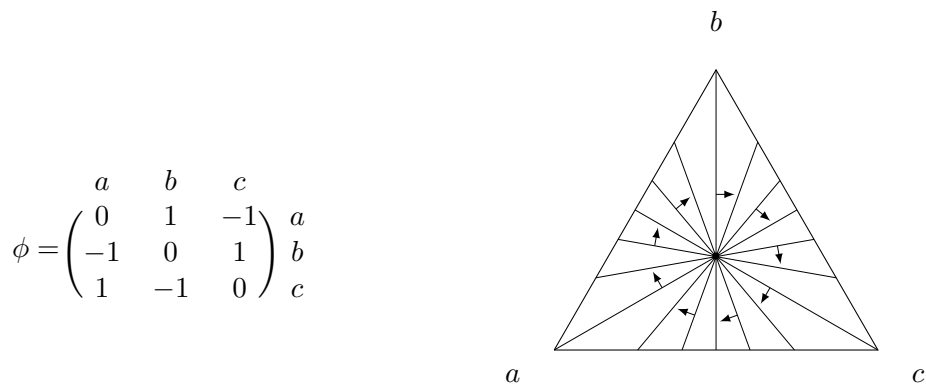

We refer to the utilitarian SWF on PC preferences as \emph{pairwise utilitarianism}. 
Pairwise utilitarianism is computationally tractable: two outcomes can be compared by straightforward matrix-vector multiplications while a maximal outcome can be found using linear programming.
For illustrative purposes, consider the classic Condorcet preference profile 
\[
		R=(
		\begin{bmatrix}
			a\\b\\c
		\end{bmatrix},
		\begin{bmatrix}
			b\\c\\a
		\end{bmatrix},
		\begin{bmatrix}
			c\\a\\b
		\end{bmatrix}
		)
		\equiv
		(
			\begin{pmatrix}
	  		  	0 & 1 & 1 \\
	  			-1 & 0 & 1 \\
	  	  	  	-1 & -1 & 0 
			\end{pmatrix},
			\begin{pmatrix}
	  		  	0 & -1 & -1 \\
	  			1 & 0 & 1 \\
	  	  	  	1 & -1 & 0 
			\end{pmatrix},
			\begin{pmatrix}
	  		  	0 & 1 & -1 \\
	  			-1 & 0 & -1 \\
	  	  	  	1 & 1 & 0
			\end{pmatrix}
		)=(\phi_1,\phi_2,\phi_3)\text.
\]
Note that the pairwise majority relation is cyclic, since there are majorities for $a$ over $b$, $b$ over $c$, and $c$ over $a$. The unique anonymous Arrovian SWF $f$ aggregates preferences by adding the individual utility representations, i.e.,
\[
		f(R) \equiv 
		\sum_{i\in N} \phi_i=
			\begin{pmatrix}
	  		  	0 & 1 & -1 \\
	  			-1 & 0 & 1 \\
	  	  	  	1 & -1 & 0 
			\end{pmatrix}
		\text.
\]
\Cref{fig:cond} shows the collective preference relation represented by this matrix.
The unique most preferred outcome is $\nicefrac{1}{3}\, a + \nicefrac{1}{3}\, b + \nicefrac{1}{3}\, c$.\footnote{This outcome represents a somewhat unusual unique maximal element because it is not \emph{strictly} preferred to any of the other outcomes. This is due to the contrived nature of the example and only happens if the support of a maximal outcome contains all alternatives. Also, in this example, collective preferences happen to be \pc preferences, which clearly is not the case in general.}

\section{Discussion}

Our results challenge the traditional---transitive---way of thinking about individual and collective preferences, which has been largely influenced by the pervasiveness of scores and grades. While our theorems do hold for transitive individual preferences over \emph{pure} outcomes, the preference domain we characterize does admit preference cycles over \emph{mixed} outcomes. Pareto optimality then immediately implies the same for collective preferences.
Even though the collective preference relation we characterize does not provide a ranking of all possible outcomes, it nevertheless allows for the comparison of arbitrary pairs of outcomes and identifies maximal (and minimal) elements in each feasible set of outcomes.\footnote{\citet{BeRa09a} have recently also put forward a relaxed---intransitive---notion of welfare and defended it as ``a viable welfare criterion'' because it guarantees the existence of maximal elements for finite sets. In fact, \citeauthor{BeRa09a} write that ``to conduct useful welfare analysis, one does not require transitivity'' \cite[see also][]{Bern09a}.}

Some readers may be concerned by, say, deriving the decisions of a government from an intransitive collective preference relation. We believe that this concern is largely based on the common fallacy of equating transitivity with rationality. Transitivity certainly appears to be a desirable property of preference relations. However, in notoriously difficult settings such as social choice, it can be unnecessarily restrictive: \emph{unnecessary} because basic principles of \emph{rational} choice (such as Propositions \ref{thm:sonnenschein} and \ref{prop:rationalizable}) hold without making this assumption and \emph{restrictive} because \emph{collective} choice is impossible when insisting on transitivity. Arrow's theorem implies that every Pareto optimal social choice function \emph{(i)} cannot be rationalized by a transitive collective preference relation or \emph{(ii)} takes into account irrelevant information (by violations of IIA). Consider, for example, Borda's voting rule which assigns equidistant scores to alternatives based on the agents' individual rankings and returns the alternatives with maximal accumulated score. \citet[][pp.~78f]{Sen77a} provides an illuminating discussion of two different interpretations of Borda's rule for variable feasible sets that highlight the tension between (transitive) rationalizability and IIA. In a choice-theoretic context, IIA demands that the choice set only depends on preferences over elements contained in the feasible set, and rationalizability requires that all choices can be rationalized by a single collective preference relation ranging over all alternatives in the universe.
Now, the \emph{broad} Borda rule first assigns Borda scores to all alternatives in the universe and then returns the alternatives with maximal scores within the feasible set. By contrast, the \emph{narrow} Borda rule directly assigns Borda scores to alternatives in the feasible set and then returns those with maximal score. The broad Borda rule can be rationalized by a transitive collective preference relation (the ranking of all alternatives by their Borda score), but clearly violates IIA while the narrow Borda rule satisfies IIA, but cannot be rationalized by any binary preference relation (it violates contraction consistency). Arrow’s theorem shows that this tradeoff concerns \emph{all} Pareto optimal social choice functions. Moreover, Sen observed that transitivity rationalizability can be replaced with contraction consistency in Arrow's theorem and many related results.
While voters could justifiably complain that, under the broad Borda rule, the social choice from feasible set $\{a,b,c\}$ depends on their preferences over other unrelated alternatives, say, $d$ or $e$ (a violation of IIA), they could be similarly concerned about the narrow Borda rule, under which it is possible that alternative \{a\} is chosen from $\{a,b,c\}$ but not from $\{a,b\}$. Both failures are troubling: the lack of IIA because seemingly irrelevant information is taken into account for the social choice, and the lack of contraction because introducing or removing, say, clearly inferior alternatives can influence the social choice.

Within the setting of convex outcome sets as described in our paper, affine utilitarian social choice functions on the domain of \pc preferences simultaneously satisfy IIA and rationalizability by a binary preference relation (and thus contraction) as well as Pareto optimality.
If one takes offense at intransitivities, these functions can also be interpreted as mappings from individual choice functions (which may already be aggregates of individual opinions) to a collective choice function.
\todo{FB: How should we define Pareto optimality here or should we omit this sentence altogether? FFX: I would prefer to keep this sentence but omit PO. The discussion above is concerned with the conflict between IIA and rationalizability, so PO does not seem very relevant here. FB: Ok, but let's think about this model again when preparing the final version.}
A compelling opinion on transitivity, which matches the narrative of our paper, is expressed in the following quote by decision theorist Peter C. Fishburn:

\begin{quote}
Transitivity is obviously a great practical convenience and a nice thing to have for mathematical purposes, but long ago this author ceased to understand why it should be a cornerstone of normative decision theory. [\dots] The presence of intransitive preferences complicates matters [\dots] however, it is not cause enough to reject intransitivity. An analogous rejection of non-Euclidean geometry in physics would have kept the familiar and simpler Newtonian mechanics in place, but that was not to be. Indeed, intransitivity challenges us to consider more flexible models that retain as much simplicity and elegance as circumstances allow. It challenges old ways of analyzing decisions and suggests new possibilities.\hfill \citep[][pp.~115--117]{Fish91a}
\end{quote}

\Cref{thm:swf}, the main result of this paper, can be viewed as an intermediary between Harsanyi's Social Aggregation Theorem and Arrow's Impossibility Theorem: it uses Arrow's axioms to derive Harsanyi's utilitarian consequence. 
Clearly, the form of utilitarianism characterized in \Cref{thm:swf} is rather restrictive as, due to \Cref{thm:pcdomain}, it does not allow for intensities of individual preferences.\footnote{One may even question whether this form of preference aggregation really qualifies as \emph{utilitarianism}. However, in a similar vein, one could also question whether Harsanyi's Social Aggregation Theorem 
or Dhillon and Mertens' relative utilitarianism 
are concerned with cardinal utilitarianism because vNM utilities are merely a compact representation of ordinal preferences over lotteries \citep[see, e.g.,][for a discussion of this issue in the context of the so-called Harsanyi-Sen debate]{Weym91a,FSW08a,FlMo16a,MoPi16a}. \Citet[][p. 16]{vNM47a} themselves warn against cardinal interpretations of their utility theory (see also ``Fallacy 3'' by \citet[][p.~32]{LuRa57a} and \citet{Fish89b}).}
In fact, it is no more ``utilitarian'' than approval voting or Borda's rule, which are also based on the summation of scores in purely ordinal contexts. In contrast to Borda's rule, however, pairwise utilitarianism respects majority rule on pure outcomes and thereby reconciles Borda's and Condorcet's seemingly conflicting views on preference aggregation \citep[see, e.g.,][]{Blac58a,Youn88a,Youn95a}.
While \Cref{thm:pcdomain} shows that Arrow's axioms rule out intensities of \emph{individual} preferences over pure outcomes, \Cref{thm:swf} implies that intensities of \emph{collective} preferences are required.

To conclude, we would like to highlight a remarkable quote from Kenneth J. Arrow's influential monograph, which draws the reader's attention precisely to the 
avenue pursued in this paper.

\begin{quote}
It seems that the essential point is, and this is of general bearing, that, if conceptually we imagine a choice being made between two alternatives, we cannot exclude any probability distribution over those two choices as a possible alternative. The precise shape of a formulation of rationality which takes the last point into account or the consequences of such a reformulation on the theory of choice in general or the theory of social choice in particular cannot be foreseen; but it is at least a possibility, to which attention should be drawn, that the paradox to be discussed below might be resolved by such a broader concept of rationality [\dots]
Many writers have felt that the assumption of rationality, in the sense of a one-dimensional ordering of all possible alternatives, is absolutely necessary for economic theorizing [\dots] There seems to be no logical necessity for this viewpoint; we could just as well build up our economic theory on other assumptions as to the structure of choice functions if the facts seemed to call for it.
\hfill \citep[][pp.~20--21]{Arro51a}
\end{quote} 

\section{Remarks}

This section contains a number of technical remarks concerning Theorems \ref{thm:pcdomain} and \ref{thm:swf}.
\medskip

\begin{remark}[Transitivity]\label{rem:trans}
When requiring transitivity of individual preferences over \emph{all} outcomes, we immediately obtain an impossibility because $|\mathcal U|\ge 4$ and~\ref{eq:richbottom} imply that we have to admit a strict ranking of four pure outcomes. According to \Cref{thm:pcdomain}, these preferences are extended to all outcomes using the PC extension. The example given in \Cref{fig:pccycle} shows that these preferences violate transitivity.
Hence, we also have the impossibility of anonymous Arrovian aggregation of WL preferences (and thereby of vNM preferences), even when collective preferences need not be transitive.\footnote{When collective preferences have to be transitive as well, this impossibility directly follows from Arrow's theorem by only considering pure outcomes. Pareto optimality and IIA only become weaker while non-dictatorship is strengthened (dictators only need to be able to dictate strict preferences over pure outcomes). The latter is implied by anonymity.}
\end{remark}

\begin{remark}[Anonymity]
	\Cref{thm:pcdomain} does not hold without assuming anonymity.
	Let $\mathcal U = \{a,b,c,d\}$, $N = \{1,2,3\}$, 
	\[
		\phi=
		\begin{pmatrix}
			0 & 1 & 1 & 1+\epsilon\\
			-1 & 0 & 1 & 1\\
			-1 & -1 & 0 & 1\\
			-(1+\epsilon) & -1 & -1 & 0
		\end{pmatrix}
	\]	
	for some $\epsilon \in(0,\nicefrac{1}{4})$, and $\mathcal D \equiv \Phi^\pc \cup \{ \phi^\pi \colon \pi\in \Pi_\mathcal U \}$. $\mathcal D$ satisfies our richness assumptions and the SWF $f\colon\mathcal D\rightarrow \mathcal R$, $f(R) \equiv 2\phi_1 + 3\phi_2 + 4\phi_3$ satisfies Pareto optimality and IIA, but violates anonymity.
	Note that $f$ is not dictatorial.
	Hence, \Cref{thm:pcdomain} does not hold when weakening anonymity to non-dictatorship.
\end{remark}

\begin{remark}[Tightness of Bounds]\label{rem:bounds}
	\Cref{thm:pcdomain} does not hold if $|\mathcal U|<4$, which is the same bound as for the result by \citet{KaSc77a}.
	This stems from the fact that for $|\mathcal U| = 3$, IIA only has consequences for feasible sets of the form $\Delta_{\{a,b\}}$ for some $a,b\in \mathcal U$.
	For every possible preference between $a$ and $b$, there is exactly one SSB preference relation on $\Delta_{\{a,b\}}$ consistent with it.
	Hence, IIA only has consequences for the collective preferences over \emph{pure} outcomes.
	However, even for three alternatives, the domains of preferences satisfying~\ref{eq:richneutral},~\ref{eq:richindiff}, and~\ref{eq:richinverse} which allow for anonymous Arrovian aggregation are severely restricted. 
	In particular, \Cref{lem:indiff,lem:semidecisive,lem:uniquerelation} and the cases~\ref{item:domain1} and~\ref{item:domain4} of \Cref{lem:pcdomain} still hold.
	Any such domain contains exactly one SSB function $\phi$ for every strict order over $\mathcal U$, which takes the form
	\[
		\phi = 
		\begin{pmatrix}
			0 & 1 & \lambda \\
			-1 & 0 & 1\\
			-\lambda & -1 & 0\\
		\end{pmatrix}
	\]
	for some $\lambda\in\mathbb R_{>0}$ that is fixed across all strict orders.
	For $1 < \lambda < 1 + \nicefrac{1}{n}$, the utilitarian SWF constitutes an Arrovian SWF on the corresponding domain.

	\Cref{thm:swf} does not hold if $|\mathcal U|< 5$.
	Let $\mathcal U = \{a,b,c,d\}$, $\mathcal D = \mathcal{R}^\pc$,
	\[
		\hat \phi = 
		\begin{pmatrix}
			0 & 1 & 0 & 0\\
			-1 & 0 & 0 & 0\\
			0 & 0 & 0 & 1\\
			0 & 0 & -1 & 0
		\end{pmatrix}\text{, and }
	\hat R=(
	\begin{bmatrix}
		a\\b\\c\\d
	\end{bmatrix},
	\begin{bmatrix}
		a\\b\\c\\d
	\end{bmatrix},
	\begin{bmatrix}
		c\\d\\a\\b
	\end{bmatrix},
	\begin{bmatrix}
		d\\c\\a\\b
	\end{bmatrix}
	,\dots)
	\]
	such that every preference relation in $\mathcal D\setminus\{\emptyset\}$ appears exactly once in $(\hat R_i)_{i\in N\setminus\{1,2,3,4\}}$.
	Then, Pareto optimality has no implications for $\hat R$.
	Let $f\colon\mathcal D^N\rightarrow \mathcal R$ be the utilitarian SWF except that $f(\hat R) \equiv \hat\phi$.
	$f$ satisfies Pareto optimality and IIA.
	The proof of \Cref{thm:swf} fails at \Cref{lem:mlscheme2}.
\end{remark}

\begin{remark}[SWFs with Range $\mathcal D$]
When defining SWFs by requiring that collective preferences have to belong to the same domain as individual preferences, one obtains an impossibility. \Cref{thm:pcdomain} only becomes weaker if we restrict the range of SWFs. Hence, both individual and collective preferences have to belong to $\mathcal R^\pc$ and \Cref{thm:swf} implies that any Arrovian SWF is affine utilitarian with positive weights. However when letting
	\[
	\phi_1=
	\begin{pmatrix}
		0 & 1 & 1 \\
		-1 & 0 & 0 \\
		-1 & 0 & 0 \\
	\end{pmatrix}
	\text{and }
	\phi_2=
	\begin{pmatrix}
		0 & 0 & 1 \\
		0 & 0 & 1 \\
		-1 & -1 & 0 \\
	\end{pmatrix}
	\text{,}
	\]
	then $\phi_1,\phi_2\in \Phi^\pc$ while there is no $w_1, w_2\in \mathbb{R}_{>0}$ such that $w_1 \phi_1+ w_2 \phi_2\in \Phi^\pc$. Since $\phi_1$ and $\phi_2$ represent dichotomous preference relations, \Cref{thm:dichswf} also turns into an impossibility.
\end{remark}

\begin{remark}[More General Preference Relations]
Our results could be strengthened by making even less assumptions about individual and collective preferences. 
Whether symmetry is required for \Cref{thm:pcdomain} and \Cref{thm:swf} is open. 
A more drastic generalization would only require the existence of maximal elements in all feasible sets with respect to the \emph{collective} preference relation. Such a generalization of \Cref{thm:pcdomain} does not hold. Consider the domain of individual preferences containing all transitive and complete relations. Pareto rule (see \Cref{sec:swf}) is anonymous and Arrovian and always returns a relation that permits maximal elements. Hence, \Cref{thm:swf} does not hold either. 
For the full domain of \emph{individual} preferences that admit maximal elements in all feasible sets, Pareto optimality alone can result in collective preferences that do not admit a maximal element.
\end{remark}

\section*{Acknowledgements}
This material is based on work supported by the Deutsche Forschungsgemeinschaft under grant {BR~2312/11-1}. Early versions of this paper were presented at the Yale Microeconomic Theory Reading Group (September, 2016) and the
5th Annual D-TEA Workshop in Paris, dedicated to the memory of Kenneth J. Arrow (May, 2017).
The authors thank Tilman B{\"o}rgers, Paul Harrenstein, Ehud Kalai, Michel Le Breton, Philippe Mongin, Klaus Nehring, Dominik Peters, Mirek Pi{\v s}t{\v e}k, Marcus Pivato, Clemens Puppe, Larry Samuelson, and Gerelt Tserenjigmid for helpful comments. 
The authors are furthermore indebted to the editor and anonymous referees for their constructive feedback.

\def\bibfont{\small}

\newpage

\appendix

\section*{APPENDIX}

\section{Arrovian Impossibilities for vNM Preferences}\label{app:vnm}

As mentioned in \Cref{sec:related}, there are a number of Arrovian impossibilities when preferences over lotteries satisfy the von Neumann-Morgenstern (vNM) axioms and thus can be represented by assigning cardinal utilities to alternatives such that lotteries are compared based on the expected utility they produce. We believe that a detailed comparison of these results which have appeared in different branches of social choice theory and welfare economics is in order.

The literature on \emph{economic domains} uses a framework very similar to the one studied in this paper \citep[see][]{LeWe11a}. A key question is whether Arrow's impossibility remains intact if the domain of admissible preference profiles is subject to certain structural restrictions. Many results in this area rely on the so-called local approach due to \citet{KMS79a}, who proposed a simple domain condition that is sufficient for Arrow's impossibility. \citet{LeBr86a} has shown that this condition is satisfied by the domain of vNM preferences, which implies Arrow's impossibility (see also \citet[][p.~214]{LeWe11a}). The corresponding IIA condition is defined for arbitrary pairs of lotteries, or---equivalently---arbitrary feasible sets of size two (which implies IIA for arbitrary feasible sets of lotteries).
In view of the structure of the set of lotteries, weaker IIA conditions (for example, restricted to convex feasible sets) seem natural.

\citet{Sen70a} has initiated the study of so-called \emph{social welfare functionals (SWFLs)}, which map a profile of cardinal utilities to a transitive and complete collective preference relation \citep[see also][]{AsGe02a}. The definitions of Pareto optimality and IIA can be straightforwardly extended to SWFLs. Note, however, that IIA takes into account the absolute values of utilities (rather than only ordinal comparisons between these values). This allows for Pareto optimal SWFLs that satisfy IIA, for example by adding individual utilities (utilitarianism).

vNM utilities are invariant under positive affine transformations. To account for this, \citet{Sen70a} introduced the axiom of \emph{cardinality and non-comparability}, which prescribes that collective preferences returned by the SWFL are invariant under positive affine transformations of the individual utility functions. However, this assumption effectively turns the problem into a problem of ordinal preference aggregation because the utility values assigned to two different alternatives in two different utility profiles can be made identical across profiles by applying a positive affine transformation. Hence, IIA implies an ordinal version of IIA which only takes into account the ordinal comparisons between utility values and Arrow's original theorem holds \citep[][Theorem 8*2]{Sen70a}. 

There are two ways to interpret this result. First, one can view the set of alternatives as the set consisting of only degenerate lotteries. This leads to weak notions of Pareto optimality and IIA because they are only concerned with degenerate lotteries. Non-dictatorship, on the other hand, becomes much stronger because a dictator can only enforce his (strict) preferences over degenerate lotteries, rather than all lotteries. Alternatively, one can define the set of alternatives as the set of all lotteries. This gives rise to stronger notions of Pareto optimality and IIA based on pairs of lotteries, rather than pairs of degenerate lotteries. In this model, non-dictatorship is defined by excluding agents who can enforce their (strict) preferences over lotteries. Like Arrow's theorem, \citeauthor{Sen70a}'s result assumes an unrestricted domain of preferences (or ordinal utilities, respectively). Expected utility functions over a set of lotteries, however, are subject to certain structural constraints (described by the vNM axioms independence and continuity). This gap is filled by \citet[][Proposition 3]{Mong94a}, who has shown that \citeauthor{Sen70a}'s result still holds when the set of alternatives is a convex subset of some vector space with mixture-preserving (i.e., affine) utility functions, which includes the domain of lotteries over some finite set of alternatives as a special case.\footnote{The vector space is required to be at least of dimension 2, which corresponds to the set of lotteries over at least three degenerate lotteries.} The Pareto condition used by \citeauthor{Mong94a} is identical to the one used in his paper and therefore slightly stronger than the one used by Arrow, Sen, and Le Breton. 

A very strong impossibility for vNM preferences was given by \citet{KaSc77b} (and later improved by \citet{Hyll80b}). \citeauthor{KaSc77b} consider ``cardinal'' preference relations represented by equivalence classes of utility functions that can be transformed into each other using positive affine transformations and \emph{cardinal social welfare functions}, which map a profile of cardinal preference relations to a collective cardinal preference relation. The set of alternatives is defined as the set of degenerate lotteries like in the first interpretation of Sen's result above. Preferences over lotteries are implicit in each equivalence class of utility functions.
When interpreted in our ordinal framework, they prove an Arrovian impossibility when individual and collective preferences over lotteries are subject to the vNM axioms and there are at least four alternatives. In contrast to the results by Le Breton, Sen, and Mongin, IIA is only required for feasible sets given by the convex combination of degenerate lotteries and non-dictatorship only rules out projections. The theorem thus uses weaker notions of Pareto optimality, IIA, and non-dictatorship at the expense of also requiring the vNM axioms for the collective preference relation.
When replacing non-dictatorship with anonymity, our \Cref{thm:pcdomain} implies a similar impossibility, even without requiring collective preferences to be transitive (see Remark~\ref{rem:trans}). We use \citeauthor{KaSc77b}'s weak IIA notion, but \citeauthor{Mong94a}'s strong notion of Pareto optimality.

\section{Characterization of the Domain}
\label{app:domain}

We start by showing that every continuous relation satisfies a weaker notion of continuity known as Archimedean continuity.
This statement will also be used in Appendices \ref{app:choice} and \ref{app:ssb}.
A preference relation $\succ$ satisfies \emph{Archimedean continuity} if for all $p,q,r\in\Delta$,
\[
	p\succ q\succ r \text{ implies } p \lambda r \sim q \text{ for some } \lambda\in(0,1) \text.\tag{Archimedean continuity}
\]
It is well-known that continuity implies Archimedean continuity \citep[see, e.g.,][]{Karn07a}.
We give a proof for completeness below.

\begin{lemma}
	If a preference relation $\succ$ satisfies continuity, then it satisfies Archimedean continuity.
	\label{lem:archcont}
\end{lemma}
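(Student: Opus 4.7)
The plan is to prove the contrapositive of the obstruction: if no mixture of $p$ and $r$ lies in the indifference set $I(q)$, then $[0,1]$ decomposes as a disjoint union of two nonempty open sets, contradicting connectedness.

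Concretely, I would parametrize the segment by the continuous map $f\colon[0,1]\to\Delta$ defined by $f(\lambda)=\lambda p+(1-\lambda)r$. Linear maps from $\mathbb R$ into $\Delta$ are continuous with respect to the topology on $\Delta$ inherited from $\mathbb R^{\mathcal U}$, so $f$ is continuous. Using the hypotheses $p\succ q$ and $q\succ r$, we have $f(1)=p\in U(q)$ and $f(0)=r\in L(q)$.

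Next I would apply the continuity of $\succ$: by assumption $U(q)$ and $L(q)$ are open in $\Delta$, so their preimages $A:=f^{-1}(U(q))$ and $B:=f^{-1}(L(q))$ are open subsets of $[0,1]$. Moreover $1\in A$ and $0\in B$, so both are nonempty, and by the asymmetry of $\succ$ they are disjoint. If $A\cup B=[0,1]$, then $[0,1]$ would be the disjoint union of two nonempty open sets, contradicting its connectedness. Hence there exists $\lambda^*\in[0,1]\setminus(A\cup B)$; for such $\lambda^*$ we have neither $f(\lambda^*)\succ q$ nor $q\succ f(\lambda^*)$, i.e., $\lambda^* p+(1-\lambda^*)r\sim q$. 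Since $0\in B$ and $1\in A$, the point $\lambda^*$ lies in the open interval $(0,1)$, which is exactly the conclusion of Archimedean continuity.

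The argument is short and essentially a textbook application of the connectedness of $[0,1]$; there is no real obstacle, and no use is made of convexity or of any structure on $\Delta$ beyond the fact that affine maps $\mathbb R\to\Delta$ are continuous. The only subtlety to mention explicitly is that $\succ$ is assumed asymmetric (part of the paper's definition of a preference relation), which is needed to ensure $A\cap B=\emptyset$.
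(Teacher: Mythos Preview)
Your proof is correct and follows essentially the same approach as the paper's: both arguments use the openness of $U(q)$ and $L(q)$ to obtain open preimages in the segment $[p,r]$ (equivalently, in $[0,1]$) and then invoke a basic topological property of the interval to locate a point of indifference. The only cosmetic difference is that you appeal directly to the connectedness of $[0,1]$, whereas the paper defines $\lambda^\ast=\sup\{\lambda:q\succ p\lambda r\}$ and argues by contradiction that $p\lambda^\ast r\sim q$; these are two standard phrasings of the same intermediate-value idea.
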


\begin{proof}
	Let $\succ$ be a preference relation satisfying continuity, $p,q,r\in \Delta$ such that $p\succ q\succ r$, and
	$\lambda^\ast = \sup_{\lambda\in[0,1]} \{q\succ p\lambda r\}$.
	Since $q\succ r$, $\lambda^\ast$ is well defined.
	Continuity of $\succ$ implies that $L(q)$ and $U(q)$ are open and hence, $L(q)\cap [p,r]$ and $U(q)\cap [p,r]$ are open in $[p,r]$.
	If $q\succ p\lambda^\ast r$, then $L(q)\cap [p,r]$ is not open in $[p,r]$, since $p\lambda r\succsim q$ for all $\lambda > \lambda^\ast$, which is a contradiction.
	If $p\lambda^\ast r\succ q$, then $U(q)\cap [p,r]$ is not open in $[p,r]$, since, by definition of $\lambda^\ast$, every open neighborhood of $p\lambda^\ast r$ in $[p,r]$ contains $p\lambda r$ for some $\lambda < \lambda^\ast$ such that $q\succ p\lambda r$. 	
	This is again a contradiction.
	Hence, $q\sim p\lambda^\ast r$.
\end{proof}

Two vNM preference relations with the same symmetric part have to be equal up to orientation, since any two linear functions representing them have to have the same null space.
\citet[][Theorem 2]{FiGe87a} have shown that this statement extends to SSB preference relations, which will be useful in subsequent proofs.\footnote{\citet[][Lemma 8.8]{Bran18a} shows that \Cref{lem:indiff} even holds when $\succ$ and $\hat\succ$ are only required to satisfy Archimedean continuity and \citeauthor{Fish82c}'s~\citeyearpar{Fish82c} dominance axiom.}

\begin{lemma}[\citealp{FiGe87a}]\label{lem:indiff}
	Let ${\succ}, {\hat\succ}\in\mathcal R$ such that ${\sim}\subseteq{\hat\sim}$. Then, $\hat\succ\in\{{\succ},{\succ^{-1}}, \emptyset\}$.
\end{lemma}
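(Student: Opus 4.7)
The plan is to work at the level of the SSB representations $\phi,\hat\phi\in\Phi$ of $\succ,\hat\succ$ guaranteed by \Cref{prop:ssb}, and show that $\hat\phi = c\,\phi$ for some scalar $c\in\mathbb R$. Since SSB functions are unique up to positive scaling, the sign of $c$ (with $c=0$ corresponding to $\emptyset$) determines which of the three cases holds. The hypothesis ${\sim}\subseteq{\hat\sim}$ translates to the implication $\phi(p,q)=0 \Rightarrow \hat\phi(p,q)=0$ for all $p,q\in\Delta$. If $\phi\equiv 0$, then $\sim$ is all of $\Delta\times\Delta$, which immediately forces $\hat\phi\equiv 0$ and hence $\hat\succ=\emptyset$. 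From now on I assume $\phi\not\equiv 0$.

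Let $V=\aff(\Delta)$; since $\phi$ and $\hat\phi$ are bilinear on $\Delta$, they extend uniquely to bi-affine functions on $V\times V$. Fix any $p$ in the relative interior of $\Delta$ for which $\phi(p,\cdot)$ is not the zero affine functional on $V$; such a $p$ exists because $\phi\not\equiv 0$ implies that the set of $p\in V$ with $\phi(p,\cdot)\equiv 0$ is a proper affine subspace. Then $H_p:=\{q\in V\colon \phi(p,q)=0\}$ is an affine hyperplane of $V$ containing $p$, and since $p$ is in the relative interior of $\Delta$, $H_p\cap\Delta$ contains a relatively open neighbourhood of $p$ inside $H_p$. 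By hypothesis the affine function $q\mapsto\hat\phi(p,q)$ vanishes on this relatively open subset, hence on the whole hyperplane $H_p$. An affine functional vanishing on a hyperplane is a scalar multiple of any other such functional, so there is a unique $c(p)\in\mathbb R$ with $\hat\phi(p,\cdot)=c(p)\,\phi(p,\cdot)$ on all of $V$.

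Next I would show that $c(p)$ is independent of $p$ by exploiting skew-symmetry. Let $p,q$ be two points in the relative interior with $\phi(p,\cdot),\phi(q,\cdot)\not\equiv 0$ and $\phi(p,q)\neq 0$. The previous step gives $\hat\phi(p,q)=c(p)\,\phi(p,q)$; applying skew-symmetry of both functions yields $\hat\phi(q,p)=-\hat\phi(p,q)=c(p)\,\phi(q,p)$. But the same step applied at $q$ also yields $\hat\phi(q,p)=c(q)\,\phi(q,p)$. Since $\phi(q,p)\neq 0$, we conclude $c(p)=c(q)$. The ``good'' set $G$ of $p\in\mathrm{relint}(\Delta)$ with $\phi(p,\cdot)\not\equiv 0$ is open and dense in $\Delta$, and for any two points of $G$ one can pick a third point of $G$ avoiding the two hyperplanes $\phi(p,\cdot)=0$ and $\phi(q,\cdot)=0$, so $c\equiv c(p)$ is a single constant on $G$. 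Finally, continuity of both $\phi$ and $\hat\phi$, together with density of $G$ in $\Delta$, extends $\hat\phi=c\,\phi$ to all of $\Delta\times\Delta$; and for $p\notin G$, i.e.\ $\phi(p,\cdot)\equiv 0$, the hypothesis directly gives $\hat\phi(p,\cdot)\equiv 0$, consistent with the equation. Reading off the three cases for $c>0$, $c<0$, $c=0$ completes the proof.

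The main obstacle I expect is the constancy of $c(p)$: the bilinearity/skew-symmetry argument is clean once one has the proportionality at a single point, but one must argue that $G$ is path-connected in a strong enough sense that the relation ``$\phi(p,q)\neq 0$'' can always be threaded between any two points. The skew-symmetry trick above replaces what would otherwise be a slightly fiddly interpolation $p=\lambda p_1+(1-\lambda) p_2$ by a direct algebraic identity, which I see as the cleanest route around the obstacle.
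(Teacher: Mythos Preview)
The paper does not supply its own proof of this lemma; it simply attributes the result to \citet[][Theorem~2]{FiGe87a} (and a footnote points to \citet[][Lemma~8.8]{Bran18a} for a version under weaker hypotheses). So there is nothing in the paper to compare against beyond the citation.

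Your argument is correct and is the natural linear-algebraic route once one invokes \Cref{prop:ssb}. A few small remarks to tighten the write-up. First, the set $\{p\in V:\phi(p,\cdot)\equiv 0 \text{ on }\Delta\}$ equals $(\ker M)\cap V$ where $M$ is the skew-symmetric matrix representing $\phi$; since $M\neq 0$ and the linear span of $V$ is all of $\mathbb R^{\mathcal U}$, this intersection is indeed a proper affine subspace of $V$, so your set $G$ is open and dense. Second, the ``threading'' step is even simpler than you suggest: for $p,q\in G$, the bad set $H_p\cup H_q\cup G^c$ is a finite union of proper affine subspaces and hence cannot cover $\mathrm{relint}(\Delta)$, so a common $r$ exists directly---no path-connectedness argument is needed. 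Third, once you have $\hat\phi(p,\cdot)=c\,\phi(p,\cdot)$ for all $p\in G$, bilinearity alone (rather than an appeal to continuity) extends the identity to all of $\Delta\times\Delta$, because $G$ contains an affine basis of $V$; the case $p\notin G$ you already handle directly via the hypothesis. None of these are gaps, just places where the exposition can be made crisper.
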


The next lemma is reminiscent of what is known as the \emph{field expansion lemma} in traditional proofs of Arrow's theorem \citep[see, e.g.][]{Sen86a}.\footnote{In contrast to \Cref{lem:semidecisive}, the consequence of the original field expansion lemma uses a stronger notion of decisiveness.}
	Let $f\colon \mathcal{D}^N\rightarrow\mathcal{R}$ be an SWF, $G,H\subseteq N$, and $a,b\in \mathcal U$.
	We say that $(G, H)$ is decisive for $a$ against $b$, denoted by $a\mathrel{D_{G,H}} b$, if, for all $R\in\mathcal D^N$, $a\succ_i b$ for all $i\in G$, $a\sim_i b$ for all $i\in H$, and $b\succ_i a$ for all $i\in N\setminus (G\cup H)$ implies $a\succ b$.	
	Hence, $D_{G,H}$ is a relation on $\mathcal U$.

\begin{lemma}\label{lem:semidecisive}
	Let $f$ be an Arrovian SWF on some rich domain $\mathcal{D}$ with $|\mathcal U|\ge 3$, $G,H\subseteq N$, and $a,b\in \mathcal U$.
	Then, $a\mathrel{D_{G,H}} b$ implies that $D_{G,H} = \mathcal U\times \mathcal U$.
\end{lemma}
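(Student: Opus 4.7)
The plan is to show that $(G,H)$-decisiveness on a single pair propagates to every ordered pair by a sequence of transfers through an intermediate alternative. It suffices to prove the two transfer steps:
\begin{itemize}
\item[(i)] $a\,D_{G,H}\,b$ implies $a\,D_{G,H}\,c$ for every $c\in\mathcal{U}\setminus\{a,b\}$;
\item[(ii)] $a\,D_{G,H}\,b$ implies $c\,D_{G,H}\,b$ for every $c\in\mathcal{U}\setminus\{a,b\}$.
\end{itemize}
Applying (i) and (ii) to the resulting new decisive pairs reaches every ordered pair of distinct alternatives (and in particular the reverse $b\,D_{G,H}\,a$, via a fresh intermediate).

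To prove (i), fix $c\in\mathcal{U}\setminus\{a,b\}$ and appeal to IIA: the collective preference on $\Delta_{\{a,c\}}$ depends only on the individual $\{a,c\}$-preferences, so it suffices to verify $a\succ c$ collectively in a single convenient profile $\hat R\in\mathcal{D}^N$ whose individual $\{a,c\}$-preferences are as in the hypothesis of $a\,D_{G,H}\,c$ (agents in $G$ have $a\succ_i c$, agents in $H$ have $a\sim_i c$, and the remaining agents have $c\succ_i a$). Using the richness conditions R1--R4, choose $\hat R$ so that each individual preference on the triple $\{a,b,c\}$ is a transitive ranking: agents in $G$ have $a\succ_i b\succ_i c$, agents in $H$ are fully indifferent among $a,b,c$, and agents in $N\setminus(G\cup H)$ have $b\succ_i c\succ_i a$. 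Then IIA together with the hypothesis $a\,D_{G,H}\,b$ yields $a\succ b$ in $f(\hat R)$, while Pareto on the $\{b,c\}$-preferences (strict for $G$ and $N\setminus(G\cup H)$, indifferent for $H$) yields $b\succ c$ in $f(\hat R)$.

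The main obstacle is to extract $a\succ c$ in $f(\hat R)$ from these facts, because without collective transitivity the chain $a\succ b\succ c$ does not close (witness the Steinhaus--Trybula paradox on $\mathcal{R}^\pc$). The idea is to exploit the bilinearity of the collective SSB representation $\phi$: along the segment $m_\alpha = \alpha a + (1-\alpha)c$ one has $\phi(m_\alpha,b) = \alpha\,\phi(a,b) + (1-\alpha)\,\phi(c,b)$, a continuous linear function which is strictly positive at $\alpha=1$ and strictly negative at $\alpha=0$; hence a unique $\alpha^\ast\in(0,1)$ satisfies $m_{\alpha^\ast}\sim b$ in the collective. The analogous identity $\phi_i(m_{\alpha^\ast},a) = -(1-\alpha^\ast)\phi_i(a,c)$ at the individual level aligns each agent's $\{m_{\alpha^\ast},a\}$-preference with her $\{a,c\}$-preference, so IIA on $\Delta_{\{a,b,c\}}$ ties the collective preferences among $a$, $b$, $c$ and $m_{\alpha^\ast}$ together. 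Combining $a\succ b$, $m_{\alpha^\ast}\sim b$, the convexity of the upper and lower contour sets of $b$ (which place $a\in U(b)$ and $m_{\alpha^\ast}\in I(b)$), and the individual alignment just noted, one pins down the sign of $\phi(m_{\alpha^\ast},a)$ as negative; the bilinear identity $\phi(m_{\alpha^\ast},a) = -(1-\alpha^\ast)\phi(a,c)$ then gives $\phi(a,c)>0$, that is, $a\succ c$ collectively.

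Case (ii) is handled by a symmetric construction, swapping the roles of $a$ and $b$ throughout. Iterating (i) and (ii) yields $D_{G,H} = \mathcal{U}\times\mathcal{U}$. The essential novelty over the classical Arrovian field expansion is the replacement of transitive chaining by linear interpolation along mixed outcomes, which IIA, Pareto, and the bilinear/convex structure of SSB collective preferences together suffice to analyse.
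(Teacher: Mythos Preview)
Your interpolation step does not close. Knowing $a\in U(b)$ and $m_{\alpha^\ast}\in I(b)$ for the collective relation tells you nothing about the collective preference between $a$ and $m_{\alpha^\ast}$: the cyclic SSB function $\phi(a,b)=\phi(b,c)=1$, $\phi(a,c)=-1$ is fully consistent with continuity, convexity, $a\succ b$, $b\succ c$, and $m_{1/2}\sim b$, yet yields $m_{\alpha^\ast}\succ a$. The ``individual alignment'' you invoke is just the observation that the $\{a,m_{\alpha^\ast}\}$-pattern is again $(G,H,\text{rest})$, but IIA in this paper applies only to faces $\Delta_X$ with $X\subseteq\mathcal U$; it does not let you isolate the pair $\{a,m_{\alpha^\ast}\}$ when $m_{\alpha^\ast}\notin\mathcal U$. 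So you have not pinned down the sign of $\phi(m_{\alpha^\ast},a)$, and the deduction $a\succ c$ is unproved.

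The paper's proof avoids this trap by choosing the profile differently: agents in $G$ get a fixed $\succ_x$ with $a\succ_x b\succ_x x$, agents in $H$ get $\emptyset$, and the remaining agents get $\succ_x^{-1}$. Because the profile contains only one nontrivial relation and its inverse, Pareto indifference forces $\sim_x\subseteq{\sim}$ collectively, and then \Cref{lem:indiff} pins the collective relation down to one of $\{\succ_x,\succ_x^{-1},\emptyset\}$. Decisiveness $a\,D_{G,H}\,b$ selects $\succ_x$, which immediately gives $a\succ x$ and $b\succ x$. This is what replaces transitive chaining here---not bilinear interpolation, but the rigidity of SSB relations under inclusion of indifference. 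Your profile, by putting $b\succ_i c\succ_i a$ on the last block (to make Pareto bite on $\{b,c\}$), destroys exactly the inverse-pair structure that makes \Cref{lem:indiff} applicable.
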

\begin{proof}
	First we show that $a\mathrel{D_{G,H}} x$ and $b\mathrel{D_{G,H}} x$ for all $x\in \mathcal U\setminus\{a,b\}$.
	To this end, let $x\in \mathcal U\setminus\{a,b\}$ and ${\succ_x}\in\mathcal D$ be a preference relation such that $a\succ_x b\succ_x x$ and $a\succ_x x$, which exists by richness assumption \ref{eq:richbottom} (cf. \Cref{sec:swf}).
	Consider the preference profile 
	\[
		R = (\underbrace{\succ_x,\dots,\succ_x}_G,\underbrace{\emptyset,\dots,\emptyset}_H,{\succ_x^{-1}},\dots,{\succ_x^{-1}})\text,
	\]
	which exists by \ref{eq:richindiff} and \ref{eq:richinverse}.
	Since ${\succ_x} \cap {{\succ_x^{-1}}} = \emptyset$, it follows from Pareto indifference and \Cref{lem:indiff} that ${\succ} = f(R)\in \{{\succ_x},{{\succ_x^{-1}}},\emptyset\}$.
	Since $a\mathrel{D_{G,H}} b$, ${\succ} = {\succ_x}$ remains as the only possibility.
	Hence, $a\succ x$ and $b\succ x$.
	By IIA, it follows that $a\mathrel{D_{G,H}} x$ and $b\mathrel{D_{G,H}} x$.
	
	Repeated application of the second statement implies that $D_{G,H}$ is a complete relation.
	To show that $D_{G,H}$ is symmetric, let $x,y,z\in \mathcal U$ such that $x\mathrel{D_{G,H}} y$.
	The first part of the statement implies that $x\mathrel{D_{G,H}} z$.
	Two applications of the second part of the statement yield $z\mathrel{D_{G,H}} y$ and $y\mathrel{D_{G,H}} x$.
	Hence, $D_{G,H} = \mathcal U\times \mathcal U$.
\end{proof}

Now we show that anonymous Arrovian aggregation is only possible on rich domains in which preferences over outcomes are completely determined by preferences over pure outcomes.

\begin{lemma}\label{lem:uniquerelation}
	Let $f$ be an anonymous Arrovian SWF on some rich domain $\mathcal D$ with $|\mathcal U|\ge 3$.
	Then, ${\succ}|_A = {\hat\succ}|_A$ implies ${\succ}|_{\Delta_A} = {\hat\succ}|_{\Delta_A}$ for all ${\succ}, {\hat\succ}\in\mathcal D$ and $A\subseteq \mathcal U$. 
\end{lemma}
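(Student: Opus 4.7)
I proceed by strong induction on $|A|$. For $|A| \leq 2$, bilinearity of any SSB representation $\phi$ of $\succ$ makes the preference on $\Delta_{\{a,b\}}$ entirely determined by the sign of $\phi(a,b)$, and hence by $\succ|_{\{a,b\}}$: for $p = \lambda a + (1-\lambda)b$ and $q = \mu a + (1-\mu)b$, one computes $\phi(p,q) = (\lambda-\mu)\phi(a,b)$.

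For the inductive step with $|A| = k \geq 3$, I assume the lemma for all proper subsets of $A$. Since $\succ|_A = \hat\succ|_A$ implies $\succ|_X = \hat\succ|_X$ for every $X \subsetneq A$, induction yields $\succ|_{\Delta_X} = \hat\succ|_{\Delta_X}$, so $\succ$ and $\hat\succ$ coincide on the entire relative boundary $\bigcup_{X \subsetneq A}\Delta_X$ of $\Delta_A$. Applying \Cref{lem:indiff} to the restrictions $\succ|_{\Delta_A}$ and $\hat\succ|_{\Delta_A}$ on the subsimplex $\Delta_A$, it suffices to establish the indifference inclusion $\sim_\succ|_{\Delta_A} \subseteq \sim_{\hat\succ}|_{\Delta_A}$ (the reverse follows by swapping roles); then $\hat\succ|_{\Delta_A}$ lies in $\{\succ|_{\Delta_A}, (\succ|_{\Delta_A})^{-1}, \emptyset\}$ and the agreement $\succ|_A = \hat\succ|_A$ rules out both the inverse and the trivial option as soon as $\succ|_A$ contains any strict preference (the all-indifferent case is immediate by bilinearity).

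Suppose for contradiction that $p \sim_\succ q$ but $p \hat\succ q$ for some $p, q \in \Delta_A$. The combined support of $p$ and $q$ must equal $A$, since otherwise they would both lie in some $\Delta_X$ with $X \subsetneq A$, where $\succ$ and $\hat\succ$ already agree. I then construct the two profiles $R = (\hat\succ^{-1}, \succ, \ldots, \succ)$ and $R' = (\succ^{-1}, \succ, \ldots, \succ)$. Pareto applied at $(p,q)$ in $R$ yields $q \succ^{f(R)} p$: agent $1$ strictly prefers $q$ to $p$ because $p \hat\succ q$ gives $q \hat\succ^{-1} p$, while agents $2, \dots, n$ are indifferent, so unanimous weak preference together with one strict preference forces the social strict preference. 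In contrast, Pareto indifference in $R'$ at $(p,q)$ gives $p \sim^{f(R')} q$, since $\sim_{\succ^{-1}} = \sim_\succ$ makes all agents indifferent. By the inductive hypothesis, $\hat\succ^{-1}|_{\Delta_X} = \succ^{-1}|_{\Delta_X}$ for every $X \subsetneq A$, so $R|_{\Delta_X} = R'|_{\Delta_X}$, and IIA then forces $f(R)|_{\Delta_X} = f(R')|_{\Delta_X}$ on every $\Delta_X$ with $X \subsetneq A$.

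The main obstacle is to extend this boundary agreement into a contradiction at the interior point $(p,q)$. For $|A| \geq 4$, IIA applied to each size-$(|A|-1)$ subset $X \subsetneq A$ makes the SSB matrices representing $f(R)$ and $f(R')$ proportional on $X \times X$; overlapping such subsets share a pair carrying a non-zero SSB entry (provided $\succ|_A$ has some strict preference, which is the only non-trivial case), and chaining these local proportionalities yields a common global scaling on $A \times A$. This makes $f(R)|_{\Delta_A}$ and $f(R')|_{\Delta_A}$ equal as preferences, contradicting $q \succ^{f(R)} p$ versus $p \sim^{f(R')} q$. The hardest case is $|A| = 3$, where IIA on proper subsets only controls the three signs of the SSB entries on $A$, leaving a two-parameter family of ratios unconstrained; I expect to kill this freedom by invoking neutrality~\ref{eq:richneutral} to generate additional profiles via permutations in $\Pi_A$ acting on $\succ$ and $\hat\succ$, combined with the semidecisive contagion of \Cref{lem:semidecisive}, which together should force a triangular rigidity incompatible with the Pareto-versus-Pareto-indifference conflict at $(p,q)$.
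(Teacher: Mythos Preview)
Your argument has two genuine gaps, and both stem from the same oversight: you never use anonymity, yet the paper explicitly notes (immediately after its proof of this lemma) that this is the \emph{only} place where anonymity is needed, and the remark on anonymity in Section~9 exhibits a non-anonymous Arrovian SWF on a rich domain not contained in $\mathcal R^{\pc}$. So any argument that proves \Cref{lem:uniquerelation} without invoking anonymity must be wrong somewhere.

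Concretely, your gluing step for $|A|\ge 4$ fails. Agreement of two SSB preferences on every proper face $\Delta_X$, $X\subsetneq A$, does \emph{not} force agreement on $\Delta_A$. With $A=\{a,b,c,d\}$, take
\[
\phi(a,b)=\phi(c,d)=1,\qquad \hat\phi(a,b)=1,\ \hat\phi(c,d)=2,
\]
all other off-diagonal entries zero. On each three-element subset, $\phi$ and $\hat\phi$ are proportional (with constant $1$ on $\{a,b,c\}$ and $\{a,b,d\}$, constant $2$ on $\{a,c,d\}$ and $\{b,c,d\}$), hence define the same preference on every $\Delta_X$; yet they are not globally proportional, so the induced preferences on $\Delta_A$ differ. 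Your ``overlapping subsets share a pair carrying a non-zero entry'' assumption is exactly what breaks here: the overlaps $\{a,c\},\{a,d\},\{b,c\},\{b,d\}$ all carry zero entries, so the proportionality constants cannot be chained. Nothing in your construction of $R$ and $R'$ rules this phenomenon out for $f(R)$ and $f(R')$.

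Your $|A|=3$ case is, as you acknowledge, only a hope; invoking \ref{eq:richneutral} and \Cref{lem:semidecisive} without anonymity will not close it, for the reason above.

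The paper's route is shorter and avoids all of this. It fixes the single profile $R=({\succ_0},{\hat\succ_0^{-1}},\emptyset,\dots,\emptyset)$ and uses anonymity (swapping agents $1$ and $2$) together with \Cref{lem:semidecisive} to force $a\sim b$ in $f(R)$ for every $a,b\in A$; convexity of indifference sets then gives $f(R)|_{\Delta_A}=\emptyset$ in one stroke, with no induction and no gluing. A single application of Pareto optimality at a pair $p,q\in\Delta_A$ witnessing ${\succ_0}|_{\Delta_A}\neq{\hat\succ_0}|_{\Delta_A}$ then yields the contradiction. If you want to salvage your inductive scheme, the missing ingredient is precisely this anonymity-plus-decisiveness step, applied already at the bottom of the induction.
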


\begin{proof}
	Let ${\succ_0}, {\hat\succ_0}\in\mathcal D$ and $A\subseteq \mathcal U$ such that ${\succ_0}|_{A} = {\hat\succ_0}|_{A}$.
	Consider the preference profile 
	\[
		R = ({\succ_0},{\hat\succ_0^{-1}},\emptyset,\dots,\emptyset)\text,
	\]
	which exists by \ref{eq:richindiff} and \ref{eq:richinverse}.
	Assume that there are $a,b\in A$ such that $a\succ_0 b$ and define $\bar R=R_{(12)}$ to be identical to $R$ except that the preferences of agents $1$ and $2$ are exchanged. 
	Anonymity of $f$ implies that ${\bar\succ} =f(\bar R) = f(R)= {\succ}$. 
	Assume for contradiction that $a\succ b$.
	Then, by IIA, $(\{1\},N\setminus\{1,2\})$ is decisive for $a$ against $b$.
	\Cref{lem:semidecisive} implies that $(\{1\},N\setminus\{1,2\})$ is also decisive for $b$ against $a$.
	Hence $b\mathrel{\bar\succ} a$, which contradicts $\bar\succ = {\succ}$.
	Hence, we get that $a\sim b$ for all $a,b\in A$ such that $a\mathrel{\hat\succ_0} b$.
	For $a,b\in A$ such that $a\sim_0 b$ and $a\mathrel{\hat\sim_0} b$, it follows from Pareto indifference that $a\sim b$.
	Hence, $a\sim b$ for all $a,b\in A$.
	
	Since by convexity of $\succ$, indifference sets are convex, we get that ${\succ}|_{\Delta_A} = \emptyset$.
	If ${\succ}_0|_{\Delta_A} \neq {\hat\succ}_0|_{\Delta_A}$, there are $p,q\in\Delta_A$ such that $p\succ_0 q$ and not $p\mathrel{\hat\succ_0} q$, i.e., $p\mathrel{\hat\succsim_0^{-1}} q$.
	The strict part of Pareto optimality of $f$ implies that $p\succ q$.
	This contradicts ${\succ}|_{\Delta_A} = \emptyset$.
	Hence, ${\succ}_0|_{\Delta_A} = {\hat\succ}_0|_{\Delta_A}$.
\end{proof}

\Cref{lem:uniquerelation} is the only part of the proof of \Cref{thm:pcdomain} that requires anonymity. A much weaker condition would also suffice: there has to be $R\in \mathcal{D}^N$, $x,y\in \mathcal U$, $i\in N$, and $f(R)={\succ}$ such that $x \succ_i y$ and $x \sim y$.

Next, we show that intensities of preferences between pure outcomes have to be identical.

\begin{lemma}\label{lem:pcdomain}
	Let $f$ be an anonymous Arrovian SWF on some rich domain $\mathcal D$ with $|\mathcal U|\ge 4$.
	Then, for all ${\succ_0}\in\mathcal D$ and $a,b,c\in \mathcal U$ with $a\succ_0 b$, 
	\begin{enumerate}[label=(\roman*)]
		\item $b \succ_0 c$ implies $\phi_0(a,b) = \phi_0(b,c)$,\label{item:domain1}
		\item $a \succ_0 c$ implies $\phi_0(a,b) = \phi_0(a,c)$, \label{item:domain2}
		\item $c \succ_0 b$ implies $\phi_0(a,b) = \phi_0(c,b)$, and \label{item:domain3}
		\item $c \succ_0 a$ implies $\phi_0(a,b) = \phi_0(c,a)$. \label{item:domain4}
	\end{enumerate}
\end{lemma}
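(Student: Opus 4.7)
The strategy is to prove case~\ref{item:domain1} directly via \Cref{lem:uniquerelation} combined with the richness axioms~\ref{eq:richneutral} and~\ref{eq:richinverse}, and then reduce cases~\ref{item:domain2},~\ref{item:domain3}, and~\ref{item:domain4} to~\ref{item:domain1}.

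For case~\ref{item:domain1}, fix ${\succ_0}\in\mathcal D$ with $a\succ_0 b\succ_0 c$ and set $\alpha=\phi_0(a,b)$, $\beta=\phi_0(b,c)$, $\gamma=\phi_0(a,c)$; here $\alpha,\beta>0$, while $\gamma$ may have any sign. By~\ref{eq:richneutral} the transposition $\pi=(ac)$ gives ${\succ_0^\pi}\in\mathcal D$ with $c\succ_0^\pi b\succ_0^\pi a$, and by~\ref{eq:richinverse} the relation ${\succ_*}:=({\succ_0^\pi})^{-1}\in\mathcal D$ satisfies $a\succ_* b\succ_* c$ with $\phi_*=-\phi_0^\pi$. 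Unwinding $\phi_0^\pi(\pi(x),\pi(y))=\phi_0(x,y)$ on pure outcomes yields $\phi_*(a,b)=\beta$, $\phi_*(b,c)=\alpha$, and $\phi_*(a,c)=\gamma$. Since $\phi_*(a,c)=\gamma=\phi_0(a,c)$, the relations ${\succ_0}$ and ${\succ_*}$ agree ordinally on $\{a,b,c\}$ in all three sign cases for $\gamma$. \Cref{lem:uniquerelation} then yields ${\succ_0}|_{\Delta_{\{a,b,c\}}}={\succ_*}|_{\Delta_{\{a,b,c\}}}$, so by uniqueness of SSB representations up to a positive scalar there is $k>0$ with $\phi_0|_{\{a,b,c\}}=k\,\phi_*|_{\{a,b,c\}}$. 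Reading this off the three pairs gives $\alpha=k\beta$, $\beta=k\alpha$, $\gamma=k\gamma$; multiplying the first two equations forces $k^2=1$, hence $k=1$ and $\alpha=\beta$.

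Case~\ref{item:domain4} is immediate, since $c\succ_0 a\succ_0 b$ is precisely the hypothesis of~\ref{item:domain1} for the triple $(c,a,b)$. Case~\ref{item:domain3} follows from~\ref{item:domain2} applied to ${\succ_0^{-1}}\in\mathcal D$ (which exists by~\ref{eq:richinverse}): there $b\succ_0^{-1} a$ and $b\succ_0^{-1} c$, so~\ref{item:domain2} gives $\phi_0^{-1}(b,a)=\phi_0^{-1}(b,c)$, and $\phi_0^{-1}=-\phi_0$ transfers this to $\phi_0(a,b)=\phi_0(c,b)$. For case~\ref{item:domain2}, given $a\succ_0 b$ and $a\succ_0 c$, apply~\ref{eq:richbottom} to $\succ_0$ with $X=\{a,b,c\}$ to obtain ${\succ_1}\in\mathcal D$ with ${\succ_1}|_{\{a,b,c\}}={\succ_0}|_{\{a,b,c\}}$ and an alternative $d\notin\{a,b,c\}$ such that $a,b,c\succ_1 d$; iterate~\ref{eq:richbottom} once more to supply a fifth alternative $e$ below $\{a,b,c,d\}$ in the tied subcase $b\sim_0 c$. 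Applying case~\ref{item:domain1} to the available strict $3$-chains in ${\succ_1}$ (e.g.,\ $(a,b,c)$, $(a,b,d)$, $(a,c,d)$, $(b,c,d)$ when $b\succ_0 c$, and analogous chains involving $e$ when $b\sim_0 c$), and chaining the resulting equalities of SSB values, collapses to $\phi_1(a,b)=\phi_1(a,c)$, which \Cref{lem:uniquerelation} transfers back to $\phi_0(a,b)=\phi_0(a,c)$.

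The essential insight is that ${\succ_0}$ and $({\succ_0^{(ac)}})^{-1}$ are forced to agree ordinally on $\{a,b,c\}$ while their normalized SSB representations exchange the intensities $\alpha$ and $\beta$; \Cref{lem:uniquerelation} then equates the two, which is the main (and only nontrivial) step. The reductions are routine, with the only mild subtlety being the tied subcase of~\ref{item:domain2}, which is resolved by a second application of~\ref{eq:richbottom}.
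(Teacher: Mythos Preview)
Your argument for \ref{item:domain1} is correct and is the same idea as the paper's: both exploit that $({\succ_0^{(ac)}})^{-1}$ agrees with $\succ_0$ on the pure triple $\{a,b,c\}$, so \Cref{lem:uniquerelation} forces agreement on all of $\Delta_{\{a,b,c\}}$. You read this off as a matrix identity ($\alpha=k\beta$, $\beta=k\alpha$, hence $k=1$); the paper reads it off as the midpoint indifference $b\sim_0 \tfrac12 a+\tfrac12 c$. Your reductions for \ref{item:domain3} and \ref{item:domain4}, and your treatment of the strict subcases of \ref{item:domain2} via one application of \ref{eq:richbottom} and chaining of \ref{item:domain1}, also coincide with the paper's.

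The one genuine difference is the tied subcase $b\sim_0 c$ of \ref{item:domain2}. The paper handles it by a direct Pareto argument on the profile $(\succ_0,(\succ_0^{(bc)})^{-1},\emptyset,\dots)$: collective indifference on $\Delta_{\{a,b,c\}}$ (via anonymity and \Cref{lem:semidecisive}) together with the strict part of Pareto optimality forces $\phi_0(a,b)=\phi_0(a,c)$, using only the three alternatives. You instead iterate \ref{eq:richbottom} a second time with $X=\{a,b,c,d\}$ to manufacture a fifth alternative $e$ and then chain \ref{item:domain1} through $d$ and $e$. This is formally valid---\ref{eq:richbottom} applied with $|X|=4$ already forces a fresh alternative outside $X$, so richness itself guarantees $|\mathcal U|\ge 5$---but it is less economical than the paper's three-alternative argument. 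Incidentally, your own trick from \ref{item:domain1} disposes of this subcase more cleanly than either: since $b\sim_0 c$, the transposition $(bc)$ fixes $\succ_0|_{\{a,b,c\}}$, so \Cref{lem:uniquerelation} gives $\phi_0|_{\{a,b,c\}}=k\,\phi_0^{(bc)}|_{\{a,b,c\}}$; reading off the $(a,b)$ and $(a,c)$ entries yields $\phi_0(a,b)=k\,\phi_0(a,c)$ and $\phi_0(a,c)=k\,\phi_0(a,b)$, hence equality without any auxiliary alternative.
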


\begin{proof}
	Ad \ref{item:domain1}: Since, by \Cref{lem:archcont}, $\succ_0$ satisfies Archimedean continuity, it follows that $b\sim_0 a \lambda c$ for some $\lambda\in(0,1)$.
	Observe that ${\succ}_0^{(ac)}|_{\{a,b,c\}} = {\succ}_0^{-1}|_{\{a,b,c\}}$, where $(ac)$ denotes the permutation that swaps $a$ and $c$ and leaves all other alternatives fixed.
	\Cref{lem:uniquerelation} implies that ${\succ}_0^{(ac)}|_{\Delta_{\{a,b,c\}}} = {\succ}_0^{-1}|_{\Delta_{\{a,b,c\}}}$.
	Hence, we have $b\sim_0 c \lambda a$.
	Convexity of $\succ_0$ then implies that $I(b)$ is convex and hence, $b\sim_0 \nicefrac{1}{2}\, a + \nicefrac{1}{2}\, c$.
	This is equivalent to $\phi_0(a,b) = \phi_0(b,c)$.
	
	Ad \ref{item:domain2}: we distinguish two cases.
	
	\em{Case 1} ($b\sim_0 c$)\em : Consider the preference profile 
	
	\[R = (\succ_0,(\succ_0^{(bc)})^{-1},\emptyset,\dots,\emptyset ) \text{,}\]
	which exists by \ref{eq:richneutral}, \ref{eq:richindiff}, and \ref{eq:richinverse}.
	
	Let ${\succ} = f(R)$. 
	As in the proof of \Cref{lem:uniquerelation}, we get that ${\succ}|_{\Delta_{\{a,b,c\}}} = \emptyset$.
	Without loss of generality, assume that $\phi_0(a,b) = 1$ and $\phi_0(a,c) = \lambda$ for some $\lambda\in(0,1]$.
	Let $p = \nicefrac{1}{2}\, a + \nicefrac{1}{2}\, c$ and $q = \nicefrac{1}{2}\, a + \nicefrac{1}{2}\, b$ and denote by $\phi_1$ and $\phi_2$ the SSB functions representing the preference relations $\succ_0$ and $(\succ_0^{(bc)})^{-1}$, respectively.
	Then, $\phi_1(p,q) = \phi_2(p,q) = \nicefrac{1}{4}\,(1-\lambda)$.
	If $\lambda < 1$, the strict part of Pareto optimality of $f$ implies that $p\succ q$.
	This contradicts ${\succ}|_{\{a,b,c\}} = \emptyset$.
	Hence, $\lambda = 1$.

	\em{Case 2} ($b\succ_0 c$)\em : Assume without loss of generality that $\phi_0(a,b) = 1$. 
	By \ref{item:domain1}, we get $\phi_0(a,b) = \phi_0(b,c) = 1$.
	By \ref{eq:richbottom}, there is $\hat\succ_0\in\mathcal D$ with $a\mathrel{\hat\succ_0} b \mathrel{\hat\succ_0} c,\, a \mathrel{\hat\succ_0} c$, and $c\mathrel{\hat\succ_0} x$ for some $x\in \mathcal U$.
	\Cref{lem:uniquerelation} implies that $\phi_0|_{\{a,b,c\}} = \hat\phi_0|_{\{a,b,c\}}$.
	Hence, it suffices to show that $\hat\phi_0(a,c) = 1$.
	By \ref{item:domain1}, we get that $\hat\phi_0(a,c) = \hat\phi_0(c,x)$ and $\hat\phi_0(b,c) = \hat\phi_0(c,x) = 1$.
	Hence, $\hat\phi_0(a,c) = 1$.
		
	Ad \ref{item:domain3}: The proof is analogous to the proof of \ref{item:domain2}.
	
	Ad \ref{item:domain4}: The proof is analogous to the proof of \ref{item:domain1}.
\end{proof}

\thmpcdomain*

\begin{proof}
	Let ${\succ_0}\in\mathcal D$ and $a,b,c,d\in \mathcal U$ such that $a\succ_0 b$ and $c\succ_0 d$.
	We have to show that $\phi_0(a,b) = \phi_0(c,d)$.
	First assume there are $x\in\{a,b\}$ and $y\in\{c,d\}$ such that $x\succ_0 y$ or $y\succ_0 x$. 
	Then, \Cref{lem:pcdomain} implies that $\phi_0(a,b) = \phi_0(x,y) = \phi_0(c,d)$ or $\phi_0(a,b) = \phi_0(y,x) = \phi_0(c,d)$, respectively.
	Otherwise, $x\sim_0 y$ for all $x\in\{a,b\}$ and $y\in\{c,d\}$.
	This implies that ${\succ_0}|_{\{a,b,c,d\}} = {\succ_0}^{(ac)(bd)}|_{\{a,b,c,d\}}$.
	From \Cref{lem:uniquerelation} it follows that ${\succ_0}|_{\Delta_{\{a,b,c,d\}}} = {\succ_0}^{(ac)(bd)}|_{\Delta_{\{a,b,c,d\}}}$.
	Hence, $\phi_0|_{\{a,b,c,d\}} = \phi_0^{(ac)(bd)}|_{\{a,b,c,d\}}$ and $\phi_0(a,b) = \phi_0(c,d)$.
\end{proof}

\section{Characterization of the Social Welfare Function}
\label{app:swf}

Except for \Cref{thm:swf}, all results in this section only require Pareto indifference rather than Pareto optimality.

The following lemmas show that for all preference profiles $R$ and all alternatives $a$ and $b$, $\phi(a,b)$ only depends on the set of agents who prefer $a$ to $b$, whenever $R$ is from the domain of \pc preferences and $\phi$ represents $f(R)$. 
We first prove that, if an alternative is strictly Pareto dominated, then the intensities of collective preferences between each of the dominating alternatives and the dominated alternative are identical.

\begin{lemma}\label{lem:pareto}
	Let $f$ be an SWF satisfying Pareto indifference and IIA on some rich domain $\mathcal D\subseteq \mathcal{R}^\pc$ with $|\mathcal U|\ge 4$.
	Let $a,b,c\in \mathcal U$ and $R\in\mathcal D^N$ such that $N_{ac}=N_{bc}=N$.

	 Then, $\phi(a,c) = \phi(b,c)$ where $\phi\equiv f(R)$.
\end{lemma}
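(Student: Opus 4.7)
The plan is to reduce to a modified profile $R^*$ in which a fourth alternative $d$ sits strictly below $a$, $b$, and $c$ for every agent, while each agent's preferences on $\{a,b,c\}$ are preserved. Since $|\mathcal U|\ge 4$, I will fix some $d\in \mathcal U\setminus\{a,b,c\}$ once and for all. For each agent $i$, richness~\ref{eq:richbottom} applied to $\succ_i$ with $X=\{a,b,c\}$ yields a preference $\hat\succ_i\in\mathcal D$ agreeing with $\succ_i$ on $\{a,b,c\}$ and placing some $y_i\in \mathcal U\setminus\{a,b,c\}$ strictly below every $x\in\{a,b,c\}$; applying the transposition $(y_i\,d)$ via neutrality~\ref{eq:richneutral}, which fixes $\{a,b,c\}$ pointwise, I obtain $\succ_i^*\in\mathcal D$ with $\succ_i^*|_{\{a,b,c\}}=\succ_i|_{\{a,b,c\}}$ and $x\succ_i^* d$ for every $x\in\{a,b,c\}$. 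Let $R^*=(\succ_i^*)_{i\in N}$ and $\phi^*\equiv f(R^*)$.

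Because $R^*|_{\Delta_{\{a,b,c\}}}=R|_{\Delta_{\{a,b,c\}}}$, IIA with $X=\{a,b,c\}$ forces $f(R)|_{\Delta_{\{a,b,c\}}}=f(R^*)|_{\Delta_{\{a,b,c\}}}$; by uniqueness of SSB representations up to a positive scalar, $\phi|_{\{a,b,c\}}$ and $\phi^*|_{\{a,b,c\}}$ are positive scalar multiples of one another (or both identically zero), so showing $\phi^*(a,c)=\phi^*(b,c)$ will imply the desired $\phi(a,c)=\phi(b,c)$. To establish the starred identity, I will take $p=c$ and $q=\nicefrac{1}{2}\,a+\nicefrac{1}{2}\,d$. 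Since $\succ_i^*\in\mathcal R^\pc$ satisfies both $a\succ_i^* c$ and $c\succ_i^* d$, we have $\phi_i^*(c,a)=-1$ and $\phi_i^*(c,d)=1$, and bilinearity yields
\[\phi_i^*(p,q) \;=\; \nicefrac{1}{2}\,\phi_i^*(c,a)+\nicefrac{1}{2}\,\phi_i^*(c,d)\;=\;0\]
for every $i\in N$. Pareto indifference then delivers $p\sim^* q$, i.e., $\phi^*(a,c)=\phi^*(c,d)$. The same argument with $q'=\nicefrac{1}{2}\,b+\nicefrac{1}{2}\,d$ (using $b\succ_i^* c$ and $c\succ_i^* d$) gives $\phi^*(b,c)=\phi^*(c,d)$, and combining these equalities concludes the proof.

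The main conceptual step is the passage to $R^*$: by uniformly demoting the fourth alternative $d$, all individual PC entries $\phi_i^*(x,d)$ for $x\in\{a,b,c\}$ become $1$, and this is precisely what allows the unanimous indifference $\phi_i^*(c,\nicefrac{1}{2}\,a+\nicefrac{1}{2}\,d)=0$ to hold independently of each agent's ``free'' entry $\phi_i(a,b)$. The main technicality is synchronizing the bottom alternative across agents, which needs both richness~\ref{eq:richbottom} (to supply a dominated alternative per preference) and neutrality~\ref{eq:richneutral} (to align it to a common $d$).
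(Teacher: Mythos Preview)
Your argument is correct and shares the paper's overall strategy: pass via richness to a profile in which a fourth alternative $d$ lies strictly below $a,b,c$ for every agent, use $\phi^*(c,d)$ as a common calibration value to show $\phi^*(a,c)=\phi^*(b,c)$, and transfer back by IIA. You are also more explicit than the paper about why the bottom alternative can be taken to be the \emph{same} $d$ for every agent (R4 only supplies some $y_i$ per agent; you use R1 to align them), which the paper glosses over.

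Where you genuinely diverge is in the key step. The paper argues that on each triple $\{a,c,d\}$ and $\{b,c,d\}$ all agents share the same PC preference, so by Pareto indifference and \Cref{lem:indiff} the collective SSB restriction must be a scalar multiple of that common PC matrix; the shared entry $\phi^*(c,d)$ then forces the two scalars to coincide. You instead exploit the PC structure pointwise: the midpoint $\tfrac{1}{2}a+\tfrac{1}{2}d$ is individually indifferent to $c$ for every agent because $\phi_i^*(c,a)=-1$ and $\phi_i^*(c,d)=1$ cancel, and a single application of Pareto indifference yields $\phi^*(a,c)=\phi^*(c,d)$ directly. This is more elementary---it avoids invoking the Fishburn--Gehrlein lemma entirely---and isolates exactly the two PC entries that matter. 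The paper's route, by contrast, yields the stronger intermediate fact that $\phi^*|_{\{a,c,d\}}$ is globally a scalar multiple of the PC matrix, which is not needed here but is the kind of structural conclusion reused elsewhere in the appendix.
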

	
\begin{proof}
	The idea of the proof is to introduce a fourth alternative, which serves as a calibration device for the intensity of pairwise comparisons, and eventually disregard this alternative using IIA.
	To this end, let $x\in \mathcal U$ and consider a preference profile $\hat R\in\mathcal{D}^N$ such that $R|_{\{a,b,c\}} = \hat R|_{\{a,b,c\}}$ and $\hat N_{ax}=\hat N_{bx}=\hat N_{cx}=N$
	which exists by \ref{eq:richbottom}.
	Let $\hat\phi \equiv f(\hat R)$.
	The Pareto indifference relation with respect to $\hat R|_{\{a,c,x\}}$ is identical to ${\sim_1}|_{\{a,c,x\}}$. 
	The analogous statement holds for the Pareto indifference relation with respect to $\hat R|_{\{b,c,x\}}$. 
	Hence, Pareto indifference, \Cref{lem:indiff}, and IIA imply that there are $\alpha,\beta \in \mathbb{R}$ such that
\[
\hat\phi|_{\{a,c,x\}} = \alpha
	   \begin{pmatrix}
	 	0 & 1 & 1 \\
		-1 & 0 & 1 \\
		-1 & -1 & 0 \\
	\end{pmatrix}
\qquad\text{and}\qquad
\hat\phi|_{\{b,c,x\}} = \beta
	   \begin{pmatrix}
	 	0 & 1 & 1 \\
		-1 & 0 & 1 \\
		-1 & -1 & 0 \\
	\end{pmatrix}
	\text.
\]
	As a consequence, $\alpha=\beta$ and $\hat\phi(a,c) = \hat\phi(b,c)$.
	Since $R|_{\{a,b,c\}} = \hat R|_{\{a,b,c\}}$, \Cref{lem:uniquerelation} and IIA imply that $\phi|_{\{a,b,c\}} \equiv \hat\phi|_{\{a,b,c\}}$.
	Hence, $\phi(a,c) = \phi(b,c)$.\footnote{Pareto optimality also implies that $\phi(a,c),\phi(b,c)>0$.}
\end{proof}

\Cref{lem:mlscheme1} shows that for a fixed preference profile, $\phi(a,b)$ only depends on $N_{ab}$ and $I_{ab}$ (and not on the names of the alternatives).

\begin{lemma}\label{lem:mlscheme1}
	Let $f$ be an SWF satisfying Pareto indifference and IIA on some rich domain $\mathcal D\subseteq \mathcal{R}^\pc$ with $|\mathcal U|\ge 5$, $a,b,c,d\in \mathcal U$, and $R\in\mathcal{D}^N$ such that $N_{ab} = N_{cd}$ and $N_{ba} = N_{dc}$.
	Then, $\phi(a,b) = \phi(c,d)$ where $\phi\equiv f(R)$.
\end{lemma}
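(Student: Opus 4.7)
The plan is to reduce the claim, via the richness condition R4, to an auxiliary profile with a common ``floor'' alternative, and then to exploit \Cref{lem:pareto} together with IIA on several three-element subsets.

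First, invoking R4 with $X=\{a,b,c,d\}$, pick an alternative $e\in\mathcal U\setminus\{a,b,c,d\}$ (which exists since $|\mathcal U|\ge 5$) and construct a profile $\hat R\in\mathcal D^N$ satisfying $\hat R|_{\{a,b,c,d\}}=R|_{\{a,b,c,d\}}$ and $x\,\hat\succ_i e$ for every $x\in\{a,b,c,d\}$ and every $i\in N$. Writing $\hat\phi\equiv f(\hat R)$, \Cref{lem:pareto} applied to each of the pairs $(a,b),(a,c),(b,d),(c,d)$ (together with the commonly dominated $e$) yields a single common value
\[
\alpha := \hat\phi(a,e)=\hat\phi(b,e)=\hat\phi(c,e)=\hat\phi(d,e)>0.
\]
Since $R|_{\{a,b,c,d\}}=\hat R|_{\{a,b,c,d\}}$, IIA together with the scalar uniqueness of SSB representations implies $\phi|_{\{a,b,c,d\}}\equiv\hat\phi|_{\{a,b,c,d\}}$ as $4\times 4$ SSB matrices, so the original claim $\phi(a,b)=\phi(c,d)$ reduces to $\hat\phi(a,b)=\hat\phi(c,d)$.

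The hypothesis $N_{ab}=N_{cd}$ and $N_{ba}=N_{dc}$ asserts precisely that $\hat R|_{\{a,b,e\}}$ and $\hat R|_{\{c,d,e\}}$ are interchanged by the relabeling $\pi=(ac)(bd)$ (with $e$ fixed). I would exploit this by forming the companion profile $\bar R=(\hat R_i^\pi)_{i\in N}$, which lies in $\mathcal D^N$ by R1. A direct check using the hypothesis shows $\bar R|_{\{a,b,e\}}=\hat R|_{\{a,b,e\}}$ and $\bar R|_{\{c,d,e\}}=\hat R|_{\{c,d,e\}}$, so IIA forces the $3\times 3$ SSB submatrices of $\bar\phi\equiv f(\bar R)$ and $\hat\phi$ to be proportional on each of these two subsets. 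Applying \Cref{lem:pareto} to $\bar R$ gives a common value $\bar\alpha>0$ for the four entries $\bar\phi(x,e)$, so the proportionality constant is $\bar\alpha/\alpha$ on both subsets, yielding
\[
\bar\phi(a,b)=\tfrac{\bar\alpha}{\alpha}\,\hat\phi(a,b) \qquad\text{and}\qquad \bar\phi(c,d)=\tfrac{\bar\alpha}{\alpha}\,\hat\phi(c,d).
\]

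The remaining step—and the main obstacle—is to produce an independent equality $\bar\phi(a,b)=\hat\phi(c,d)$ (equivalently $\bar\phi(c,d)=\hat\phi(a,b)$), which combined with the previous display forces $\hat\phi(a,b)=\hat\phi(c,d)$ and closes the proof. This is delicate because $f$ is assumed neither anonymous nor neutral, so the relation $\bar R=\hat R^\pi$ does not on its own yield $\bar\phi=\hat\phi^\pi$. The argument proceeds by chaining IIA applications across further three-element subsets of $\{a,b,c,d,e\}$ on which $\hat R$ and $\bar R$ are related in a controlled way, each time using \Cref{lem:pareto} to fix the scaling, and exploiting $\mathcal D\subseteq\mathcal R^{\pc}$ to recruit enough Pareto indifferences (computed from the PC individual utilities on $\Delta_{\{a,b,e\}}$) to pin down the remaining entries. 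Carrying out this bookkeeping produces the desired equality.
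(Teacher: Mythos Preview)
Your opening move---introducing the floor alternative $e$ via R4 and invoking \Cref{lem:pareto} to equalize the $e$-column---matches the paper exactly. But the second half has a genuine gap.

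The permuted profile $\bar R=\hat R^\pi$ is a dead end. You correctly observe that without neutrality one cannot conclude $\bar\phi=\hat\phi^\pi$, and your paragraph describing ``chaining IIA applications across further three-element subsets'' and ``carrying out this bookkeeping'' does not actually supply an argument. In particular, $\bar R$ and $\hat R$ need not agree on any three-element subset of $\{a,b,c,d,e\}$ other than $\{a,b,e\}$ and $\{c,d,e\}$ (for instance $\bar N_{ac}=\hat N_{ca}$, which has no assumed relation to $\hat N_{ac}$), so there is no further IIA bridge to exploit. Your proposal stops exactly at the hard step.

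The paper's approach is different and is the missing idea: rather than permuting $\hat R$, construct a third profile $\mathring R$ whose restriction to $\{a,b,c,d\}$ consists of a \emph{single} PC relation (namely $a\succ b\succ c\succ d$), its inverse, and complete indifference, assigned to the agent groups $N_{ab}$, $N_{ba}$, $I_{ab}$ respectively, with $e$ at the bottom for everyone. Because the hypothesis $N_{ab}=N_{cd}$, $N_{ba}=N_{dc}$ is built into this profile, one still has $\mathring R|_{\{a,b,e\}}=\hat R|_{\{a,b,e\}}$ and $\mathring R|_{\{c,d,e\}}=\hat R|_{\{c,d,e\}}$, so IIA and the common $e$-column calibrate $\mathring\phi$ against $\hat\phi$. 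The payoff is that on $\{a,b,c,d\}$ the profile $\mathring R$ is of the special form to which \Cref{lem:indiff} applies: Pareto indifference forces $\mathring\phi|_{\{a,b,c,d\}}$ to be a scalar multiple of the single PC matrix, hence $\mathring\phi(a,b)=\mathring\phi(c,d)$ directly. This is the step your sketch does not contain.

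Two smaller issues: (i) you assert $\alpha>0$, but under Pareto \emph{indifference} alone \Cref{lem:pareto} gives only equality, not positivity; (ii) you do not treat the case where $a,b,c,d$ are not all distinct, which the paper handles separately at the end.
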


\begin{proof}
	We first prove the case when all of $a,b,c,d$ are distinct.
	Let $e\in \mathcal U$ and consider a preference profile $\hat R\in\mathcal{D}^N$ such that $R|_{\{a,b,c,d\}} = \hat R|_{\{a,b,c,d\}}$ and $\hat N_{ae}=\hat N_{be}=\hat N_{ce}=\hat N_{de}= N$.
	Such a profile exists by~\ref{eq:richbottom}.
	Then, by \Cref{lem:pareto}, we can assume without loss of generality that $\hat\phi(a,e)=\hat\phi(b,e)=\hat\phi(c,e)=\hat\phi(d,e)= \lambda\in\mathbb{R}$.
	Now consider a preference profile $\mathring R\in\mathcal D^N$ such that
	\[
	\mathring R|_{\{a,b,c,d,e\}}=(
	\underbrace{
	\begin{bmatrix}
		a\\b\\c\\d\\e
	\end{bmatrix},\dots
	}_{N_{ab}}\,,
	\underbrace{
	\begin{bmatrix}
		d\\c\\b\\a\\e
	\end{bmatrix},\dots
	}_{N_{ba}}\,,
	\begin{bmatrix}
		a,b,c,d\\e
	\end{bmatrix},\dots)\text,
	\]
	which exists by \ref{eq:richbottom}.
	Note that $\hat R|_{\{a,b,e\}}= \mathring R|_{\{a,b,e\}}$ and $\hat R|_{\{c,d,e\}} = \mathring R|_{\{c,d,e\}}$ because $N_{ab} = N_{cd}$ and $N_{ba} = N_{dc}$ by assumption.
	Now, let $\hat\phi\equiv f(\hat R)$ and $\mathring \phi\equiv f(\mathring R)$.
	Since $\hat R|_{\{a,b,e\}} = \mathring R|_{\{a,b,e\}}$, we have $\hat\phi|_{\{a,b,e\}} \equiv \mathring\phi|_{\{a,b,e\}}$ by IIA.
	Moreover, $\hat R|_{\{c,d,e\}} = \mathring R|_{\{c,d,e\}}$ and IIA yield $\hat\phi|_{\{c,d,e\}} \equiv \mathring\phi|_{\{c,d,e\}}$.
	\Cref{lem:pareto} implies that $\mathring\phi(a,e) = \mathring\phi(b,e) =\mathring\phi(c,e) =\mathring\phi(d,e) =\lambda$ for some $\lambda\in\mathbb{R}$.
	Thus, for some $\mu, \sigma\in\mathbb{R}$, $\mathring\phi$ takes the form
	\[
	\mathring\phi|_{\{a,b,c,d,e\}} =
	\begin{pmatrix}
		0 & \mu &  &  & \lambda\\
		-\mu & 0 & & &  \lambda\\
		& & 0 & \sigma & \lambda\\
		& & -\sigma & 0 & \lambda\\ 
		-\lambda & -\lambda & -\lambda & -\lambda & 0
	\end{pmatrix}\text{.}
	\]
	Note that $\mathring R|_{\{a,b,c,d\}}$ only consists of one fixed preference relation, its inverse, and complete indifference.
	Hence, Pareto indifference and \Cref{lem:indiff} imply that $\mathring\phi|_{\{a,b,c,d\}} = \alpha \mathring\phi_i|_{\{a,b,c,d\}}$ for some $\alpha\in\mathbb R$, $i\in N$, and $\mathring\phi_i\equiv \mathring R_i$.
	As a consequence, we get that $\mu = \sigma$.
	Since $\phi|_{\{a,b,c,d\}} = \hat\phi|_{\{a,b,c,d\}}$, it follows that $\phi(a,b) = \phi(c,d)$.
	
	The cases when $a = c$ and $b = c$ follow from repeated application of the above case.
	All other cases are symmetric to one of the covered cases.
\end{proof}

\begin{lemma}\label{lem:mlscheme2}
	Let $f$ be an SWF satisfying Pareto indifference and IIA on some rich domain $\mathcal D\subseteq \mathcal{R}^\pc$ with $|\mathcal U|\ge 5$, $a,b,c,d\in \mathcal U$, $R,\hat R\in\mathcal{D}^N$, $\phi\equiv f(R)$, and $\hat\phi\equiv f(\hat R)$.
	If $R|_{\{a,b\}} = \hat R|_{\{a,b\}}$ and $R|_{\{c,d\}} = \hat R|_{\{c,d\}}$, then $\phi(a,b) = \alpha \cdot \hat \phi(a,b)$ and $\phi(c,d) = \alpha \cdot \hat \phi(c,d)$ for some $\alpha>0$.
\end{lemma}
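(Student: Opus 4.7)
The plan is to introduce a fifth ``calibration'' alternative $e\in\mathcal U\setminus\{a,b,c,d\}$ and link the scales of $\phi|_{\{a,b,e\}}$ and $\phi|_{\{c,d,e\}}$ across the two profiles via a common Pareto-dominated configuration, mirroring the construction used in the proof of \Cref{lem:mlscheme1}.

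First, using richness condition~\ref{eq:richbottom}, I would replace $R$ by a profile $R^\ast\in\mathcal D^N$ with $R^\ast|_{\{a,b,c,d\}}=R|_{\{a,b,c,d\}}$ in which every agent strictly prefers each of $a,b,c,d$ to $e$, and symmetrically build $\hat R^\ast$ from $\hat R$. Since preferences in $\mathcal R^\pc$ are fully determined by their ordinal comparisons on pure outcomes, we also have $R^\ast|_{\Delta_{\{a,b,c,d\}}}=R|_{\Delta_{\{a,b,c,d\}}}$, so IIA applied to $\{a,b,c,d\}$ yields $\phi^\ast|_{\{a,b,c,d\}}\equiv\phi|_{\{a,b,c,d\}}$ and analogously for $\hat\phi^\ast$. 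Hence establishing the lemma for $(R^\ast,\hat R^\ast)$ suffices, since the conclusion transfers back to $(R,\hat R)$ by a simple rescaling of the SSB representatives.

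In $R^\ast$ the alternative $e$ is Pareto-dominated by each of $a,b,c,d$, so iterated application of \Cref{lem:pareto} gives a single value $\lambda\in\mathbb R$ with $\phi^\ast(a,e)=\phi^\ast(b,e)=\phi^\ast(c,e)=\phi^\ast(d,e)=\lambda$, and analogously $\hat\phi^\ast(a,e)=\cdots=\hat\phi^\ast(d,e)=\hat\lambda$. The hypothesis $R|_{\{a,b\}}=\hat R|_{\{a,b\}}$ together with the common Pareto-domination of $e$ forces $R^\ast|_{\Delta_{\{a,b,e\}}}=\hat R^\ast|_{\Delta_{\{a,b,e\}}}$, and IIA then yields $\phi^\ast|_{\{a,b,e\}}=\alpha_1\hat\phi^\ast|_{\{a,b,e\}}$ for some $\alpha_1>0$; the analogous argument on $\{c,d,e\}$ produces $\alpha_2>0$ with $\phi^\ast|_{\{c,d,e\}}=\alpha_2\hat\phi^\ast|_{\{c,d,e\}}$. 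Reading off the $(x,e)$-entries of these matrix identities gives $\lambda=\alpha_1\hat\lambda=\alpha_2\hat\lambda$, so whenever $\hat\lambda\neq 0$ the two scalars coincide and $\alpha:=\lambda/\hat\lambda$ simultaneously realises $\phi^\ast(a,b)=\alpha\hat\phi^\ast(a,b)$ and $\phi^\ast(c,d)=\alpha\hat\phi^\ast(c,d)$.

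The main obstacle is the degenerate sub-case $\hat\lambda=0$, which by IIA on $\{a,e\}$ forces $\lambda=0$ as well. The calibration equations $\lambda=\alpha_i\hat\lambda$ then carry no information, and the equivalence on $\{a,b,e\}$ collapses to a sign-matching condition between $\phi^\ast(a,b)$ and $\hat\phi^\ast(a,b)$ (similarly for $(c,d)$). If any one of $\phi^\ast(a,b),\hat\phi^\ast(a,b),\phi^\ast(c,d),\hat\phi^\ast(c,d)$ vanishes, the lemma follows immediately with any positive $\alpha$ matching the non-vanishing pair. The remaining sub-case---all four numbers non-zero while $\lambda=\hat\lambda=0$---is the delicate one; I would try to discharge it either by varying the Pareto-extension (using~\ref{eq:richneutral} and~\ref{eq:richinverse}) so that the analogue of $\lambda$ is forced to be non-zero, or by constructing a further auxiliary profile in the style of \Cref{lem:mlscheme1}'s proof to which that lemma applies, thereby tying $\alpha_1$ and $\alpha_2$ together.
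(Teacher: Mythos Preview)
Your approach is essentially identical to the paper's: introduce a fifth alternative $e$, extend both profiles via~\ref{eq:richbottom} so that $e$ is strictly Pareto-dominated by $a,b,c,d$, apply \Cref{lem:pareto} to obtain a common calibration value $\lambda$ on the $e$-column, and then use IIA on $\{a,b,e\}$ and $\{c,d,e\}$ to transfer equality of the $e$-entries into equality of $\mu$ and $\sigma$. The paper does exactly this, writing ``we can choose suitable representatives such that $\phi'(a,e)=\hat\phi'(a,e)=\lambda$'' and then asserting $\phi'|_{\{a,b,e\}}=\hat\phi'|_{\{a,b,e\}}$ (equality, not merely $\equiv$) from IIA.

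You are in fact more careful than the paper about the degenerate sub-case $\lambda=0$: the paper's proof tacitly relies on $\lambda\neq 0$ to force the scaling factor on each $3\times 3$ block to be~$1$, and does not discuss what happens otherwise. So the ``main obstacle'' you flag is not an artifact of your presentation; it is a point the paper simply glosses over (consistent with the footnote in \Cref{lem:pareto} noting that full Pareto optimality would give $\lambda>0$). That said, your first proposed fix---varying the extension via~\ref{eq:richneutral} or~\ref{eq:richinverse}---does not work: inverting preferences just replaces $\lambda$ by $-\lambda$, and permuting alternatives cannot turn a zero into a nonzero value since the configuration ``all agents strictly prefer $x$ to $e$'' is the same up to relabelling. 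Your second idea (an auxiliary profile in the style of \Cref{lem:mlscheme1}) is too vague as stated to assess. If you want to close the gap cleanly under Pareto indifference alone, the natural route is to observe that $\lambda=0$ together with \Cref{lem:mlscheme1} forces the entire function $(G,H)\mapsto\phi(x,y)$ to vanish identically (since every pair $(N_{xy},I_{xy})$ can be realised within a single profile alongside a Pareto-dominated alternative), making the conclusion trivial; alternatively, one may simply note that the downstream application only needs the non-degenerate case.
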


\begin{proof}
	Let $e\in \mathcal U\setminus \{a,b,c,d\}$ and $R',\hat R'\in\mathcal{D}^N$ such that $R'|_{\{a,b,c,d\}} = R|_{\{a,b,c,d\}}$, $\hat R'|_{\{a,b,c,d\}} = \hat R|_{\{a,b,c,d\}}$, and 
	$N'_{ae}=N'_{be}=N'_{ce}=N'_{de}=\hat N'_{ae}=\hat N'_{be}=\hat N'_{ce}=\hat N'_{de}=N$.
	The profiles $R',\hat R'$ exist by \ref{eq:richbottom}.
	By $\phi' \equiv f(R')$ and $\hat\phi' \equiv f(\hat R')$ we denote the corresponding collective SSB functions.
	Since $f$ satisfies IIA, we have that $\phi|_{\{a,b,c,d\}} \equiv \phi'|_{\{a,b,c,d\}}$ and $\hat\phi|_{\{a,b,c,d\}} \equiv \hat\phi'|_{\{a,b,c,d\}}$.
	\Cref{lem:pareto} implies that without loss of generality, $\phi'$ and $\hat\phi'$ take the following form for some $\lambda,\mu,\hat\mu,\sigma,\hat\sigma\in\mathbb{R}$. 
	Note that we can choose suitable representatives such that $\phi'(a,e) = \hat\phi'(a,e) = \lambda$.
	\[
	\phi'|_{\{a,b,c,d,e\}} =
	\begin{pmatrix}
		0 & \mu & &  & \lambda\\
		-\mu & 0 &  &  & \lambda\\
		 & & 0 & \sigma & \lambda\\
		 & & -\sigma & 0 & \lambda\\
		-\lambda & -\lambda & -\lambda & -\lambda & 0
	\end{pmatrix}
	\qquad
	\hat\phi'|_{\{a,b,c,d,e\}} =
	\begin{pmatrix}
		0 & \hat\mu &  &  & \lambda\\
		-\hat\mu & 0 &  &  & \lambda\\
		 & & 0 & \hat\sigma & \lambda\\
		 &  & -\hat\sigma & 0 & \lambda\\
		-\lambda & -\lambda & -\lambda & -\lambda & 0
	\end{pmatrix}
	\]
	Observe that $R'|_{\{a,b,e\}} = \hat R'|_{\{a,b,e\}}$ and $R'|_{\{c,d,e\}} = \hat R'|_{\{c,d,e\}}$ by construction.
	Since $f$ satisfies IIA, we get that $\phi'|_{\{a,b,e\}} = \hat\phi'|_{\{a,b,e\}}$ and $\phi'|_{\{c,d,e\}} = \hat\phi'|_{\{c,d,e\}}$.
	In particular, this means that $\mu = \hat\mu$ and $\sigma = \hat\sigma$.
	Since $\phi|_{\{a,b,c,d\}} \equiv \phi'|_{\{a,b,c,d\}}$ and $\hat\phi|_{\{a,b,c,d\}} \equiv \hat\phi'|_{\{a,b,c,d\}}$, there is $\alpha>0$ as required.
\end{proof}

\Cref{lem:mlscheme2} shows that $\phi(a,b)$ only depends on $N_{ab}$ and $I_{ab}$ and not on $a,b$ or $R$. 
Hence, there is a function $g\colon 2^N\times 2^N \rightarrow\mathbb{R}$ such that $g(N_{ab},I_{ab}) = \phi(a,b)$ for all $a,b\in \mathcal U$ and $R\in\mathcal D^N$ with $\phi \equiv f(R)$.
We now leverage Pareto indifference to show that $f$ is affine utilitarian.

\begin{lemma}\label{lem:mlscheme3}
	Let $f$ be an SWF satisfying Pareto indifference and IIA on some rich domain $\mathcal D\subseteq \mathcal{R}^\pc$ with $|\mathcal U|\ge 5$. 
	Then, $f$ is affine utilitarian.
\end{lemma}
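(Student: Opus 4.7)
My plan combines Pareto indifference, viewed as a kernel condition on bilinear forms, with IIA applied across multi-alternative profiles. First, Pareto indifference gives the standard span-membership statement: if $\phi_i^R(p, q) = 0$ for every $i$, then $\phi_R(p, q) = 0$, so $\phi_R$ lies in the linear span of the family $\{\phi_i^R\}_{i \in N}$ in the space of skew-symmetric bilinear forms on $\mathrm{span}(\Delta) \times \mathrm{span}(\Delta)$. Hence $\phi_R = \sum_i \lambda_i(R)\, \phi_i^R$ for some scalars $\lambda_i(R) \in \mathbb{R}$, though these are generally not unique when the $\phi_i^R$ are linearly dependent.

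Second, I fix the candidate weights via single-agent profiles. Using richness conditions \ref{eq:richindiff} and \ref{eq:richbottom}, for each $i \in N$ pick a profile $R^{(i)}$ in which only agent $i$ has a nontrivial preference and every other agent is completely indifferent. The span argument together with \Cref{lem:indiff} forces $\phi_{R^{(i)}} = w_i\, \phi_i^{R^{(i)}}$ for a uniquely determined $w_i$, and unwinding the earlier identification $\phi_R(a, b) = g(N_{ab}(R), I_{ab}(R))$ shows $w_i = g(\{i\}, N \setminus \{i\})$; \Cref{lem:mlscheme2} makes $w_i$ independent of the particular preference chosen for agent $i$.

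Third, since both $\phi_R$ and $\sum_i w_i \phi_i^R$ are bilinear, it suffices to prove the combinatorial identity
\[
g(S, T) = \sum_{i \in S} w_i - \sum_{i \in N \setminus (S \cup T)} w_i
\]
for every admissible $(S, T)$. I plan to argue by induction on $k = |N \setminus T|$. The cases $k \le 1$ follow immediately from the definition of $w_i$ and the skew-symmetry $g(S, T) = -g(N \setminus (S \cup T), T)$. For the inductive step, given $(S, T)$ with $k \ge 2$, I construct a profile on a three-element subset $\{a, b, c\}$ whose restriction to $(a, b)$ realizes $(S, T)$ and in which every agent outside $T$ ranks $c$ strictly below both $a$ and $b$; then $\phi_R(a, c) = \phi_R(b, c) = g(N \setminus T, T)$, and the span decomposition of $\phi_R|_{\{a, b, c\}}$ combined with the inductive hypothesis on $(a, c)$ and $(b, c)$ leaves a single residual constraint which isolates the increment $g(S, T) - g(S \setminus \{i\}, T \cup \{i\}) = w_i$ for a chosen $i \in S$ and drives the induction through.

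The main obstacle is that the individual forms $\phi_i^R|_{\{a, b, c\}}$ can be linearly dependent, in which case the span decomposition is not a priori unique on this three-element subset. This is where the hypothesis $|\mathcal U| \ge 5$ becomes essential: using \ref{eq:richbottom} I enlarge the profile by two further alternatives $d, e$ placed strictly below $\{a, b, c\}$ for each non-indifferent agent, and by IIA the value of $\phi_R$ on $\Delta_{\{a, b, c\}}$ is unchanged while the additional pairs involving $d$ and $e$ supply enough independent linear constraints to pin down the contribution of each agent as exactly $w_i$. Once the combinatorial identity for $g$ is established, bilinearity extends $\phi_R(a, b) = \sum_i w_i\, \phi_i^R(a, b)$ from pairs of pure outcomes to all of $\Delta \times \Delta$, yielding $f(R) \equiv \sum_i w_i\, \phi_i^R$ as required.
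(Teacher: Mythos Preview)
Your opening step contains a genuine error: Pareto indifference does \emph{not} force $\phi_R$ into the linear span of the $\phi_i^R$ in the SSB setting. The duality you invoke is valid for linear functionals on a vector space, but a skew-symmetric bilinear form is a linear functional on $\Lambda^2\mathbb R^{\mathcal U}$, and Pareto indifference only tests it on \emph{decomposable} elements $p\wedge q$ with $p,q\in\Delta$; for $|\mathcal U|\ge 4$ these do not exhaust $\Lambda^2\mathbb R^{\mathcal U}$. Concretely, take $\{a,b,c,d\}\subseteq\mathcal U$ and five agents whose PC forms on $\{a,b,c,d\}$ are $\phi_1(a,b)=-\phi_1(c,d)=1$, $\phi_2(a,c)=1$, $\phi_3(a,d)=1$, $\phi_4(b,c)=1$, $\phi_5(b,d)=1$ (unlisted entries zero). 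One checks that $\phi_i(p,q)=0$ for all $i$ forces $p=q$, so Pareto indifference is vacuous for this profile and \emph{any} SSB $\phi_R$ satisfies it --- in particular one with $\phi_R(a,b)=\phi_R(c,d)=1$, which lies outside $\mathrm{span}\{\phi_i\}$. This is precisely the Fishburn--Gehrlein obstruction the paper flags in \Cref{sec:related}. Your inductive step explicitly rests on ``the span decomposition of $\phi_R|_{\{a,b,c\}}$'', so it has no foundation; and separately, the pairs you create by placing $d,e$ strictly below $\{a,b,c\}$ for every non-indifferent agent all have $|N\setminus I|=|N\setminus T|=k$, so the inductive hypothesis does not apply to them either, leaving the claim that they ``pin down the contribution of each agent as exactly $w_i$'' unsubstantiated.

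The paper's proof avoids the span question altogether. It works directly with the scalar function $g$ supplied by \Cref{lem:mlscheme1,lem:mlscheme2}, sets $w_G=\tfrac12\bigl(g(N,\emptyset)+g(G,\emptyset)\bigr)$, and obtains the combinatorial identity from two pointwise applications of Pareto indifference: a five-alternative profile with a specific mixed pair $(p,q)$ yielding additivity $w_G+w_{\hat G}=w_{G\cup\hat G}$, and a four-alternative profile with another specific $(p,q)$ yielding $2g(G,\hat G)=g(G,\emptyset)+g(G\cup\hat G,\emptyset)$. The identity you aim for is correct, but it must be extracted via such targeted uses of Pareto indifference together with IIA, not via a per-profile Harsanyi-type argument.
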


\begin{proof}
	Let $R\in \mathcal{D}^N$ and $(\phi_i)_{i\in N}\in \Phi^N$ such that $(\phi_i)_{i\in N}\equiv R$.
	For all $G\subseteq N$, let $w_G = \nicefrac{1}{2}\, (g(N,\emptyset) + g(G,\emptyset))$.
	For convenience, we write $w_i$ for $w_{\{i\}}$.
	Since $\phi(x,y) = g(N_{xy},I_{xy})$ for all $x,y\in \mathcal U$, it suffices to show that
	\begin{equation}
		g(N_{xy},I_{xy}) = \sum_{i\in N} w_i \phi_i(x,y) = \sum_{i\in N_{xy}} w_i - \sum_{i\in N_{yx}} w_i\text,\label{eq:welfare1}
	\end{equation}
	for all $x,y\in \mathcal U$.
	To this end, we will first show that $w_G + w_{\hat G} = w_{G\cup\hat G}$ for all $G,\hat G\subseteq N$ with $G\cap \hat G= \emptyset$.
	Let $G,\hat G$ as above, $a,b,c,x,y\in \mathcal U$, and consider a preference profile $R\in\mathcal{D}^N$ such that
	\[
		R|_{\{a,b,c,x,y\}}=
		(
		\underbrace{
		\begin{bmatrix}
			x\\
			a\\
			b\\
			c\\
			y
		\end{bmatrix}
		,\dots
		}_{G}\,
		,
		\underbrace{
		\begin{bmatrix}
			b\\
			y\\
			c\\
			x\\
			a
		\end{bmatrix}
		,\dots
		}_{\hat G}\,
		,
		\begin{bmatrix}
			c\\
			x\\
			a\\
			y\\
			b
		\end{bmatrix}
		,\dots
		)\text,
	\]	
	which exists by \ref{eq:richbottom}.
	Let $\phi \equiv f(R)$.
	We have that, for $p = \nicefrac{1}{2}\, x + \nicefrac{1}{2}\, y$ and $q = \nicefrac{1}{3}\, a + \nicefrac{1}{3}\, b +\nicefrac{1}{3}\, c$, $\phi_i(p,q) = 0$ for all $i\in N$.
	Pareto indifference implies that $\phi(p,q) = 0$.
	Let $\mu = g(G,\emptyset)$, $\hat\mu = g(\hat G,\emptyset)$, and $\sigma = g(G\cup \hat G,\emptyset)$.
	By definition of $w$,
	\begin{align*}
		w_G + w_{\hat G} &= w_{G\cup\hat G}
	\intertext{ is equivalent to }
	(g(N,\emptyset) + g(G,\emptyset)) + (g(N,\emptyset) + g(\hat G,\emptyset)) &= g(N,\emptyset) + g(G\cup \hat G,\emptyset)\text.
	\end{align*}
	Hence, we have to show that $\mu + \hat\mu + g(N,\emptyset) = \sigma$.
	By definition of $g$, we get that $\phi$ takes the following form. 
	\[
	\phi|_{\{a,b,c,x,y\}} =
	\begin{pmatrix}
		0 &  &  &  -g(N,\emptyset) & -\hat\mu \\
		  & 0 &  & \hat\mu & \sigma\\
		  & & 0 & -\mu & -\hat\mu\\
		g(N,\emptyset) & -\hat\mu & \mu & 0 & \\
		\hat\mu  & -\sigma &  \hat\mu &  & 0
	\end{pmatrix}
	\]
	From $\phi(p,q) = 0$, it follows that $\nicefrac{1}{6}\,(\mu + \hat\mu + g(N,\emptyset) - \sigma) = 0$.
	This proves the desired relationship.

	Now we can rewrite~\eqref{eq:welfare1} as $g(N_{xy},I_{xy}) = w(N_{xy}) - w(N_{yx})$, which, by definition of $w$, is equivalent to
	\begin{equation}
		2g(N_{xy},I_{xy}) = g(N_{xy},\emptyset) - g(N_{yx},\emptyset)\text.\label{eq:welfare3}
	\end{equation}
	To prove~\eqref{eq:welfare3}, let $a,b,x,y\in \mathcal U$ and consider a preference profile $\hat R\in\mathcal{D}^N$ such that
	\[
		\hat R|_{\{a,b,x,y\}}=(
		\underbrace{
		\begin{bmatrix}
			a\\
			x,y\\
			b
		\end{bmatrix}
		,\dots
		}_{G}\,
		,
		\underbrace{
		\begin{bmatrix}
			x\\
			a\\
			b,y\\
		\end{bmatrix}
		,\dots
		}_{\hat G}\,,
		\begin{bmatrix}
			b\\
			x,y\\
			a\\
		\end{bmatrix}
		,\dots
		)\text,
	\]
	which exists by \ref{eq:richindiff}, \ref{eq:richinverse}, and \ref{eq:richbottom}.
	Let $\hat \phi \equiv f(\hat R)$.
	Observe that, for $p = \nicefrac{1}{3} \, x + \nicefrac{2}{3}\, y$ and $q = \nicefrac{1}{2}\, a + \nicefrac{1}{2}\, b$, $p \hat\sim_i q$ for all $i\in N$.
	Pareto indifference implies that $\hat\phi(p,q) = 0$.
	With the same definitions as before and $\epsilon = g(G,\hat G)$, $\hat\phi$ takes the following form.
	\[
	\hat\phi|_{\{a,b,x,y\}} \equiv
	\begin{pmatrix}
		0 & & \mu & \sigma \\
		 & 0 & -\sigma & -\epsilon\\
		-\mu & \sigma & 0 \\
		-\sigma & \epsilon & & 0
	\end{pmatrix}
	\]
	From $\hat\phi(p,q) = 0$, we get that $\nicefrac{1}{6}\,(-\mu + \sigma - 2\sigma + 2\epsilon) = 0$.
	Hence, $2\epsilon = \mu + \sigma$.
	This is equivalent to
	\[
		2g(G,\hat G) = g(G,\emptyset) + g(G\cup\hat G,\emptyset) = g(G,\emptyset) - g(N\setminus(G\cup\hat G),\emptyset)\text,
	\]
	where the last equality follows from skew-symmetry of $\hat\phi$ and the definition of $g$.
	This proves~\eqref{eq:welfare3}.
\end{proof}

Finally, the strict part of Pareto optimality implies that individual weights have to be positive.

\thmswf*

\begin{proof}
	From \Cref{lem:mlscheme3} we know that there are $w_1, \dots, w_n\in\mathbb{R}$ such that, for all $R\in\mathcal D^N$ and $\phi_1,\dots,\phi_n\in \Phi^\pc$ with $(\phi_i)_{i\in N}\equiv R$, $f(R) \equiv \sum_{i\in N} w_i \phi_i$.
	Assume for contradiction that $w_i \le 0$ for some $i\in N$.
	Let $G$ be the set of agents such that $w_i \le 0$ and consider a preference profile $R\in\mathcal{D}^N$ with $a,b\in \mathcal U$ such that
	\[
		R|_{\{a,b\}}=(
		\underbrace{
		\begin{bmatrix}
			a\\
			b
		\end{bmatrix}
		,\dots
		}_{G}\,
		,
		\begin{bmatrix}
			a,b
		\end{bmatrix}
		,\dots
		)\text,
	\]
	which exists by~\ref{eq:richindiff} and~\ref{eq:richbottom}.
	Let $\phi \equiv f(R)$.
	Then, we have that $\phi_i(a,b) = 1$ for all $i\in G$ and $\phi_i(a,b) = 0$ for all $i\in N\setminus G$. 
	Pareto optimality of $f$ implies that $\phi(a,b) > 0$.
	However, we have
	\[
		\phi(a,b) = \alpha \left(\sum_{i\in G} w_i \underbrace{\phi_i(a,b)}_{=1} + \sum_{i\in N\setminus G} w_i \underbrace{\phi_i(a,b)}_{= 0} \right) = \alpha \sum_{i\in G}w_i \le 0
	\]
	for some $\alpha>0$.
	This is a contradiction.
\end{proof}

\section{Characterizations for vNM preferences}
\label{app:dich}

In this section, we consider Arrovian SWFs with domain and range $\mathcal R^\vnm$ instead of $\mathcal R$.
For this case, we characterize both the largest domain of individual preferences that allows for anonymous Arrovian aggregation and the class of Arrovian SWFs on this domain, and thereby provide analogous statements to Theorems~\ref{thm:pcdomain} and~\ref{thm:swf} when restricting $\mathcal R$ to $\mathcal R^\vnm$.

A subdomain of vNM preferences is the domain of dichotomous preferences $\mathcal R^{\dich}=\mathcal R^\vnm \cap \mathcal R^\pc$, where every agent can assign one of only two different vNM utility values, say $0$ and $1$, to every pure outcome.
\Cref{thm:dichdomain} shows that this domain is the largest domain satisfying~\ref{eq:richneutral}, \ref{eq:richindiff}, and~\ref{eq:richinverse} on which anonymous Arrovian SWFs exist.

\begin{theorem}\label{thm:dichdomain}
	Let $f$ be an anonymous Arrovian SWF with range $\mathcal R^\vnm$ on some domain $\mathcal D\subseteq \mathcal R^\vnm$ satisfying~\ref{eq:richneutral}, \ref{eq:richindiff}, and~\ref{eq:richinverse}. Then, $\mathcal D\subseteq\mathcal R^\dich$.
\end{theorem}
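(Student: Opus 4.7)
The plan is to argue by contrapositive: assume some $\succ_0 \in \mathcal D \setminus \mathcal R^\dich$ and derive a contradiction. Since $\succ_0 \in \mathcal R^\vnm$ is not dichotomous, its vNM utility $u$ takes at least three distinct values, so there exist $a,b,c \in \mathcal U$ with $u(a) > u(b) > u(c)$. By \ref{eq:richneutral} and \ref{eq:richinverse} the preferences $\succ_0^{-1}$, $\succ_0^{(ab)}$, $(\succ_0^{(ab)})^{-1}$ and $\succ_0^{(abc)}$ all lie in $\mathcal D$, and by \ref{eq:richindiff} so does $\emptyset$. I will use three two-agent profiles built from these:
\[
R^{\mathrm{i}} = (\succ_0,\succ_0^{-1},\emptyset,\dots),\quad
R^{\mathrm{ii}} = (\succ_0^{(ab)},(\succ_0^{(ab)})^{-1},\emptyset,\dots),\quad
R^{\mathrm{iii}} = (\succ_0,\succ_0^{(abc)},\emptyset,\dots).
\]
The argument will repeatedly exploit the following consequence of IIA together with anonymity: whenever two profiles induce the same \emph{multiset} of individual preferences on $\Delta_X$, the collective restrictions to $\Delta_X$ must agree (permute agents via anonymity to match IIA's pointwise hypothesis).

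The first step will pin down $f(R^{\mathrm{i}}) = \emptyset$. Pareto indifference gives $\sim_0 \subseteq \hat\sim$ for $\hat\succ = f(R^{\mathrm{i}})$ (the two non-trivial agents share the indifference relation $\sim_0$, the rest are trivially indifferent everywhere), so \Cref{lem:indiff} yields $f(R^{\mathrm{i}}) \in \{\succ_0,\succ_0^{-1},\emptyset\}$; the same reasoning gives $f(R^{\mathrm{ii}}) \in \{\succ_0^{(ab)},(\succ_0^{(ab)})^{-1},\emptyset\}$. A direct computation will show that on each of the pairs $\{a,b\}$ and $\{a,c\}$ the profiles $R^{\mathrm{i}}$ and $R^{\mathrm{ii}}$ induce the same multiset of individual preferences (one agent prefers one alternative, one prefers the other, everyone else is indifferent), so the multiset consequence above forces $f(R^{\mathrm{i}})$ and $f(R^{\mathrm{ii}})$ to coincide on $\Delta_{\{a,b\}}$ and on $\Delta_{\{a,c\}}$. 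If $f(R^{\mathrm{i}}) = \succ_0$ these restrictions would be $a \succ b$ and $a \succ c$, but no candidate for $f(R^{\mathrm{ii}})$ realizes both: $\succ_0^{(ab)}$ gives $b \succ a$ on $\{a,b\}$, $\emptyset$ gives indifference, and $(\succ_0^{(ab)})^{-1}$ gives $a \succ b$ on $\{a,b\}$ but $c \succ a$ on $\{a,c\}$. A mirror argument using the same $R^{\mathrm{ii}}$ will rule out $f(R^{\mathrm{i}}) = \succ_0^{-1}$, leaving $f(R^{\mathrm{i}}) = \emptyset$.

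The second step will extract the contradiction from $R^{\mathrm{iii}}$. Computing the pair restrictions, I will verify that $R^{\mathrm{iii}}$ and $R^{\mathrm{i}}$ induce the same multisets on both $\Delta_{\{a,b\}}$ and $\Delta_{\{a,c\}}$ (agent~2's utility under $\succ_0^{(abc)}$ sends $a$ to $u(c)$ and $b$ to $u(a)$, so $b \succ a$; sends $c$ to $u(b)$, so $c \succ a$), hence by the multiset consequence $f(R^{\mathrm{iii}})|_{\Delta_{\{a,b\}}} = f(R^{\mathrm{iii}})|_{\Delta_{\{a,c\}}} = \sim$. Transitivity of $f(R^{\mathrm{iii}}) \in \mathcal R^\vnm$ (via any utility representation $\hat u$) then propagates these to $f(R^{\mathrm{iii}})|_{\Delta_{\{b,c\}}} = \sim$. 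However, $R^{\mathrm{iii}}|_{\Delta_{\{b,c\}}}$ has both agents $1$ and $2$ strictly preferring $b$ to $c$ (since $u(b) > u(c)$ and $u^{(abc)}(b) = u(a) > u(b) = u^{(abc)}(c)$) with all other agents indifferent, so strict Pareto optimality forces $b \succ c$ in $f(R^{\mathrm{iii}})$, contradicting the derived indifference. The main obstacle is the first step, since Pareto alone cannot exclude $\succ_0$ or $\succ_0^{-1}$ as outputs for $R^{\mathrm{i}}$ (no pair of lotteries is Pareto-dominated on this profile); the trick will be the double cross-check against the letter-swapped profile $R^{\mathrm{ii}}$, exploiting that the transposition $(ab)$ flips the individual preferences on $\{a,b\}$ while preserving them on $\{a,c\}$.
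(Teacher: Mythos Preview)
Your proof is correct and follows the same overall architecture as the paper's: first establish collective indifference on ``1-versus-1'' pairs, then use a profile where two agents agree on one pair and disagree on the other two, so that vNM transitivity forces indifference on the agreement pair, contradicting Pareto optimality. Your $R^{\mathrm{iii}}$ plays exactly the role of the paper's profile $\hat R = (\succ_1,\succ_2,\emptyset,\dots)$ with $a\succ_1 b\succ_1 c$ and $c\succ_2 a\succ_2 b$, just with a different cyclic shift (your Pareto pair is $\{b,c\}$ rather than $\{a,b\}$).

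Where you differ is the first step. The paper proves collective indifference on any 1-versus-1 pair by invoking \Cref{lem:semidecisive} (field expansion): if the collective sided with agent~1 on $\{x,y\}$, decisiveness would propagate to all pairs and contradict anonymity applied to the swapped profile. You bypass field expansion entirely: you apply \Cref{lem:indiff} to $R^{\mathrm{i}}$ and $R^{\mathrm{ii}}$ to reduce $f(R^{\mathrm{i}})$ to three candidates, and then eliminate the two nontrivial ones by the cross-check against $R^{\mathrm{ii}}$, exploiting that the transposition $(ab)$ flips the $\{a,b\}$ comparison while preserving the $\{a,c\}$ comparison. Your approach is more self-contained (it does not require the intermediate decisiveness machinery) and in fact yields the stronger conclusion $f(R^{\mathrm{i}}) = \emptyset$ on all of $\Delta$, not just pairwise indifference; the paper's route, on the other hand, establishes indifference for \emph{any} pair of 1-versus-1 preferences in $\mathcal D$, not just for the specific $\succ_0$ versus $\succ_0^{-1}$, which is slightly more robust if one were to weaken the richness assumptions.
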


\begin{proof}
	If $|\mathcal U|\le 2$, then $\mathcal R^\vnm = \mathcal R^\dich$, which immediately implies the statement of the theorem.
	So consider the case that $|\mathcal U|\ge 3$ and assume for contradiction that $\mathcal D\not\subseteq\mathcal R^\dich$, i.e., there is ${\pref_0}\in\mathcal D$ such that $a\pref_0 b\pref_0 c$ for some $a,b,c\in \mathcal U$.
	Observe that \Cref{lem:semidecisive} holds for $\mathcal D$, since $\mathcal D$ satisfies~\ref{eq:richneutral},~\ref{eq:richindiff}, and~\ref{eq:richinverse} and contains a preference relation with three indifference classes on pure outcomes, e.g., $\pref_0$.
	Hence, for all $G,H\subseteq N$ and $x,y\in \mathcal U$, if $(G,H)$ is decisive for $x$ against $y$, then $(G,H)$ is decisive for all pairs of alternatives, i.e., $x\mathrel{D_{G,H}} y$ implies $D_{G,H} = \mathcal U\times\mathcal U$.
	
	Now let $x,y\in \mathcal U$ and ${\pref_0^x},{\pref_0^y}\in\mathcal D$ such that $x\pref_0^x y$ and $y\pref_0^y x$, which exist by ${\pref_0}\in\mathcal D$ and~\ref{eq:richneutral}, and consider the preference profiles
	\[
		R = ({\pref_0^x},{\pref_0^y},\emptyset,\dots,\emptyset) \text{ and } \bar R = ({\pref_0^y},{\pref_0^x},\emptyset,\dots,\emptyset) \text,
	\]
	which exist by~\ref{eq:richindiff}.
	Let ${\pref} = f(R)$, ${\bar\pref} = f(\bar R)$ and observe that, by anonymity of $f$, ${\pref} = \bar\pref$.
	If $x\pref y$ or $y\pref x$, then \Cref{lem:semidecisive} implies that $y\mathrel{\bar\pref} x$ or $x\mathrel{\bar\pref} y$, respectively, which contradicts ${\pref} = \bar\pref$.
	Hence, $x\sim y$.
	
	Lastly, let ${\pref_1},{\pref_2}\in\mathcal D$ such that $a\pref_1 b\pref_1 c$ and $c\pref_2 a\pref_2 b$, which exist by ${\pref_0}\in\mathcal D$ and~\ref{eq:richneutral}, and consider the preference profile 
	\[
		\hat R = ({\pref_1},{\pref_2},\emptyset,\dots,\emptyset)\text,
	\]
	which exists by~\ref{eq:richindiff}.
	Let ${\hat\pref} = f(\hat R)$.
	Since $f$ satisfies IIA, it follows from what we have shown above that $a\mathrel{\hat\sim} c$ and $b\mathrel{\hat\sim} c$.
	Together with the fact that $\hat\pref\in \mathcal R^\vnm$, this implies $a\mathrel{\hat \sim} b$.
	However, since $f$ satisfies Pareto optimality, we have $a\mathrel{\hat\pref} b$, which is a contradiction.
\end{proof}

Secondly, we characterize the class of Arrovian SWFs on domains of dichotomous preferences.
To this end, we need to make a richness assumption for domains $\mathcal D\subseteq \mathcal R^\dich$, which prescribes that all dichotomous preferences relations on any set of up to four alternatives are possible.
\[
	\text{For all } {\hat\pref}\in\mathcal R^\dich\text{ and }X\subseteq\mathcal U\text{, }|X|\le 4 \text{, there is } {\pref}\in\mathcal D \text{ such that } {\pref}|_X = {\hat\pref}|_X\text.
	\tag{R5}
	\label{eq:richdich}
\]

We show that every Arrovian SWF on subdomains of $\mathcal R^\dich$ satisfying~\ref{eq:richdich} is affine utilitarian with positive weights.
Similar to \Cref{lem:mlscheme3}, we first prove that affine utilitarian SWFs are the only SWFs satisfying Pareto indifference and IIA.

\begin{lemma}\label{lem:dichmlvnm}
	Let $f$ be an SWF satisfying Pareto indifference and IIA with range $\mathcal R^{\vnm}$ on some domain $\mathcal D\subseteq\mathcal R^{\dich}$ satisfying~\ref{eq:richdich} with $|\mathcal U|\ge 4$.
	Then, $f$ is affine utilitarian.
\end{lemma}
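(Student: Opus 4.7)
The plan is to adapt the three-step strategy of Appendix~\ref{app:swf} (Lemmas~\ref{lem:pareto}--\ref{lem:mlscheme3}) to the dichotomous vNM setting. Two features simplify matters: every collective SSB function factors as $\phi(p,q)=u(p)-u(q)$ for a linear $u\colon\mathcal U\to\mathbb R$, and every individual preference is encoded by an approval set $A_i\subseteq\mathcal U$ with normalized utility $\chi_{A_i}$ (when $A_i\notin\{\emptyset,\mathcal U\}$; otherwise $\phi_i\equiv 0$). The goal is to produce universal weights $w_1,\dots,w_n\in\mathbb R$ such that for every profile $R\in\mathcal D^N$, $u_R\equiv\sum_i w_i\chi_{A_i}$ modulo positive affine transformation---exactly the affine utilitarian form.

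The first step mirrors Lemmas~\ref{lem:pareto} and~\ref{lem:mlscheme1}. The analogue of Lemma~\ref{lem:pareto} is immediate in the dichotomous setting: if $N_{ac}=N_{bc}=N$, then every agent approves both $a$ and $b$ but not $c$, so $a\sim_i b$ for all $i$, and Pareto indifference gives $u(a)=u(b)$, hence $\phi(a,c)=\phi(b,c)$. For the analogue of Lemma~\ref{lem:mlscheme1}, given distinct $a,b,c,d$ with $N_{ab}=N_{cd}$ and $N_{ba}=N_{dc}$, I would use \ref{eq:richdich} to construct a profile $\tilde R$ agreeing with $R$ on $\{a,b\}$ and $\{c,d\}$ such that $\tilde A_i\cap\{a,b,c,d\}=\{a,c\}$ for $i\in N_{ab}$, $=\{b,d\}$ for $i\in N_{ba}$, and $\in\{\emptyset,\{a,b,c,d\}\}$ for $i\in I_{ab}$. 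Every agent is then indifferent between $p=\frac{1}{2}a+\frac{1}{2}d$ and $q=\frac{1}{2}b+\frac{1}{2}c$, so Pareto indifference yields $\tilde u(a)-\tilde u(b)=\tilde u(c)-\tilde u(d)$. Transporting this to $R$ via IIA applied to suitable three-element restrictions, on which vNM preferences are determined by utility ratios rather than just signs, yields $\phi(a,b)=\phi(c,d)$. The cross-profile analogue of Lemma~\ref{lem:mlscheme2} follows from a similar calibration, so there is a function $g\colon 2^N\times 2^N\to\mathbb R$ with $\phi(a,b)=g(N_{ab},I_{ab})$.

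The second step mirrors Lemma~\ref{lem:mlscheme3}: derive that $g(S,T)=\sum_{i\in S}w_i-\sum_{i\in N\setminus(S\cup T)}w_i$ for some weights $w_i$. Setting $w_G:=\frac{1}{2}(g(N,\emptyset)+g(G,\emptyset))$, I would construct profiles on four alternatives in which approval sets of size two (for instance $A_i=\{a,b\}$ on $G$, $A_i=\{c,d\}$ on $\hat G$, and a tailored choice on $N\setminus(G\cup\hat G)$) render pairs such as $\frac{1}{2}(a+d)$ and $\frac{1}{4}(a+b+c+d)$ Pareto indifferent. Translating the resulting linear identities among utility values into relations among $g$-values via Step~1 yields the additivity identity $w_G+w_{\hat G}=w_{G\cup\hat G}$ for disjoint $G,\hat G$, together with a companion identity $2g(G,H)=g(G,\emptyset)-g(N\setminus(G\cup H),\emptyset)$. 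Combining them gives $\phi(a,b)=\sum_{i\in N_{ab}}w_i-\sum_{i\in N_{ba}}w_i$, i.e., affine utilitarianism.

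The main obstacle is dimensional: the PC proof of Lemma~\ref{lem:mlscheme3} relies on five-alternative Pareto-indifference constructions, whereas the present lemma only assumes $|\mathcal U|\ge 4$, so the additivity step must be re-engineered within four alternatives. I expect this to be achievable by exploiting the richer indifference structure of dichotomous preferences---specifically, by using approval sets of intermediate size so that a single agent is simultaneously indifferent on several pairs---though the auxiliary profiles must be chosen with care. A secondary subtlety is that IIA on a two-element set is almost vacuous for vNM preferences (preserving only the sign of the utility difference), so cross-profile identifications must be routed through three- or four-element restrictions where intensity ratios are encoded. The richness condition~\ref{eq:richdich}, guaranteeing arbitrary dichotomous preferences on any four-alternative subset, is exactly tailored to support these constructions.
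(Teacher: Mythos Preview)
Your plan and the paper's proof take genuinely different routes, and the plan has real gaps.

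The paper does \emph{not} replicate the PC template (Lemmas~\ref{lem:pareto}--\ref{lem:mlscheme3}). Its central device is \emph{Harsanyi's Social Aggregation Theorem}: because both individual and collective preferences lie in $\mathcal R^{\vnm}$, Pareto indifference alone forces, on any \emph{single} profile $\tilde R$, that $\tilde\phi\equiv\sum_i v_i\tilde\phi_i$ for some profile-specific weights $v_i$. The proof fixes a calibration group $G^\ast$ once and for all, defines $w_G$ as the ratio $\phi^G(b,c)/\phi^G(a,c)$ in a tailored profile $R^G$, and obtains additivity $w_G+w_{\hat G}=w_{G\cup\hat G}$ directly from Harsanyi applied to a single four-alternative profile (where the linear relation $\tilde\phi(a,d)=\tilde\phi(b,d)+\tilde\phi(c,d)$ falls out of $\tilde\phi=\sum_i v_i\tilde\phi_i$). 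IIA is used only to transport ratios across profiles via three-element restrictions, exploiting the vNM identity $\phi(x,z)=\phi(x,y)+\phi(y,z)$. No function $g$ is defined, and no within-profile analogue of Lemma~\ref{lem:mlscheme1} is needed.

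Your Step~1 transport is not substantiated. Your auxiliary profile $\tilde R$ agrees with $R$ only on the \emph{pairs} $\{a,b\}$ and $\{c,d\}$; for $i\in N_{ab}$ you force $c\in\tilde A_i$, but in $R$ agent $i$ may or may not approve $c$, so $\tilde R|_{\{a,b,c\}}\neq R|_{\{a,b,c\}}$ in general. Since IIA on two-element sets fixes only signs, you cannot recover the ratio $\phi(a,b)/\phi(c,d)$ from $\tilde\phi(a,b)=\tilde\phi(c,d)$ without a three-element link that your construction does not provide.

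Your Step~2 is the more serious gap, and you acknowledge it. The five-alternative profile in Lemma~\ref{lem:mlscheme3} is essential there; your proposed four-alternative substitute (approval sets $\{a,b\}$ on $G$, $\{c,d\}$ on $\hat G$) makes every agent \emph{indifferent} on $\{a,b\}$, so the resulting identity says nothing about $g(G,\emptyset)$ and cannot yield additivity. The paper sidesteps this entirely: Harsanyi delivers additivity inside one four-alternative profile, which is precisely why $|\mathcal U|\ge 4$ suffices here while the PC argument needs $|\mathcal U|\ge 5$. This is the missing idea in your outline.
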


\begin{proof}
	The proof is structured as follows.
	We start by defining a function $w$ that assigns a weight to every \emph{set} of agents based on the output of $f$ for specific profiles.
	It will turn out that $f$ is affine utilitarian for the weights that $w$ assigns to singleton sets.
	To prove this, we show that $w$ is additive, i.e., the weight of the union of two disjoint sets of agents is equal to the sum of the weights of both sets.

	First observe that, for all $R\in\mathcal D^N$, $f(R)\equiv\phi$, and $x,y,z\in \mathcal U$, we have
	\begin{align}
		\phi(x,z) = \phi(x,y) + \phi(y,z)\text,\label{eq:trans}
	\end{align}
	since $\phi\in\mathcal R^{\vnm}$ by assumption.
	If $f(R)\equiv 0$ for all $R\in\mathcal D^N$, we can choose $w_i = 0$ for all $i\in N$.
	Otherwise, there is $\hat R\in \mathcal D^N$ such that $f(\hat R)\equiv\hat\phi\neq 0$.
	Let $a,b,c,d\in \mathcal U$ be four distinct alternatives.
	We may assume without loss of generality that $\hat\phi(a,b)\neq 0$.
	Let $\bar R\in\mathcal D^N$, $f(\bar R)\equiv \bar\phi$, such that the preferences between $a$ and $b$ are as in $\hat R$ and both $a$ and $b$ are weakly preferred to $c$ by all agents. Formally, 
	$\bar R|_{\{a,b\}} = \hat R|_{\{a,b\}}$, $\bar N_{ac} = \hat N_{ab}$, $\bar N_{bc} = \hat N_{ba}$, and $\bar N_{ca} = \bar N_{cb} = \emptyset$, i.e.,
		\[
			\bar R|_{\{a,b,c\}}=(
			\underbrace{
			\begin{bmatrix}
				a\\
				b,c
			\end{bmatrix}
			,\dots
			}_{\hat N_{ab}}\,
			,
			\underbrace{
			\begin{bmatrix}
				b\\
				a,c
			\end{bmatrix}
			,\dots
			}_{\hat N_{ba}}\,
			,
			\begin{bmatrix}
				a,b,c
			\end{bmatrix}
			,\dots
			)\text.
		\]
		The profile $\bar R$ exists by~\ref{eq:richdich}.
	Since $f$ satisfies IIA, it follows that $\bar\phi(a,b)\neq 0$.
	By~\eqref{eq:trans}, we have that $\bar\phi(a,b) = \bar\phi(a,c) + \bar\phi(c,b)\neq 0$.
	Hence, by skew-symmetry of $\bar\phi$, either $\bar\phi(a,c)\neq 0$ or $\bar\phi(b,c)\neq 0$.
	Without loss of generality, we may assume that $\bar\phi(a,c)\neq 0$.
	Let $G^\ast = \bar N_{ac}\neq\emptyset$ and note that $\bar I_{ac} = N\setminus G^\ast$, since $\bar N_{ca} = \emptyset$.
	The set of agents $G^\ast$ will remain fixed for the rest of this proof. 
	Since $\bar\phi(a,c)\neq0$ and $f$ satisfies IIA, $G^\ast$ can be used to calibrate the utility values across different profiles.
	Based on $G^\ast$, we will now construct a function $w$ that assigns a weight to every set of agents.
	For every $G\subseteq N$, let $R^G\in\mathcal D^N$, $f(R^G)\equiv\phi^G$ such that $R^G|_{\{a,c\}} = \bar R|_{\{a,c\}}$, $N^G_{bc} = G$, and $I^G_{bc} = N\setminus G$.
	Hence,
	\[
		R^G|_{\{a,b,c\}}=(
		\lefteqn{
		\underbrace{
		\phantom{
			\begin{bmatrix}
				a\\
				b,c
			\end{bmatrix}
		,\dots\,
		,
			\begin{bmatrix}
				a,b\\
				c
			\end{bmatrix}
		,\dots
		}
		}_{G^\ast}
		}
			\begin{bmatrix}
				a\\
				b,c
			\end{bmatrix}
		,\dots,
		\overbrace{
			\begin{bmatrix}
				a,b\\
				c
			\end{bmatrix}
		,\dots,
			\begin{bmatrix}
				b\\
				a,c
			\end{bmatrix}
		,\dots
		}^{G}\,
		,
		\begin{bmatrix}
			a,b,c
		\end{bmatrix}
		,\dots
		)\text.
	\]
	The profiles $R^G$ exist by~\ref{eq:richdich}.
	Let
	\begin{align}
		w_G = \frac{\phi^G(b,c)}{\phi^G(a,c)}\text.\label{eq:wg}
	\end{align}
	Since $R^G|_{\{a,c\}} = \bar R|_{\{a,c\}}$ and $f$ satisfies IIA, we have that $\phi^G(a,c)\neq 0$ and hence, $w_G$ is well-defined.
	Intuitively, $w_G$ is the weight of the agents in $G$ relative to the weight of the agents in $G^\ast$. 
	
	We will first show that $w_G$ is independent of the choice of $a,b,c$ and $R^G$.
	To this end, let $R\in\mathcal D^N$, $f(R)\equiv \phi$, and $x,y,z\in \mathcal U$ (not necessarily distinct from $a,b,c$) such that $R^{(xa)(yb)(zc)}|_{\{a,b,c\}} = R^G|_{\{a,b,c\}}$, where $(xa)(yb)(zc)$ is the permutation that swaps $x$ with $a$, $y$ with $b$, and $z$ with $c$.
	We first consider the case that $x = a$, $y = b$, and $z\in \mathcal U\setminus\{a,b,c\}$.
	Let $R'\in\mathcal D^N$, $f(R')\equiv\phi'$, such that $R'|_{\{a,b,c\}} = R^G|_{\{a,b,c\}}$ and $R'|_{\{a,b,z\}} = R|_{\{a,b,z\}}$, which exists by~\ref{eq:richdich}.
	Since $f$ satisfies IIA, we have that $\phi'|_{\{a,b,c\}} \equiv \phi^G|_{\{a,b,c\}}$ and $\phi'|_{\{a,b,z\}} \equiv \phi|_{\{a,b,z\}}$.
	Note that, by the choice of $R$ and $R^G$, we have that $I'_{cz} = N$.
	Since $f$ satisfies Pareto indifference, it follows that $\phi'(c,z) = 0$.
	Using the definition of $\phi'$ for the first and the third equality and~\eqref{eq:trans} for the second equality, we get
	\[
		\frac{\phi(b,z)}{\phi(a,z)} = \frac{\phi'(b,z)}{\phi'(a,z)} = \frac{\phi'(b,c)}{\phi'(a,c)} = \frac{\phi^G(b,c)}{\phi^G(a,c)} = w_G\text.
	\]
	Repeated application of this case yields the desired statement for arbitrary $x,y,z$.
	
	Next we show that $w$ is additive, i.e., for all $G,\hat G\subseteq N$ with $G\cap \hat G= \emptyset$,
	\begin{align}
		w_G + w_{\hat G} = w_{G\cup \hat G}\text.\label{eq:additive}
	\end{align}
	To this end, let $R\in\mathcal D^N$, $f(R)\equiv\phi$, such that $N_{bd} = G$, $N_{cd} = \hat G$, $N_{ad} = G^\ast$, and $N_{da} = N_{db} = N_{dc} = \emptyset$, i.e., $d$ is least preferred among $\{a,b,c,d\}$ by all agents.
	The profile $R$ exists by~\ref{eq:richdich}.
	By \eqref{eq:wg}, we have that $\phi(b,d) = w_G\phi(a,d)$ and $\phi(c,d) = w_{\hat G}\phi(a,d)$.
	Moreover, let $\bar R\in\mathcal D^N$, $f(\bar R)\equiv \bar\phi$, such that $\bar N_{bd} = G$, $\bar N_{ad} = G\cup\hat G$, $\bar N_{cd} = G^\ast$, and $\bar N_{da} = \bar N_{db} = \bar N_{dc} = \emptyset$, which exists by~\ref{eq:richdich}.
	By \eqref{eq:wg}, we have that $\bar \phi(b,d) = w_G\bar \phi(c,d)$ and $\bar \phi(a,d) = w_{G\cup \hat G}\bar \phi(c,d)$.
	Let $\tilde R\in\mathcal D^N$, $f(\tilde R) \equiv \tilde\phi$, such that $\tilde R|_{\{b,c,d\}} = R|_{\{b,c,d\}}$ and $\tilde R|_{\{a,b,d\}} = \bar R|_{\{a,b,d\}}$, which completely determines $\tilde R|_{\{a,b,c,d\}}$, since $\mathcal D \subseteq \mathcal R^\dich$.
	The profile $\tilde R$ exists by~\ref{eq:richdich} and is depicted below.
		\[
			\tilde R|_{\{a,b,c,d\}}=(
			\underbrace{
			\begin{bmatrix}
				a,b\\
				c,d
			\end{bmatrix}
			,\dots}_{G}\,
	,
			\underbrace{
			\begin{bmatrix}
				a,c\\
				b,d
			\end{bmatrix}
			,\dots}_{\hat G}\,
			, 
			\begin{bmatrix}
				a,b,c,d
			\end{bmatrix}
			,\dots
			)
		\]
	Since $f$ satisfies IIA, we have that $\tilde\phi|_{\{b,c,d\}}\equiv\phi|_{\{b,c,d\}}$.
	Hence, there is $\alpha > 0$ such that $\tilde\phi(b,d) = \alpha w_G$ and $\tilde\phi(c,d) = \alpha w_{\hat G}$.
	Again, since $f$ satisfies IIA, we have that $\tilde\phi|_{\{a,b,d\}}\equiv\bar\phi|_{\{a,b,d\}}$.
	Hence, there is $\beta > 0$ such that $\tilde\phi(b,d) = \beta w_G$ and $\tilde\phi(a,d) = \beta w_{G\cup \hat G}$.
	Since $\mathcal D \subseteq \mathcal R^\dich\subseteq\mathcal R^\vnm$, $f(\tilde R)\in \mathcal R^{\vnm}$, and $f$ satisfies Pareto indifference, \citeauthor{Hars55a}'s Social Aggregation Theorem \citeyearpar{Hars55a} implies that there are $v_i\in\mathbb R$ for all $i\in N$ such that $\tilde\phi\equiv \sum_{i\in N} v_i\tilde\phi_i$ where $(\tilde\phi_i)_{i \in N}\equiv \tilde R$.
	Thus,
	\[
		\tilde\phi(a,d) = \sum_{i\in G\cup \hat G} v_i = \sum_{i\in G} v_i + \sum_{i\in\hat G} v_i = \tilde\phi(b,d) + \tilde\phi(c,d)\text.
	\]
	If $\tilde\phi(b,d) = \tilde\phi(c,d) = 0$, then $\tilde\phi(a,d) = 0$ and $w_G = w_{\hat G} = w_{G\cup\hat G} = 0$.
	In particular, $w_{G\cup \hat G} = w_G + w_{\hat G}$ and we are done.
	Otherwise, we may assume without loss of generality that $\tilde\phi(b,d)\neq 0$.
	This implies that $\alpha = \beta$.
	Hence, we have that
	\[
		\alpha w_{G\cup\hat G} = \tilde\phi(a,d) = \tilde\phi(b,d) + \tilde\phi(c,d) = \alpha(w_G + w_{\hat G}), 
	\]
	which proves the desired statement.
	
	Hence, to prove that $f$ is affine utilitarian with weights $w_i = w_{\{i\}}$ for all $i\in N$, it suffices to show that, for all $R\in\mathcal D^N$, $f(R) \equiv \phi$, there is $\alpha > 0$ such that $\phi(x,y) = \alpha (w_{N_{xy}} - w_{N_{yx}})$ for all $x,y\in \mathcal U$.
	To this end, we first show that $\phi(x,y) = 0$ if and only if $w_{N_{xy}} - w_{N_{yx}} = 0$.
	Without loss of generality, assume that $\{x,y\}\cap \{a,b\} = \emptyset$ and let $\bar R\in\mathcal D^N$ such that $\bar N_{xb} = N_{xy}$, $\bar N_{yb} = N_{yx}$, $\bar N_{ab} = G^\ast$, and $\bar N_{ba} = \bar N_{bx} = \bar N_{by}=\emptyset$, which exists by~\ref{eq:richdich}.
	Note that this implies $\bar R_{\{x,y\}} = R_{\{x,y\}}$ since $\mathcal D\subseteq\mathcal R^\dich$.
	Applying \eqref{eq:wg} and \eqref{eq:trans} for the first and second equality, respectively, we have that
	\[
		w_{N_{xy}} - w_{N_{yx}} = \frac{\bar\phi(x,b)}{\bar\phi(a,b)} - \frac{\bar\phi(y,b)}{\bar\phi(a,b)} = \frac{\bar\phi(x,y)}{\bar\phi(a,b)}\text.
	\]
	Since $f$ satisfies IIA, $\phi(x,y)$ is a positive multiple of $\bar\phi(x,y)$ from which the desired relationship follows.
	
	Now if $\phi = 0$, it follows that, for all $x,y\in \mathcal U$, $\phi(x,y) = w_{N_{xy}} - w_{N_{yx}} = 0$ and we can choose $\alpha > 0$ arbitrarily.
	Otherwise, we may assume without loss of generality that $\phi(a,b)\neq 0$, which implies that $w_{N_{ab}} - w_{N_{ba}}\neq 0$.
	Let $\alpha \in\mathbb R$ such that $\phi(a,b) = \alpha (w_{N_{ab}} - w_{N_{ba}})$.
	We aim to show that, for all $x,y\in \mathcal U$, $\phi(x,y) = \alpha (w_{N_{xy}} - w_{N_{yx}})$.
	To this end, let $x,y\in \mathcal U$.
	If $\{x,y\} = \{a,b\}$, this is clear by skew-symmetry of $\phi$.
	Otherwise we may assume without loss of generality that $x\in \mathcal U\setminus\{a,b\}$ and, by \eqref{eq:trans}, that $\phi(a,x)\neq 0$.
	Let $\hat R\in\mathcal D^N$, $f(\hat R)\equiv\hat\phi$, such that $R|_{\{a,b,x\}} = \hat R|_{\{a,b,x\}}$ and, $\hat N_{ya} = \hat N_{yb} = \hat N_{yc} = \emptyset$, which exists by~\ref{eq:richdich}.
	Note that $\hat N_{ya} = \emptyset$ implies that $\hat N_{by}\cap\hat N_{ba} = \hat N_{ba}$, since $\mathcal D\subseteq\mathcal R^\dich\subseteq\mathcal R^\vnm$ only contains transitive preference relations; likewise, $\hat N_{ay}\cap\hat N_{ab} = \hat N_{ab}$.
	From before, we have that there is $\gamma > 0$ such that $\hat\phi(a,y) = \gamma w_{\hat N_{ay}}$, $\hat\phi(b,y) = \gamma w_{\hat N_{by}}$, and $\hat\phi(x,y) = \gamma w_{\hat N_{xy}}$.
	Hence, we have that
	\begin{align*}
		\hat\phi(a,b) &= \hat\phi(a,y) - \hat\phi(b,y) = \gamma(w_{\hat N_{ay}} - w_{\hat N_{by}})\\ 
		&= \gamma ((w_{\underbrace{\hat N_{ay}\cap \hat N_{ab}}_{\hat N_{ab}}} + w_{\hat N_{ay}\cap \hat I_{ab}}) - (w_{\underbrace{\hat N_{by}\cap \hat N_{ba}}_{\hat N_{ba}}} + w_{\underbrace{\hat N_{by}\cap \hat I_{ab}}_{\hat N_{ay}\cap\hat I_{ab}}}))
		= \gamma (w_{\hat N_{ab}} - w_{\hat N_{ba}})\text,
	\end{align*}
	where the first equality follows from~\eqref{eq:trans} and skew-symmetry of $\hat\phi$ and the third equality follows from~\eqref{eq:additive}.
	Similarly, we get that $\hat\phi(a,x) = \gamma(w_{\hat N_{ax}} - w_{\hat N_{xa}})$.
	 
	 Since $f$ satisfies IIA, it follows that $\phi|_{\{a,b,x\}} \equiv \hat\phi|_{\{a,b,x\}}$.
	 Hence, $\phi(a,x) = \alpha (w_{N_{ax}} - w_{N_{xa}})$, and, by a similar argument, $\phi(a,y) = \alpha (w_{N_{ay}} - w_{N_{ya}})$.
	 Observe that by~\eqref{eq:additive} and $\mathcal D\subseteq\mathcal R^{\dich}$,
	 \[
	 w_{N_{xy}} = w_{N_{xa}\cap N_{xy}} + w_{N_{ay}\cap N_{xy}} = w_{N_{xa}} - w_{N_{ya}\cap N_{xa}} + w_{N_{ay}} - w_{N_{ax}\cap N_{ay}}
	 \]
	 and
	 \[
	 w_{N_{yx}} = w_{N_{ya}\cap N_{yx}} + w_{N_{ax}\cap N_{yx}} = w_{N_{ya}} - w_{N_{xa}\cap N_{ya}} + w_{N_{ax}} - w_{N_{ay}\cap N_{ax}}.
	 \]
	 Note that the minus terms are the same in both of the above equalities.
	 Then, we have that
	 \begin{align*}
	 	\phi(x,y) &= \phi(x,a) - \phi(y,a) = \alpha ( w_{N_{xa}} - w_{N_{ax}} - w_{N_{ya}} + w_{N_{ay}}) = \alpha( w_{N_{xy}} - w_{N_{yx}})\text,
	 \end{align*}
	 where the first equality follows from~\eqref{eq:trans} and skew-symmetry of $\phi$, the second equality follows from skew-symmetry of $\phi$, and the third equality follows from the statements about $w_{N_{xy}}$ and $w_{N_{yx}}$ derived above.
	 This proves the desired equation.
\end{proof}

When assuming that $f$ satisfies full Pareto optimality instead of Pareto indifference, the weights of all agents have to be positive and we obtain the following theorem.

\begin{theorem}\label{thm:dichswf}
	Let $f$ be an Arrovian SWF with range $\mathcal R^{\vnm}$ on some domain $\mathcal D\subseteq\mathcal R^{\dich}$ satisfying~\ref{eq:richdich} with $|\mathcal U|\ge 4$. 
	Then, $f$ is affine utilitarian with positive weights.
\end{theorem}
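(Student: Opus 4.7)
The plan is to reduce this theorem to \Cref{lem:dichmlvnm} and then rule out nonpositive weights using the strict part of Pareto optimality, mirroring the proof of \Cref{thm:swf}. Since any Arrovian SWF satisfies Pareto optimality and IIA, and Pareto optimality implies Pareto indifference, \Cref{lem:dichmlvnm} applies directly, yielding real weights $w_1, \ldots, w_n \in \mathbb{R}$ such that $f(R) \equiv \sum_{i \in N} w_i \phi_i$ whenever $(\phi_i)_{i \in N} \equiv R$. It therefore only remains to show $w_j > 0$ for every $j \in N$.

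To this end I would argue by contradiction: suppose $w_j \le 0$ for some $j \in N$. I would then construct a profile $R \in \mathcal{D}^N$ and pick alternatives $a, b \in \mathcal{U}$ such that agent $j$ strictly prefers $a$ to $b$ while every other agent is indifferent between $a$ and $b$. Concretely, for agent $j$ I take any dichotomous preference on $\{a,b\}$ that places $a$ in the top class and $b$ in the bottom class, and for every other agent $i \ne j$ I take a dichotomous preference that places both $a$ and $b$ in the same class. Each such preference pattern on the two-element set $\{a,b\} \subseteq \mathcal{U}$ is a valid dichotomous restriction, so by \ref{eq:richdich} the corresponding preference relations all lie in $\mathcal{D}$, and hence the profile $R$ exists.

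For this profile, taking $(\phi_i)_{i \in N} \equiv R$, we have $\phi_j(a,b) > 0$ and $\phi_i(a,b) = 0$ for every $i \ne j$. Affine utilitarianism then yields
\[
\phi(a,b) \;=\; \sum_{i \in N} w_i \phi_i(a,b) \;=\; w_j\, \phi_j(a,b) \;\le\; 0,
\]
so the collective preference returned by $f$ satisfies $b \succsim a$. On the other hand, the strict part of Pareto optimality forces $a \succ b$ collectively, since agent $j$ strictly prefers $a$ to $b$ while all other agents are indifferent; this means $\phi(a,b) > 0$, a contradiction. Therefore all weights are strictly positive, completing the proof.

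The argument is essentially the same as the corresponding step in \Cref{thm:swf}, and the only nontrivial point is verifying that \ref{eq:richdich} indeed supplies the required profile; since the restriction in question involves only two alternatives, this is well within the scope of that richness assumption, so I do not expect any real obstacle beyond this straightforward bookkeeping.
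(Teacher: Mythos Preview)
Your proposal is correct and follows essentially the same approach as the paper, which explicitly states that \Cref{thm:dichswf} follows from \Cref{lem:dichmlvnm} in the same way that \Cref{thm:swf} follows from \Cref{lem:mlscheme3}. The only cosmetic difference is that the paper's analogous step in \Cref{thm:swf} takes $G$ to be the set of \emph{all} agents with nonpositive weight, whereas you fix a single agent $j$; both versions work, and yours is arguably slightly cleaner.
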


\Cref{thm:dichswf} follows from \Cref{lem:dichmlvnm} in the same way as \Cref{thm:swf} follows from \Cref{lem:mlscheme3}. 
Its proof is therefore omitted.
If we additionally assume that $f$ is anonymous, the weights of all agents have to be equal, and outcomes are ordered by the vNM utility function that assigns to each alternative the number of agents who approve it.

\begin{corollary}\label{cor:dichmlvnm}
	Let $|\mathcal U|\ge 4$ and $\mathcal D\subseteq\mathcal R^{\vnm}$ be some domain satisfying~\ref{eq:richdich}.
	An anonymous SWF with range $\mathcal R^{\vnm}$ is Arrovian if and only if it is the utilitarian SWF and $\mathcal{D}\subseteq \mathcal{R}^\dich$.\end{corollary}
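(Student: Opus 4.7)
The plan is to derive this corollary directly from Theorems \ref{thm:dichdomain} and \ref{thm:dichswf}, using anonymity to pin down equal weights in the affine utilitarian representation. The statement has two directions, and the forward direction is essentially a concatenation of the two cited theorems, while the converse requires verifying that the utilitarian SWF on $\mathcal{R}^\dich$ really is anonymous and Arrovian.

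For the forward direction, suppose $f$ is an anonymous Arrovian SWF on $\mathcal{D}$. First I would observe that the richness condition \ref{eq:richdich} implies the weaker conditions \ref{eq:richneutral}, \ref{eq:richindiff}, and \ref{eq:richinverse} on dichotomous preferences (since permuting alternatives, declaring complete indifference, and inverting a dichotomous preference are all again dichotomous and can be realized on any four-element subset). This lets me invoke Theorem \ref{thm:dichdomain} to conclude $\mathcal{D}\subseteq \mathcal{R}^\dich$. Next, since \ref{eq:richdich} is exactly the richness condition used in Theorem \ref{thm:dichswf}, that theorem yields weights $w_1,\dots,w_n>0$ such that $f(R)\equiv \sum_{i\in N} w_i \phi_i$ whenever $(\phi_i)_{i\in N}\equiv R$. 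Finally, anonymity forces $w_i = w_j$ for all $i,j\in N$: picking any profile $R$ for which the resulting collective SSB function is nonzero and applying the permutation that swaps $i$ and $j$, the uniqueness of the normalized SSB representation (up to positive scalar) combined with $f(R)=f(R^\pi)$ forces $w_i=w_j$ after rescaling. Hence $f$ is the utilitarian SWF.

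For the converse, I would verify that the utilitarian SWF $f$ with $\mathcal{D}\subseteq\mathcal{R}^\dich$ satisfies anonymity, Pareto optimality, and IIA. Anonymity is immediate because summing SSB functions is symmetric in the agents. Pareto optimality follows from the positivity of the (equal) weights: if all agents weakly prefer $p$ to $q$ and at least one strictly prefers, then $\sum_i \phi_i(p,q) > 0$, so $p$ is strictly collectively preferred. For IIA, the key observation (already noted in Section \ref{sec:domain}) is that on the dichotomous domain, for every $X\subseteq \mathcal{U}$ the restriction $\phi_i|_X$ is uniquely determined by the set of alternatives in $X$ that agent $i$ approves, which in turn is determined by $\succ_i|_X$. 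Hence $R|_{\Delta_X} = \hat R|_{\Delta_X}$ implies $\phi_i|_X = \hat\phi_i|_X$ for all $i$, so $f(R)|_{\Delta_X} = f(\hat R)|_{\Delta_X}$.

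The only step that requires any care is the anonymity-to-equal-weights deduction, since the weights in Theorem \ref{thm:dichswf} are a priori only defined up to a common positive scalar; the argument must use the invariance of $f(R)$ under agent permutations combined with the essentially unique representation of a nonzero SSB collective preference. Everything else is routine, and there is no genuine obstacle beyond the theorems already proven.
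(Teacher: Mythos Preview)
Your overall strategy---derive the domain restriction from Theorem~\ref{thm:dichdomain}, derive affine utilitarianism with positive weights from Theorem~\ref{thm:dichswf}, and then use anonymity to equalize the weights---is exactly the route the paper intends (the corollary is stated without proof as an immediate consequence of the two theorems). The converse direction and the anonymity-to-equal-weights step are fine.

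The gap is your bridging claim that \ref{eq:richdich} implies \ref{eq:richneutral}, \ref{eq:richindiff}, and \ref{eq:richinverse}. This is false. Condition \ref{eq:richdich} only guarantees that every \emph{dichotomous} restriction on at most four alternatives is realized by some member of $\mathcal D$; it says nothing about closure of $\mathcal D$ under permutations or inversion, and (for $|\mathcal U|\ge 5$) it does not force $\emptyset\in\mathcal D$. For instance, $\mathcal D=\mathcal R^{\dich}\cup\{\succ^*\}$ for a single non-dichotomous vNM relation $\succ^*$ satisfies \ref{eq:richdich} but fails \ref{eq:richneutral} and \ref{eq:richinverse}. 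This matters for invoking Theorem~\ref{thm:dichdomain}: its proof uses Lemma~\ref{lem:semidecisive}, which is established via the profile $(\succ_0,\dots,\succ_0,\emptyset,\dots,\emptyset,\succ_0^{-1},\dots,\succ_0^{-1})$ built from the hypothetical non-dichotomous $\succ_0$, its inverse, and complete indifference; and the subsequent step uses permutations $\succ_0^\pi$. None of these are supplied by \ref{eq:richdich}, and they cannot be replaced by dichotomous relations without breaking the Pareto-indifference/Lemma~\ref{lem:indiff} argument (which needs all agents to share the indifference relation of $\succ_0$). So as written you cannot invoke Theorem~\ref{thm:dichdomain}. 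The paper itself does not spell out a proof here, and the hypotheses of the corollary as stated appear to be weaker than what Theorem~\ref{thm:dichdomain} actually requires; the clean repair is to assume \ref{eq:richneutral}--\ref{eq:richinverse} in addition to \ref{eq:richdich}, paralleling how Corollary~\ref{cor:ml} assumes full richness.
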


\begin{remark}[Tightness of Bound]
\Cref{thm:dichswf} does not hold if~$\mathcal |U|<4$.
Let $\mathcal U = \{a,b,c\}$ and consider the SWF $f$ defined as follows.
For all $R\in\mathcal D^N$ and $x,y,z\in \mathcal U$, $f(R)\equiv \phi$ with $\phi(x,y)=\phi(y,z)=1$ and $\phi(x,z)=2$
if $|N_{xy}|>|N_{yx}|$ and $|N_{yz}|>|N_{zy}|$, and otherwise $f(R)\equiv \sum_{i\in N} \phi_i$ where $(\phi_i)_{i\in N}\in \Phi^N$ such that $(\phi_i)_{i\in N}\equiv R$.
In the former case, $|N_{xz}|>|N_{zx}|$, since the agents' preferences are dichotomous. 
It can be checked that $f$ satisfies Pareto optimality and IIA.
Note that, in line with \citeauthor{MaMo15a}'s \citeyear{MaMo15a} theorem, the collective preferences over \emph{pure} outcomes returned by $f$ coincide with the utilitarian SWF (see also \Cref{fn:MaMo}).
\end{remark}

\begin{remark}[SWFs with range $\mathcal R$]
	Theorems~\ref{thm:dichdomain} and~\ref{thm:dichswf} do not hold for SWFs with range $\mathcal{R}$. 
	To see this for \Cref{thm:dichdomain}, let $N = \{1,2\}$, $\mathcal U = \{a,b,c\}$, and ${\pref_0}\in \mathcal R^\vnm$ such that
	\[
		{\pref_0} 
		\equiv 
		\begin{pmatrix}
			0 & 1 & 2 \\
			-1 & 0 & 1\\
			-2 & -1 & 0\\
		\end{pmatrix}\text.
	\]
	Consider the domain $\mathcal D = \{{\pref_0}^\pi\colon \pi\in\Pi_\mathcal U\}\cup\{\emptyset\}$, which satisfies~\ref{eq:richneutral},~\ref{eq:richindiff}, and~\ref{eq:richinverse}.
	Then, the following SWF $f$ is an anonymous Arrovian SWF on $\mathcal D$.
	For all $R\in\mathcal D^N$ and $x,y,z\in\mathcal U$, $f(R)\equiv \phi$ with $\phi(x,y) = 1$ and $\phi(x,z) = \phi(y,z) = 0$

		if $x\pref_1 y\pref_1 z$ and $z\pref_2 x\pref_2 y$ and $f(R)\equiv \phi_1+\phi_2$ otherwise. IIA is easy to verify, since this SWF is consistent with majority rule on pairs of pure outcomes. 
		Pareto optimality is clearly satisfied for all profiles where $f$ coincides with the utilitarian rule. 
		In the remaining profiles, Pareto optimality can be verified by simple but tedious calculations.
		
		A counterexample using two agents and four alternatives can be constructed to show that \Cref{thm:dichswf} does not hold for SWFs with range $\mathcal{R}$. 
\end{remark}

\section{Rationalizability via Continuous and Convex Relations}
\label{app:choice}

Recall from \Cref{sec:ratchoice} that a choice function is an upper hemi-continuous function $S\colon \mathcal F(\Delta)\rightarrow \mathcal F(\Delta)$ such that for all $X\in \mathcal F(\Delta)$, $S(X)\subseteq X$, and for all $p,q\in\Delta$, $S([p,q])\in\{\{p\},\{q\},[p,q]\}$.
To formally define upper hemi-continuity, we need to introduce a notion of convergence for sequences in $\mathcal F(\Delta)$. 
To this end, for all $X,Y\in\mathcal F(\Delta)$, let $\dist(X,Y) = \max\{\sup_{x\in X}\inf_{y\in Y} |x-y|, \sup_{y\in Y}\inf_{x\in X} |x-y|\}$ be the Hausdorff distance of $X$ and $Y$. 
With this definition, $\dist$ is a metric on $\mathcal F(\Delta)$ and we say that a sequence $(X_i)_{i\in\mathbb N}\subseteq\mathcal F(\Delta)$ converges to $X\in\mathcal F(\Delta)$, written $X_i\rightarrow X$, if $\dist(X_i,X)$ goes to $0$ as $i$ goes to infinity.
A choice function $S$ is upper hemi-continuous if, for all $(X_i)_{i\in\mathbb N}$ with $X_i\rightarrow X\in\mathcal F(\Delta)$ and $(p_i)_{i\in \mathbb N}$ with $p_i\in S(X_i)$ for all $i\in \mathbb N$ and $p_i\rightarrow p\in\Delta$, $p\in S(X)$.

\proprationalizable*

\begin{proof}
	Let $S$ be a choice function.
	First we prove the ``only if'' part.
	To this end, assume that $S$ is rationalizable by a continuous and convex relation $\succ$, i.e., for all $X\in\mathcal F(\Delta)$, $S(X) = \max_{\succ} X$.
	For $X,Y\in\mathcal F(\Delta)$ with $X\cap Y\neq\emptyset$, $S(X) \cap Y = (\max_\succ X)\cap Y \subseteq \max_\succ (X\cap Y) = S(X\cap Y)$, since a maximal element in $X$ is also a maximal element in any subset of $X$ (without imposing any restrictions on $\succ$).
	Hence, $S$ satisfies contraction.
	To see that $S$ satisfies expansion, let $X,Y\in\mathcal F(\Delta)$ and observe that $S(X)\cap S(Y) = \max_\succ X\cap\max_\succ Y\subseteq \max_\succ(\conv(X\cup Y)) = S(X\cup Y)$, where the set inclusion follows from the fact that, by convexity of $\succ$, weak lower contour sets are convex.

	Second we prove the ``if part'', i.e., $S$ is rationalizable by a continuous and convex relation if it satisfies contraction and expansion. 
	To this end, define $\pref$ as the base relation of $S$, i.e., for all $p,q\in\Delta$, $p\succsim q$ if and only if $p\in S([p,q])$.
	Note that $\succsim$ is a complete relation, since, for all $p,q\in\Delta$, $\{p,q\}\cap S([p,q])\neq\emptyset$ by our definition of choice functions.
	
	First we show that $\pref$ rationalizes $S$, i.e., for all $X\in\mathcal F(\Delta)$, $S(X) = \max_{\pref} X$.
	To see that $S(X) \subseteq \max_{\pref} X$, let $p\in S(X)$.
	Since $S$ satisfies contraction, it follows that, for all $q\in X$, $p\in S([p,q])$ and hence, by definition of $\pref$, $p\succsim q$.
	This implies that $p\in\max_\pref X$.
	To prove that $\max_{\pref} X \subseteq S(X)$, let $p\in \max_\pref X$ and $(X^k)_{k\in\mathbb N}$ be a sequence of polytopes in $\Delta$ such that, for all $k\in\mathbb N$, $p\in X^k$, $X^k = \conv(p^{k,1},\dots,p^{k,l_k})\subseteq X$, $l_k\in\mathbb N$, and $X^k\rightarrow X$.
	Since $p\in\max_\pref X$, we have that, for all $k\in\mathbb N$ and $l\in\{1,\dots,l_k\}$, $p\succsim p^{k,l}$ and thus, by definition of $\pref$, $p\in S([p,p^{k,l}])$.
	Repeated application of expansion implies that, for all $k\in\mathbb N$, $p\in \conv(p^{k,1},\dots,p^{k,l_k}) = X^k$.
	Then, since $X^k\rightarrow X$, continuity of $S$ implies that $p\in S(X)$.
	
	It remains to be shown that $\succ$ satisfies continuity and convexity.
	To prove continuity, let $p\in\Delta$.
	First, assume for contradiction that $L(p)$ is not open at $q\in L(p)$.
	Then, there is a sequence $(q^k)_{k\in\mathbb N}\subseteq\Delta$ such that $q^k$ goes to $q$ as $k$ goes to infinity and $q^k\succsim p$ for all $k\in\mathbb N$.
	Since then $[p,q^k] \rightarrow [p,q]$ and $q^k\in S([p,q^k])$ for all $k\in\mathbb N$ by definition of $\succ$, continuity of $S$ implies that $q\in S([p,q])$, which contradicts $p\succ q$.
	Second, assume for contradiction that $U(p)$ is not open at $q\in U(p)$.
	Then, there is a sequence $(q^k)_{k\in\mathbb N}\subseteq\Delta$ such that $q^k$ goes to $q$ as $k$ goes to infinity and $p\succsim q^k$ for all $k\in\mathbb N$.
	Since then $[p,q^k] \rightarrow [p,q]$ and $p\in S([p,q^k])$ for all $k\in\mathbb N$ by definition of $\succ$, continuity of $S$ implies that $p\in S([p,q])$, which contradicts $q\succ p$.
	Hence, $\succ$ is continuous and, by \Cref{lem:archcont}, also satisfies Archimedean continuity.
	
	Proving convexity of $\succ$ turns out be rather involved. We first prove three useful auxiliary statements.
	We say that a preference relation $\succ$ is \emph{all indifferent} on $X\in\mathcal F(\Delta)$ if ${\succ}|_X = \emptyset$, i.e., for all $p,q\in X$, $p\sim q$.
	
	\begin{claim}\label{claim:allindifferent}
		For all $p,q,r\in\Delta$, if $p\sim q$, $p\sim r$, and $q\sim r$, then $\succ$ is all indifferent on $\conv(p,q,r)$.
	\end{claim}
	\begin{proof}[Proof of \Cref{claim:allindifferent}]	 
	To see this, observe that, by definition of $\succ$, $p\sim q$ and $p\sim r$ implies that $p\in S([p,q])$ and $p\in S([p,r])$. 
	Since $S$ satisfies expansion, we have that $p\in S(\conv([p,q]\cup [p,r])) = S(\conv(p,q,r))$.
	Similarly, $q,r\in S(\conv(p,q,r))$.
	Since $S(\conv(p,q,r))\in\mathcal F(\Delta)$ is convex, it follows that $S(\conv(p,q,r)) = \conv(p,q,r)$.
	Hence, since $S$ satisfies contraction, for all $s,t\in\conv(p,q,r)$, $[s,t] = S(\conv(p,q,r))\cap[s,t]\subseteq S([s,t])$.
	By definition of $\succ$, we have that $s\sim t$, which proves that $\succ$ is all indifferent on $\conv(p,q,r)$.
	\end{proof}
	
	We say that $\succ$ satisfies \emph{\ref{eq:weakbetweenness}} if for all $p,q\in\Delta$ with $p\succsim q$, there is $\lambda^\ast\in[0,1]$ such that, for all $\mu, \mu'\in[0,1]$ with $\mu > \mu'$,
	\begin{align}
		\begin{cases}
			p\mu q \succ p\mu' q\quad&\text{if } \mu > \lambda^\ast\text,\\
			p\mu q \sim p\mu' q&\text{if } \mu \le \lambda^\ast\text.\\
		\end{cases}
		\tag{lower betweenness}
		\label{eq:weakbetweenness}
	\end{align}
	If the above holds, we say that $\succ$ satisfies \emph{$\lambda^\ast$-\ref{eq:weakbetweenness}} on $[p,q]$.
	Intuitively, $\lambda^\ast$-\ref{eq:weakbetweenness} on $[p,q]$ prescribes that, when moving along the line from $p$ to $q$, preference strictly decreases until $p\lambda^\ast q$ is reached and then remains constant. 
	
	\begin{claim}\label{claim:lowerbetweenness}
		$\succ$ satisfies \ref{eq:weakbetweenness}.
	\end{claim}
	\begin{proof}[Proof of \Cref{claim:lowerbetweenness}]
		Let $p,q\in\Delta$ with $p\succsim q$.
		We distinguish two cases.
		If $S([p,q]) = [p,q]$, then $p\sim q$ by definition of $\succ$.
		From \Cref{claim:allindifferent} (with $r = p$), it follows that $\succ$ is all indifferent on $[p,q]$ and we may choose $\lambda^\ast = 1$.

		If $S([p,q]) \neq [p,q]$, we have that $p\not\sim q$ by definition of $\succ$.
		Since $p\succsim q$ by assumption, it follows that $p\succ q$.
		Let $\lambda^\ast = \sup_{\lambda\in[0,1]}\{q\in S([p\lambda q, q])\}$.
		Note that, since $p\succ q$ and $S$ satisfies continuity, $\lambda^\ast < 1$.
		By continuity of $S$ and definition of $\lambda^\ast$, $q\in S([p\lambda^\ast q,q])$ and, since $p\lambda'q\in S([p\lambda' q,q])$ for all $\lambda' > \lambda^\ast$, $p\lambda^\ast q\in S([p\lambda^\ast q,q])$.
		Together this yields that $S([p\lambda^\ast q,q]) = [p\lambda^\ast q,q]$ and hence, by \Cref{claim:allindifferent}, $\succ$ is all indifferent on $[p\lambda^\ast q,q]$.
		Now we show that, for all $\lambda\in [0,1]$, $p\lambda q \succsim q$.
		If $\lambda > \lambda^\ast$, this follows from the definition of $\lambda^\ast$.
		If $\lambda \le \lambda^\ast$, it follows from the fact that $\succ$ is all indifferent on $[p\lambda^\ast q,q]$ that $p\lambda q\sim q$.
		Now let $\mu,\mu'\in[0,1]$ such that $\mu > \mu'$.
		If $\mu > \lambda^\ast$, assume for contradiction that $p\mu' q\succsim p\mu q$.
		By definition of $\succ$, this implies that $p\mu' q\in S([p\mu q,p\mu' q])$.
		From $p\mu' q\succsim q$, it follows that $p\mu' q\in S([p\mu' q, q])$.
		Then, expansion implies that $p\mu' q \in S(\conv([p\mu q, p\mu' q]\cup [p\mu'q, q])) = S([p\mu q, q])$.
		By our definition of choice functions, $p\mu' q\in S([p\mu q, q])$ implies $S([p\mu q, q])=[p\mu q, q]$. Hence, in particular, $q\in S([p\mu q, q])$, which contradicts the assumption that $\mu>\lambda^\ast$.
		If $\mu\le\lambda^\ast$, $p\mu q\sim p\mu' q$ follows from the fact that $\succ$ is all indifferent on $[p\lambda^\ast q,q]$.     
	\end{proof}

	\begin{claim}\label{claim:convexindiff}
		For all $p\in\Delta$, $I(p)$ is convex.
	\end{claim}
	\begin{proof}[Proof of \Cref{claim:convexindiff}]
		Let $p,q,r\in\Delta$ and $\lambda\in[0,1]$ such that $p\sim q$ and $p\sim r$.
		We show that $p\sim q\lambda r$.
		The proof proceeds as follows.
		It is easy derived that $p$ is weakly preferred to $q\lambda r$.
		So we assume for contradiction that $p$ is strictly preferred to $q\lambda r$.
		By \Cref{claim:lowerbetweenness}, there is some $\lambda^\ast$ between $0$ and $1$ such that preference strictly decreases when moving along a straight line from $p$ to $p\lambda^\ast(q\lambda r)$.
		In particular, $\succ$ is not all indifferent on any non-trivial subinterval of this line segment.
		In the remainder of the proof, we construct a subset of $\conv(p,q,r)$ (which will turn out to be $\conv(p\bar \mu r^\ast,p\hat\mu r^\ast,t)$) on which $\succ$ is all indifferent that contains a non-trivial subinterval of the line segment from $p$ to $p\lambda^\ast(q\lambda r)$, which is a contradiction.
		The proof of this claim is illustrated in \Cref{fig:convexindiff}.
		The proofs of~\ref{item:l} and~\ref{item:u} below follow a similar structure. 
		
		If $q\sim r$, it follows from \Cref{claim:allindifferent} that $\succ$ is all indifferent on $\conv(p,q,r)$, which implies that $p\sim q\lambda r$.
		Otherwise, we may assume without loss of generality that $q\succ r$.
		Since $p\sim q$, $p\sim r$, and $S$ satisfies expansion, it follows that $p\in S(\conv(p,q,r))$.
		Hence, since $S$ satisfies contraction, for all $\lambda\in [0,1]$, $p\in S(\conv(p,q\lambda r))$, i.e., $p\succsim q\lambda r$.
		Assume for contradiction that $p\succ q\lambda r$ for some $\lambda\in [0,1]$.
		Let $\lambda^+ = \sup_{\lambda\in[0,1]} \{p\succ q\lambda r\}$, $\lambda^- = \inf_{\lambda\in[0,1]} \{p\succ q\lambda r\}$, $q^\ast = q\lambda^+ r$, and $r^\ast = q\lambda^- r$.
		Since, by continuity of $\succ$, $L(p)\cap [q,r]$ is open in $[q,r]$, it follows that $\lambda^+ > \lambda > \lambda^-$ and $p\sim q^\ast$ and $p\sim r^\ast$.
		Moreover, since $q\succ r$ it follows from \Cref{claim:lowerbetweenness} that $q^\ast\sim r^\ast$ or $q^\ast\succ r^\ast$.
		If $q^\ast\sim r^\ast$, $\succ$ is all indifferent on $\conv(p,q^\ast, r^\ast)$ by \Cref{claim:allindifferent}, which contradicts $p\succ q\lambda r$.
		
		In the rest of the proof of this claim, we will consider the case $q^\ast\succ r^\ast$.
		First we show that, for all $\mu\in[0,1)$, $q^\ast\succ p\mu r^\ast$.
		Since $p\sim q^\ast$ and $q^\ast\succ r^\ast$, the fact that $S$ satisfies expansion implies that $q^\ast\in S(\conv(p,q^\ast,r^\ast))$.
		Thus, the fact that $S$ satisfies contraction implies that $q^\ast\in S([q^\ast,s])$ for all $s\in \conv(p,q^\ast,r^\ast)$.
		If $q^\ast\sim p\mu r^\ast$ for some $\mu\in[0,1)$, then we have that $p\sim q^\ast$, $p\sim p\mu r^\ast$, and $q^\ast\sim p\mu r^\ast$, which, by \Cref{claim:allindifferent}, implies $\succ$ is all indifferent on $\conv(p,q^\ast,p\mu r^\ast)$.
		By \Cref{claim:lowerbetweenness}, there is $\lambda^\ast\in[0,1)$ such that $\succ$ satisfies $\lambda^\ast$-\ref{eq:weakbetweenness} on $[p,q\lambda r]$.
		Hence, for $\mu,\mu'$ close to $1$, $p\mu(q\lambda r)\succ p\mu'(q\lambda r)$ and $p\mu(q\lambda r), p\mu'(q\lambda r)\in\conv(p,q^\ast,p\mu r^\ast)$, which is a contradiction to the fact that $\succ$ is all indifferent on $\conv(p,q^\ast,p\mu r^\ast)$.
		Hence, for all $\mu\in[0,1)$, $q^\ast\succ p\mu r^\ast$.
		From this it follows by \Cref{claim:lowerbetweenness} that, for all $s\in \conv(p,q^\ast,r^\ast)\setminus[p,q^\ast]$, $q^\ast\succ s$.

		We now show that there are $\bar\lambda,\hat\lambda\in[0,1]$ and $\bar\mu,\hat\mu\in[0,1]$ such that $\bar\lambda > \hat\lambda > \lambda$ and $\bar\mu > \hat\mu$, $q^\ast\bar\lambda r^\ast\sim p\bar\mu r^\ast$ and $q^\ast\hat\lambda r^\ast\sim p\hat\mu r^\ast$, and $p\lambda^\ast(q\lambda r)$ is in the relative interior of $\conv(p\bar\mu r, q^\ast\bar\lambda r^\ast, r^\ast)$.
		By definition of $q^\ast$, there is a sequence $(\lambda_k)_{k\in\mathbb N}$ such that $\lambda_k$ goes to $1$ and, for all $k\in\mathbb N$, $p\succ q^\ast\lambda_k r^\ast$.
		Observe that, for all $k\in\mathbb N$, $p\succ q^\ast\lambda_k r^\ast\succsim r^\ast$.
		Using the fact that $\succ$ satisfies Archimedean continuity by \Cref{lem:archcont} it follows that, for all $k\in\mathbb N$, there is $\mu_k\in[0,1)$ such that $q^\ast\lambda_k r^\ast\sim p\mu_k r^\ast$.
		By definition of $\succ$, for all $k\in\mathbb N$, $p\mu_k r^\ast\in S([q^\ast\lambda_k r^\ast, p\mu_k r^\ast])$.
		Since $q^\ast\lambda_k r^\ast\rightarrow q^\ast$, it follows from continuity of $S$ that $p\mu^\ast r^\ast\in S([p\mu^\ast r^\ast,q^\ast])$, where $\mu^\ast$ is an accumulation point of $(\mu_k)_{k\in \mathbb N}$.
		Since $q^\ast\succ s$ for all $s\in \conv(p,q^\ast,r^\ast)\setminus[p,q^\ast]$, it follows that $\mu^\ast = 1$, i.e., $\mu_k$ goes to $1$ as $\lambda_k$ goes to $1$.
		Now let $\bar k,\hat k\in\mathbb N$ such that $\bar\lambda = \lambda_{\bar k} >\hat\lambda = \lambda_{\hat k} > \lambda$, $\bar\mu = \mu_{\bar k} > \hat\mu = \mu_{\hat k}$ and $p\lambda^\ast(q\lambda r)$ is in the relative interior of $\conv(p\bar\mu r, q^\ast\bar\lambda r^\ast, r^\ast)$.
		Such $\bar k,\hat k$ exist, since $\lambda_k$ and $\mu_{k}$ go to $1$ as $k$ goes to infinity.
		
		Lastly, we show that $\succ$ is all indifferent on a line segment that properly intersect that line segment from $p$ to $p\lambda^\ast(q\lambda r)$, which will yield a contradiction.
		Let $t\in [p\bar\mu r^\ast,q^\ast\bar\lambda r^\ast]\cap [p,q^\ast\hat\lambda r^\ast]$, which exists, since $\bar\lambda >\hat\lambda$.
		Since $p\hat\mu r^\ast\sim p$, $p\hat\mu r^\ast\sim q^\ast\hat\lambda r^\ast$, and $S$ satisfies expansion, we have that $p\hat\mu r^\ast\in S(\conv(p,p\hat\mu r^\ast, q^\ast\hat\lambda r^\ast))$.
		As $S$ satisfies contraction, it follows that $p\hat\mu r^\ast \succsim t$.
		
		Conversely, from $p\sim q^\ast$, $p\sim r^\ast$, and $q^\ast\succ r^\ast$, it follows that $p,q^\ast\in S(\conv(p,q^\ast,r^\ast))$ and hence, $[p,q^\ast]\subseteq S(\conv(p,q^\ast,r^\ast))$. 
		Let $\sigma\in(0,1)$ such that $t\in [p\sigma q^\ast,p\hat\mu r^\ast]$.
		Since $p\sigma q^\ast\in[p,q^\ast]$, it follows from the fact that $S$ satisfies contraction that $p\sigma q^\ast\in S([p\sigma q^\ast, p\hat\mu r^\ast])$ and so $p\sigma q^\ast \succsim p\hat\mu r^\ast$ by definition of $\succ$.
		Assuming that $p\hat\mu r^\ast\succ t$ would, by \Cref{claim:lowerbetweenness}, imply that $p\hat\mu r^\ast\succ p\sigma q^\ast$ and thus a contradiction. 
		So $t\succsim p\hat\mu r^\ast$.
		
		In summary, we have $p\hat\mu r^\ast\sim t$.
		Additionally, it holds that $p\bar\mu r^\ast\sim p\hat\mu r^\ast$ (since $p\sim r^\ast$) and $p\bar\mu r^\ast\sim t$ (since $p\bar\mu r^\ast\sim q^\ast \bar\lambda r^\ast$).
		Thus, $\succ$ is all indifferent on $\conv(p\bar\mu r, p\hat\mu r^\ast, t)$ by \Cref{claim:allindifferent}.
		By construction of $\bar\lambda$, the line segment from $p$ to $p\lambda^\ast(q\lambda r)$ intersects the relative interior of $\conv(p\bar\mu r, p\hat\mu r^\ast, t)$.
		Thus, there are $\lambda',\lambda''\in (0,1)$ such that $\lambda'' > \lambda'\ge \lambda^\ast$ and $p\lambda'(q\lambda r), p\lambda''(q\lambda r)\in \conv(p\bar\mu r^\ast, p\hat\mu r^\ast, t)$, i.e., $p\lambda'(q\lambda r), p\lambda''(q\lambda r)$ are contained in the intersection of the line segment from $p$ to $p\lambda^\ast(q\lambda r)$ and $\conv(p\bar\mu r, p\hat\mu r^\ast, t)$.
		Since $\succ$ satisfies $\lambda^\ast$-lower betweenness on $[p,q\lambda r]$, it holds that $p\lambda'(q\lambda r)\succ p\lambda''(q\lambda r)$, which contradicts the fact that $\succ$ is all indifferent on $\conv(p\bar\mu r^\ast, p\hat\mu r^\ast, t)$.
		Hence, for all $\lambda\in [0,1]$, $p\sim q\lambda r$, which proves the claim.
	\end{proof}

	\begin{figure}[H]
		
			\tikzset{->-/.style={decoration={
			  markings,
			  mark=at position #1 with {\arrow{latex}}},postaction={decorate}}}
			\def\arrow{1.5ex}
			\def\dist{.5ex}
			\def\sc{3}
			\centering
			\begin{subfigure}{0.48\textwidth}
			\centering
			\begin{tikzpicture}[scale = \sc, every node/.style={inner sep=0,outer sep=0}]
				
				\node[label={[label distance=\dist]-150:$p$}] (p) at (-150:1) {};
				\node[label={[label distance=\dist]90:$q$}] (q) at (90:1) {};
				\node[label={[label distance=\dist]-30:$r$}] (r) at (-30:1) {};
				
				\node[label={[label distance=\dist]15:$q\lambda r$}] (qr) at ($(q)!.65!(r)$) {};
				\node[label={[label distance=\dist]55:$q^\ast$}] (qast) at ($(q)!.1!(r)$) {};
				\node[label={[label distance=\dist]5:$r^\ast$}] (rast) at ($(q)!.9!(r)$) {};
				
				\node[label={[label distance=2*\dist]-88:$p\lambda^\ast(q\lambda r)$}] (lambdaast) at ($(p)!.8!(qr)$) {};
				\node[label={[label distance=\dist]5:$q^\ast\bar\lambda r^\ast$}] (lambdabar) at ($(q)!.23!(r)$) {};
				\node[label={[label distance=\dist]5:$q^\ast\hat\lambda r^\ast$}] (lambdahat) at ($(q)!.37!(r)$) {};
				
				\node[label={[label distance=\dist]-90:$p\bar\mu r^\ast$}] (mubar) at ($(p)!.25!(rast)$) {};
				\node[label={[label distance=\dist]-90:$p\hat\mu r^\ast$}] (muhat) at ($(p)!.45!(rast)$) {};
				
				\node[label={[label distance=\dist]150:$t$}] (t) at (intersection of p--lambdahat and lambdabar--mubar) {};
				\node[label={[label distance=\dist]90:$p\sigma q^\ast$}] (sigma) at (intersection of p--qast and t--muhat) {};
	
				\draw[-latex] (q.center) -- (lambdabar.center);
				\draw[-latex, dashed] (lambdabar.center) -- (r.center);
				\draw[-latex] (p.center) -- (lambdaast.center);
				\draw (p.center) -- (r.center);
				\draw (p.center) -- (q.center);
				\draw (p.center) -- (qast.center);
				\draw (p.center) -- (rast.center);
				\draw[dashed,-latex] (lambdaast.center) -- (qr.center);
				\draw (lambdabar.center) -- (mubar.center);
				\draw (lambdahat.center) -- (muhat.center);
				\draw[dashed,-latex] (p.center) -- (lambdahat.center);
				\draw[dashed,-latex] (sigma.center) -- (t.center);
				\draw (t.center) -- (muhat.center);

			\end{tikzpicture}
			\caption{}
			\label{fig:convexindiff}
			\rule{0ex}{3ex}
			\end{subfigure}
			\begin{subfigure}{0.48\textwidth}
				\centering
				\begin{tikzpicture}[scale = \sc, every node/.style={inner sep=0,outer sep=0}]
					
					\node[label={[label distance=\dist]-150:$p$}] (p) at (-150:1) {};
					\node[label={[label distance=\dist]90:$q$}] (q) at (90:1) {};
					\node[label={[label distance=\dist]-30:$r$}] (r) at (-30:1) {};
					
					\node[label={[label distance=\dist]30:$q\lambda r$}] (qr) at ($(q)!.5!(r)$) {};
					\node[label={[label distance=\dist]150:$p\sigma^\ast q$}] (sigmaast) at ($(p)!.5!(q)$) {};
					\node[label={[label distance=\dist]-90:$p\mu^\ast r$}] (muast) at ($(p)!.6!(r)$) {};
					
					\node[label={[label distance=\dist]165:$p\sigma q$}] (sigma) at ($(p)!.3!(q)$) {};
					\node[label={[label distance=\dist]-85:$p\mu_\sigma r$}] (mu) at ($(p)!.3!(r)$) {};
					
					\node[label={[label distance=1.5*\dist]-5:$s$}] (s) at (intersection of p--qr and sigma--mu) {};
					\node[label={[label distance=\dist]-90:$p\mu' r$}] (muprime) at (intersection of q--s and p--r) {};

					\draw[-latex] (p.center) -- (sigma.center);
					\draw[dashed,->-=.9] (sigma.center) -- (q.center);
					\draw[dashed,->-=.9] (muprime.center) -- (r.center);
					\draw[-latex] (p.center) -- (muprime.center);
					\draw[densely dotted] (q.center) -- (r.center);
					\draw (p.center) -- (qr.center);
					\draw[->-=.9] (sigmaast.center) -- (r.center);
					\draw[->-=.9] (muast.center) -- (q.center);
					\draw[-latex] (muprime.center) -- (s.center);
					\draw[->-=.9,dashed] (s.center) -- (q.center);
					\draw (sigma.center) -- (mu.center);

				\end{tikzpicture}	
				\caption{}
				\label{fig:convex1}
				\rule{0ex}{3ex}
			\end{subfigure}
			\begin{subfigure}{0.48\textwidth}
				\centering
				\begin{tikzpicture}[scale = \sc, every node/.style={inner sep=0,outer sep=0}]
					
					\node[label={[label distance=\dist]-150:$p$}] (p) at (-150:1) {};
					\node[label={[label distance=\dist]90:$q$}] (q) at (90:1) {};
					\node[label={[label distance=\dist]-30:$r$}] (r) at (-30:1) {};
					
					\node[label={[label distance=\dist]30:$q\lambda r$}] (qr) at ($(q)!.5!(r)$) {};
					\node[label={[label distance=\dist]-90:$p\mu r$}] (mu) at ($(p)!.5!(r)$) {};
					\node[label={[label distance=\dist]120:$s$}] (s) at ($(p)!.3!(qr)$) {};
					
					\node[label={[label distance=\dist]160:$p\sigma' q$}] (sigmaprime) at ($(p)!.4!(q)$) {};
					\node[label={[label distance=\dist]50:$r\mu^\ast(p\sigma'q)$}] (muast) at ($(r)!.3!(sigmaprime)$) {};

					\node[label={[label distance=\dist]-90:$t$}] (t) at ($(p)!.85!(r)$) {};
					\node[label={[label distance=\dist]170:$t'$}] (tprime) at (intersection of p--qr and q--t) {};

					\draw[-latex] (q.center) -- (sigmaprime.center);
					\draw[->-=.9,dashed] (sigmaprime.center) -- (p.center);
					\draw[-latex] (r.center) -- (t.center);
					\draw[->-=.9,dashed] (t.center) -- (p.center);
					\draw (q.center) -- (r.center);
					\draw (p.center) -- (qr.center);
					\draw[-latex] (q.center) -- (mu.center);
					\draw (mu.center) -- (qr.center);
					\draw (mu.center) -- (s.center);
					\draw[dashed,-latex] (muast.center) -- (sigmaprime.center);
					\draw[-latex] (r.center) -- (muast.center);
					\draw[dashed,-latex] (q.center) -- (tprime.center);
					\draw (t.center) -- (tprime.center);
					\draw (t.center) -- (qr.center);

				\end{tikzpicture}	
				\caption{}
				\label{fig:convex2a}
				\end{subfigure}
				\begin{subfigure}{0.48\textwidth}
					\centering
					\begin{tikzpicture}[scale = \sc, every node/.style={inner sep=0,outer sep=0}]
					
						\node[label={[label distance=\dist]-150:$p$}] (p) at (-150:1) {};
						\node[label={[label distance=\dist]90:$q$}] (q) at (90:1) {};
						\node[label={[label distance=\dist]-30:$r$}] (r) at (-30:1) {};
					
						\node[label={[label distance=\dist]30:$q\lambda r$}] (qr) at ($(q)!.5!(r)$) {};
						\node[label={[label distance=\dist]120:$p\sigma q$}] (sigma) at ($(p)!.5!(q)$) {};

						\node[label={[label distance=\dist]90:$s$}] (s) at (intersection of p--qr and sigma--r) {};
						
						\node[label={[label distance=\dist]150:$p\sigma' q$}] (sigma') at ($(p)!.25!(q)$) {};
						\node[label={[label distance=\dist]-100:$r\mu^\ast(p\sigma' q)$}] (muast) at ($(r)!.5!(sigma')$) {};
						
						\node[label={[label distance=\dist]-90:$t$}] (t) at ($(p)!.8!(r)$) {};

						\draw[-latex] (q.center) -- (sigma.center);
						\draw[-latex] (r.center) -- (t.center);
						\draw[-latex,dashed] (t.center) -- (p.center);
						\draw[-latex] (q.center) -- (qr.center);
						\draw[-latex,dashed] (qr.center) -- (r.center);
						\draw (p.center) -- (qr.center);
						\draw (sigma.center) -- (r.center);
						\draw (sigma.center) -- (p.center);
						
						\draw[-latex] (r.center) -- (muast.center);
						\draw[dashed,-latex] (muast.center) -- (sigma'.center);
						
						\draw (t.center) -- (s.center);
						\draw (t.center) -- (sigma.center);

					\end{tikzpicture}	
					\caption{}
					\label{fig:convex2b}
			\end{subfigure}
			\caption{Illustration of the proof of \Cref{prop:rationalizable}. A solid line with or without an arrowhead from one outcome $u$ to another outcome $v$ denotes that $u\succ v$ or $u\sim v$, respectively. A dashed line with an arrowhead from $u$ to $v$ denotes that $u\succsim v$. The dotted line indicates an indeterminate preference.}
		\label{fig:convex}
	\end{figure}

Now we are in a position to show that $\succ$ satisfies convexity.
This requires to show that, for all $p\in\Delta$, $L(p)$, $U(p)$, $L(p)\cup I(p)$, and $U(p)\cup I(p)$ are convex.

\begin{enumerate}[label=(\roman*)]
	\item $L(p)\cup I(p)$ is convex: Let $q,r\in L(p)\cup I(p)$ and $\lambda\in [0,1]$. 
By definition of $\succ$, $p\in S([p,q])\cap S([p,r])$.
Since $S$ satisfies expansion, it follows that $p\in S(\conv(p,q,r))$.
Then, $S$ satisfying contraction implies that $p\in S([p,q\lambda r])$, i.e., $p\succsim q\lambda r$.\label{item:lu}

	\item $L(p)$ is convex: Let $q,r\in L(p)$ and $\lambda\in [0,1]$. 
	The proof for this case is illustrated in \Cref{fig:convex1}.
By~\ref{item:lu}, we know that $p\succsim q\lambda r$.
Assume for contradiction that $p\sim q\lambda r$.
Let $\mu^\ast\in(0,1)$ such that, for all $\mu\in(\mu^\ast,1)$, $p\mu r\succ q$, which exists since, by continuity of $\succ$, $U(q)$ is open.
Similarly, there is $\sigma^\ast\in(0,1)$ such that, for all $\sigma\in(\sigma^\ast,1)$, $p\sigma q\succ r$.
For all $\sigma\in(\sigma^\ast,1)$, let $\mu_\sigma\in(0,1)$ such that $p\sigma q\sim p\mu_\sigma r$, which exists since, by \ref{eq:weakbetweenness}, $p\succ p\sigma q$, $p\sigma q\succ r$, and $\succ$ satisfies Archimedean continuity by \Cref{lem:archcont}.
Hence, for all $\sigma\in(\sigma^\ast,1)$, $p\mu_\sigma r\in S([p\sigma q,p\mu_\sigma r])$.
Since $p\sigma q$ goes to $p$ as $\sigma$ goes to $1$ and, by \ref{eq:weakbetweenness}, $p\succ p\mu r$ for all $\mu\in (0,1)$, continuity of $S$ implies that $\mu_\sigma$ goes to $1$ as $\sigma$ goes to $1$.
Thus, there is $\sigma\in(\sigma^\ast,1)$ such that $\mu_\sigma > \mu^\ast$.
By construction of $\sigma$, $[p\sigma q,p\mu_\sigma r]\cap [p,q\lambda r]\neq\emptyset$.
Let $s\in [p\sigma q,p\mu_\sigma r]\cap [p,q\lambda r]$ and $\mu'\in(\mu_\sigma,1)$ such that $s\in [p\mu' r,q]$.
Since $p\sim q\lambda r$, we have that $p\sim s$ and since $p\sigma q\sim p\mu_\sigma r$, we have that $s\sim p\mu_\sigma r$.
Since $I(s)$ is convex, it follows that $s\sim p\mu' q$.
However, by construction of $\mu'$, we have that $p\mu' r\succ q$.
Then, \ref{eq:weakbetweenness} implies that $p\mu' r\succ s$, which is a contradiction. 
Hence, $p\succ q\lambda r$.
\label{item:l}

	\item $U(p)$ is convex: Let $q,r\in U(p)$ and $\lambda\in [0,1]$. 
If $p\succ q\lambda r$, then since $\succ$ satisfies Archimedean continuity by \Cref{lem:archcont}, there is $\lambda'\in(\lambda,1)$ such that $p\sim q\lambda' r$.
Hence, we may assume without loss of generality that $p\sim q\lambda r$.

First consider the case that $q\sim r$.
The proof for this case is illustrated in \Cref{fig:convex2a}.
Since by continuity of $\succ$, $L(q)$ is open, there is $\mu\in (0,1)$ such that $q\succ p\mu r$.
Since $q\lambda r\sim p$, $q\lambda r\sim r$, and $I(q\lambda r)$ is convex by \Cref{claim:convexindiff}, it follows that $p\mu r\sim q\lambda r$.
As $\succ$ satisfies \ref{eq:weakbetweenness} by \Cref{claim:lowerbetweenness}, $r\succ p$ implies that $p\mu r\succsim p$.
If $p\mu r\succ p$, Archimedean continuity of $\succ$ implies that $p\mu r\sim p\sigma q$ for some $\sigma\in(0,1)$.
Then, for $s\in [p,q\lambda r]\cap [p\mu r,p\sigma q]$, we have that $p\mu r\sim s$.
Otherwise $p\mu r\sim p$.
In any case, there is $s\in[p,q\lambda r)$ such that $p\mu r\sim q\lambda r$, $p\mu r\sim s$, and $q\lambda r \sim s$.
Hence, by \Cref{claim:allindifferent}, $\succ$ is all indifferent on $\conv(p\mu r, q\lambda r, s)$.
Since by continuity of $\succ$, $L(r)$ is open, there is $\sigma'\in (0,1)$ such that $r\succ p\sigma' q$.
Let $\mu^\ast\in[0,1)$ such that $\succ$ satisfies $\mu^\ast$-lower betweenness on $[r,p\sigma'q]$, which exists by \Cref{claim:lowerbetweenness}.
Moreover, let $t\in(p,r)$ such that $[q,t]\cap [r,r\mu^\ast(p\sigma' q)]\neq\emptyset$ and $[q,t]\cap [s,q\lambda r]\neq\emptyset$, which exists since $\mu^\ast< 1$ and $s\in[p,q\lambda r)$.
Since we have that $q\lambda r\sim p$ and $q\lambda r\sim r$ and $I(q\lambda r)$ is convex by \Cref{claim:convexindiff}, we have that $q\lambda r\sim t$.

Let $t'\in[p,q\lambda r]\cap [q,t]$.
Since $\succ$ is all indifferent on $\conv(p\mu r, q\lambda r, s)$ as shown before, in particular, $\succ$ is all indifferent on $\conv(p\mu r, q\lambda r, s)\cap[t,t']$. 
From~\ref{item:lu}, we know that $L(q)\cup I(q)$ is convex.
Hence, $q\succsim t$.
By \Cref{claim:lowerbetweenness}, this implies that $t'\succsim t$.
If $t'\succ t$ then again \Cref{claim:lowerbetweenness} implies that $t'\succ t''$ for all $t''\in [t,t')$, which contradicts that $\succ$ is all indifferent on $\conv(p\mu r, q\lambda r, s)\cap[t,t']$.
Hence, $t\sim t'$.
Moreover, since $p\sim q\lambda r$, $q\lambda r\sim t'$.
Hence, by \Cref{claim:allindifferent}, $\succ$ is all indifferent on $\conv(q\lambda r,t,t')$.
This is a contradiction, since $\conv(q\lambda r,t,t')\cap [r,r\mu^\ast(p\sigma' r)]$ contains a sub-interval of $ [r,r\mu^\ast(p\sigma' r)]$ by the choice of $t$.

In the remaining case, we may assume without loss of generality that $q\succ r$.
Archimedean continuity of $\succ$ implies that there is $\sigma\in (0,1)$ such that $p\sigma q\sim r$.
The fact that $p\sim q\lambda r$ implies that $p\sim s$ for $s\in [p,q\lambda r]\cap [p\sigma q,r]$.
By \Cref{claim:lowerbetweenness}, it follows from $q\succ p$ that $p\sigma q\succsim p$.
If $p\sigma q\succ p$, we get a contradiction by applying the previous case to $p$, $p\sigma q$, and $r$.
So assume that $p\sim p\sigma q$.
The proof for this case is illustrated in \Cref{fig:convex2b}.
Observe that, since $p\sigma q\sim p$ and $p\sigma q \sim r$ and $I(p\sigma q)$ is convex by \Cref{claim:convexindiff}, it holds that, for all $s'\in\conv(p,p\sigma q, r)$, $p\sigma q\sim s'$.
Similarly, since $s\sim p$, $s\sim p\sigma q$, and $s\sim r$ and $I(s)$ is convex, it holds that, for all $s'\in \conv(p,p\sigma q,s)\cup\conv(p,r,s) = \conv(p,p\sigma q, r)$, $s\sim s'$.
Since, by continuity of $\succ$, $L(r)$ is open, there is $\sigma'\in (\sigma,1)$ such that $r\succ p\sigma' q$.
Let $\mu^\ast\in[0,1)$ such that $\succ$ satisfies $\mu^\ast$-lower betweenness on $[r,p\sigma' q]$, which exists by \Cref{claim:lowerbetweenness}.
Moreover, let $t\in [p,r]$ such that $[p\sigma q,t]\cap [r,r\mu^\ast(p\sigma' q)]\neq\emptyset$.
As shown above, we have that $p\sigma q\sim s$, $p\sigma q\sim t$, and $s\sim t$.
Thus, by \Cref{claim:allindifferent}, $\succ$ is all indifferent on $\conv(p\sigma q,s,t)$.
By the choice of $t$, there are $\mu,\mu'\in[\mu^\ast,1]$ such that $r\mu(p\sigma' q), r\mu'(p\sigma' q)\in\conv(p\sigma q, s,t)$ and $r\mu(p\sigma' q)\succ r\mu'(p\sigma' q)$, which is a contradiction to the fact that $\succ$ is all indifferent on $\conv(p\sigma q,s,t)$.
Hence, in any case, we have that $q\lambda r\succ p$.
\label{item:u}

	\item $U(p)\cup I(p)$ is convex: Let $q,r\in U(p)\cup I(p)$ and $\lambda\in [0,1]$. 
The cases that $q,r\in U(p)$ and $q,r\in I(p)$ are covered by~\ref{item:u} and the fact that $I(p)$ is convex, respectively.
Hence, we may assume without loss of generality that $q\in U(p)$ and $r\in I(p)$.
Assume for contradiction that $q\lambda r\not\in U(p)\cup I(p)$, i.e., $p\succ q\lambda r$.
Archimedean continuity of $\succ$ implies that $p\sim q\lambda' r$ for some $\lambda'\in (\lambda,1)$.
But then, since $I(p)$ is convex, $p\sim q\lambda' r$ and $p\sim r$ imply that $p\sim q\lambda r$, which is a contradiction.
\end{enumerate}
\end{proof}

\section{SSB Utility Representation Theorem}\label{app:ssb}

In this section, we show that every preference relation in $\mathcal R$ admits a representation through an SSB function.
This is proven by reduction to \citeauthor{Fish82c}'s~\citeyearpar{Fish82c} SSB representation theorem, which states that every relation satisfying Archimedean continuity, dominance, and symmetry can be represented by an SSB function.
A preference relation $\succ$ satisfies \emph{dominance} if, for all $p,q,r\in\Delta$ and $\lambda\in(0,1)$,
\begin{equation}
\begin{aligned}
	&p\succ q \text{ and } p\succsim r \text{ imply } p\succ q \lambda r\text,\\
	&q\succ p \text{ and } r\succsim p \text{ imply } q \lambda r \succ p\text{, and}\\
	&p\sim q \text{ and } p\sim r \text{ imply } p\sim q \lambda r\text.\\
\end{aligned}
\tag{Dominance}
\end{equation}
Informally, dominance requires that for every outcome $p$, $I(p)$ is a hyperplane through $p$ separating $U(p)$ and $L(p)$.

\propssb*

\begin{proof}
	It is easy to see that every preference relation that admits a representation through an SSB function satisfies continuity, convexity, and symmetry. 
	So we only prove the ``if part'' here.
	Let $\succ$ be a preference relation satisfying continuity, convexity, and symmetry.
	\citet[][Theorem 1]{Fish82c} has shown that every preference relation satisfying Archimedean continuity, dominance, and symmetry can be represented by an SSB function.
	By \Cref{lem:archcont}, $\succ$ satisfies Archimedean continuity.
	Hence, it suffices to show that $\succ$ satisfies dominance.
	
	First observe that, by convexity of $\succ$, $U(p)\cup I(p)$ and $L(p)\cup I(p)$ are convex.
	As the intersection of convex sets, $(U(p)\cup I(p))\cap (L(p)\cup I(p)) = I(p)$ is convex, too.
	Hence, for all $p,q,r\in\Delta$ with $p\sim q$ and $p\sim r$ and $\lambda\in(0,1)$, we have that $p\sim q\lambda r$.
	This establishes the indifference part of dominance.
	
	Next, we prove two auxiliary statements:
	\begin{enumerate}[label=(\roman*)]
	\item For all $p,q\in\Delta$ with $p\succ q$ and $\lambda\in (0,1)$, $p\succ p\lambda q\succ q$.
	This condition is known as \emph{betweenness} \citep[see, e.g.,][]{Chew89a}.
	Assume for contradiction that betweenness is not satisfied.
	Then, there are $p,q\in\Delta$ with $p\succ q$ and $\lambda\in(0,1)$ such that either $p\not\succ p\lambda q$ or $p\lambda q\not\succ q$.
	In the first case, convexity of $L(p)\cup I(p)$ implies that $p\succsim p\lambda q$, since $p\sim p$ and $p\succ q$ by assumption.
	Hence, $p\sim p\lambda q$.
	Let $\lambda^\ast = \inf_{\lambda\in[0,1]}\{ p\sim p\lambda q\}$.
	By continuity of $\succ$, $L(p)\cap [p,q]$ is open in $[p,q]$ and hence, $I(p)\cap [p,q] = [p,q]\setminus L(p)$ is closed in $[p,q]$.
	Hence, $p\sim p\lambda^\ast q$.
	In particular, $\lambda^\ast\in(0,1)$.
	Moreover, by convexity of $\succ$, $I(p)\cap [p,q]$ and $L(p)\cap [p,q]$ are convex.
	Hence, $I(p)\cap [p,q] = [p,p\lambda^\ast q]$ and $L(p)\cap [p,q] = (p\lambda^\ast q, q]$. Note that, since indifference sets are convex, it follows that $p\lambda' q\sim p\lambda''q$ for all $\lambda',\lambda''\in[\lambda^\ast,1]$.
	
	Let $\bar\lambda\in (\max\{0,2\lambda^\ast-1\},\lambda^\ast)$ and $\bar q = p\bar\lambda q$.
	So $\bar\lambda$ is chosen such that $\bar q$ is closer to $p\lambda^\ast q$ than $p\lambda^\ast q$ is to $p$.
	Since $\bar\lambda < \lambda^\ast$, $p\succ \bar q$.
	Now let $r = p$ and $\tilde\lambda \in (\bar\lambda,\lambda^\ast)$.
	Since $2\lambda^\ast < 1+\bar\lambda$ by construction, we have $\nicefrac12\,p+\nicefrac12\,\bar q = \nicefrac12\,(p1q) + \nicefrac12\,(p\bar\lambda q) = p((1 + \bar\lambda)/2)q\in [p,p\lambda^\ast q] \subseteq I(p)$.
	Thus, $r\sim \nicefrac 12\, p + \nicefrac12\, \bar q$.
	Moreover, $\bar q\tilde\lambda p = (p\bar\lambda q)\tilde\lambda p = p(\bar\lambda + 1-\tilde\lambda)q\sim \nicefrac12\, \bar q + \nicefrac12\, r$, since $\bar\lambda + 1 - \tilde\lambda > \bar\lambda  + 1 - \lambda^\ast > 2\lambda^\ast - 1 + 1 - \lambda^\ast = \lambda^\ast$.
	However, $\nicefrac12\, p + \nicefrac12\, r = p \succ p\tilde\lambda \bar q$, since $\tilde\lambda < \lambda^\ast$, which contradicts symmetry.
	An analogous argument takes care of the case $p\lambda q\not\succ q$.\label{enumi:betweenness}
	\item For all $p,q,r\in\Delta$, if there are $\bar p, \bar q, \bar r\in\conv(p,q,r)$ such that the affine hull of $\bar p$, $\bar q$, and $\bar r$ is equal to the affine hull of $p$, $q$, and $r$, and $\bar p\sim \bar q$, $\bar p\sim \bar r$, and $\bar q\sim \bar r$, then $p\sim q$, $p\sim r$, and $q\sim r$.
	For contradiction, assume without loss of generality that $p\succ q$.
	First, observe that the indifference part of dominance implies that, for all $s,t\in\conv(\bar p,\bar q,\bar r)$, $s\sim t$.
	Let $x$ be in the relative interior of $\conv(\bar p,\bar q,\bar r)$.
	For $y\in\conv(p,q,r)$, let $\lambda\in(0,1)$ such that $y\lambda x\in\conv(\bar p,\bar q,\bar r)$.
	If $y\not\sim x$, then by~\ref{enumi:betweenness}, $y\lambda x\not\sim x$, which is a contradiction.
	Hence, $x\sim y$, i.e., every outcome in the relative interior of $\conv(\bar p,\bar q,\bar r)$ is indifferent to every lottery in $\conv(p,q,r)$.
	Now let $\mu\in (0,1)$ such that $q^\ast = q\mu x$ is in the relative interior of $\conv(\bar p,\bar q,\bar r)$.
	Then, $p\sim q^\ast$, $p\sim x$, and $q^\ast\sim x$ by what we have shown before.
	This implies that, for all $s,t\in\conv(p,q^\ast, x)$, $s\sim t$.
	Since $\succ$ satisfies continuity, $U(q)$ is open and hence, $U(q)\cap [p,x]$ is open in $[p,x]$.
	Thus, there is $p^\ast\in(p,x)$ such that $p^\ast\succ q$.
	For $\sigma\in (0,1)$ close to one, we have that $p^\ast\sigma q\in\conv(p,q^\ast, x)$, which implies that $p^\ast\sim p^\ast\sigma q$.
	However, by~\ref{enumi:betweenness}, we have that $p^\ast\succ p^\ast\sigma q$, which is a contradiction.
	\label{enumi:triple}
	\end{enumerate}
	
	Now we are ready to show the missing parts of dominance.
	To this end, let $p,q,r\in\Delta$ and $\lambda\in (0,1)$.
	We first show that $p\succ q$ and $p\succsim r$ imply $p\succ q\lambda r$.
	If $p\succ q$ and $p\succ r$, it follows from convexity of $L(p)$ that $p\succ q\lambda r$.
	If $p\succ q$ and $p\sim r$, it follows from convexity of $L(p)\cup I(p)$ that $p\succsim q\lambda r$.
	Assume for contradiction that $p\sim q\lambda r$.
	We distinguish three cases, illustrated in \Cref{fig:ssbrepproof}:
	\begin{figure}
		
			\tikzset{->-/.style={decoration={
			  markings,
			  mark=at position #1 with {\arrow{latex}}},postaction={decorate}}}
			\def\arrow{1.5ex}
			\def\dist{.5ex}
			\def\sc{2.2}
			\centering
			\begin{subfigure}{0.32\textwidth}
				\centering
				\begin{tikzpicture}[scale = \sc, every node/.style={inner sep=0,outer sep=0}]
					
					\node[label={[label distance=\dist]-150:$p$}] (p) at (-150:1) {};
					\node[label={[label distance=\dist]90:$q$}] (q) at (90:1) {};
					\node[label={[label distance=\dist]-30:$r$}] (r) at (-30:1) {};
					
					\node[label={[label distance=\dist]30:$q\lambda r$}] (qr) at ($(q)!.5!(r)$) {};
		
					\draw[-latex] (p.center) -- (q.center);
					\draw (p.center) -- (r.center);
					\draw (p.center) -- (qr.center);
					\draw (q.center) -- (r.center);

				\end{tikzpicture}	
				\caption{The case $q\sim r$.}
				\label{fig:ssb1}
			\end{subfigure}
			\begin{subfigure}{0.32\textwidth}
				\centering
				\begin{tikzpicture}[scale = \sc, every node/.style={inner sep=0,outer sep=0}]
					
					\node[label={[label distance=\dist]-150:$p$}] (p) at (-150:1) {};
					\node[label={[label distance=\dist]90:$q$}] (q) at (90:1) {};
					\node[label={[label distance=\dist]-30:$r$}] (r) at (-30:1) {};
					
					\node[label={[label distance=\dist]30:$q\lambda r$}] (qr) at ($(q)!.5!(r)$) {};
					\node[label={[label distance=\dist]-90:$p\mu r$}] (pr) at ($(p)!.5!(r)$) {};
					\node[label={[label distance=\dist]150:$\bar q$}] (qbar) at (intersection of p--qr and q--pr) {};
					
					\draw[-latex] (p.center) -- (q.center);
					\draw (p.center) -- (r.center);
					\draw (p.center) -- (qr.center);
					\draw (q.center) -- (pr.center);
					\draw[-latex] (q.center) -- (r.center);

				\end{tikzpicture}
				\caption{The case $q\succ r$.}
				\label{fig:ssb2}
			\end{subfigure}
			\begin{subfigure}{0.32\textwidth}
				\centering
				\begin{tikzpicture}[scale = \sc, every node/.style={inner sep=0,outer sep=0}]
					
					\node[label={[label distance=\dist]-150:$p$}] (p) at (-150:1) {};
					\node[label={[label distance=\dist]90:$q$}] (q) at (90:1) {};
					\node[label={[label distance=\dist]-30:$r$}] (r) at (-30:1) {};
					
					\node[label={[label distance=\dist]30:$q\lambda^\ast r$}] (qr) at ($(q)!.5!(r)$) {};
					\node[label={[label distance=\dist]50:$\bar q$}] (qbar) at ($(q)!.2!(r)$) {};
					\node[label={[label distance=\dist]10:$\bar r$}] (rbar) at ($(q)!.8!(r)$) {};
		
					\draw[->-=.9] (p.center) -- (q.center);
					\draw (p.center) -- (r.center);
					\draw (p.center) -- (qr.center);
					\draw[->-=.9] (r.center) -- (q.center);
					\draw[-latex] (p.center) -- (qbar.center);
					\draw (p.center) -- (rbar.center);

				\end{tikzpicture}	
				\caption{The case $r\succ q$.}
				\label{fig:ssb3}
			\end{subfigure}
			\caption{Illustration of the proof of \Cref{prop:ssb}. A solid line with or without arrowhead from one outcome $u$ to another outcome $v$ denotes that $u\succ v$ or $u\sim v$, respectively.}
			
		\label{fig:ssbrepproof}
	\end{figure}
	\begin{enumerate}[label=(\roman*)]
		\setcounter{enumi}{2}
		\item $q\sim r$: 
		The indifference part of dominance implies that $q\lambda r\sim r$.
		Hence, we have that $p\sim r$, $p\sim q\lambda r$, and $q\lambda r\sim r$.
		By~\ref{enumi:triple}, we have $p\sim q$, which contradicts $p\succ q$.
		 
 	\label{item:ssb1}
		\item $q\succ r$:
		Since by assumption $p\succ q$ and $q\succ r$ and since, by \Cref{lem:archcont}, $\succ$ satisfies Archimedean continuity, there is $\mu\in (0,1)$ such that $q\sim p\mu r$. 
		Let $\bar q\in [p,q\lambda r]\cap [q,p\mu r]$.
		Then, by the indifference part of dominance, $p\sim \bar q$, $p\sim p\mu r$, and $\bar q\sim p\mu r$.
		By~\ref{enumi:triple}, we have $p\sim q$, which contradicts $p\succ q$.
		\label{item:ssb2}
		\item $r\succ q$: 
		Continuity of $\succ$ implies that $L(p)\cap [q,r]$ is open in $[q,r]$ and $I(p)\cap [q,r] = [q,r]\setminus L(p)$ is closed in $[q,r]$.
		Convexity of $\succ$ implies that $L(p)\cap [q,r]$ and $I(p)\cap [q,r]$ are convex.
		Hence, there is $\lambda^\ast\in (0,1)$ such that $L(p)\cap [q,r] = [q,q\lambda^\ast r)$ and $I(p)\cap[q,r] = [q\lambda^\ast r, r]$.
		Let $\bar q\in [q,q\lambda^\ast r)$ and $\bar r\in (q\lambda^\ast r,r]$ such that $q\lambda^\ast r = \nicefrac12\, \bar q + \nicefrac12\, \bar r$.
		Such $\bar q$ and $\bar r$ exist, since $\lambda^\ast\in (0,1)$.
		By construction, $p\succ \bar q$ and $p\sim \bar r$.
		Since $r\sim p$ and $r\succ \bar q$ by assumption and~\ref{enumi:betweenness}, convexity of $L(r)\cup I(r)$ implies that $r\succsim \nicefrac12\, p + \nicefrac12\, \bar q$.
		\begin{itemize}
			\item[--] If $r\sim \nicefrac12\, p + \nicefrac12\, \bar q$, let $\bar p\in [p,q\lambda^\ast r]\cap [r,\nicefrac12\, p + \nicefrac12\, \bar q]$.
			Then, $p\sim\bar p$, $p\sim r$, and $\bar p\sim r$.
			By~\ref{enumi:triple}, this contradicts $p\succ q$.
			\item[--] Now consider the case that $r\succ \nicefrac12\, p + \nicefrac12\, \bar q$.
			From~\ref{enumi:betweenness}, it follows that $\nicefrac12\, p + \nicefrac12\, \bar q\succ \bar q$, because $p \pref \bar q$.
			By \Cref{lem:archcont}, $\succ$ satisfies Archimedean continuity and hence, $\nicefrac12\, p + \nicefrac12\, \bar q\sim s$ for some $s\in [\bar q, r]$.
			If $s\in (q\lambda^\ast r, r)$, let $\bar p \in [p,q\lambda^\ast r]\cap  [\nicefrac12\, p + \nicefrac12\, \bar q, s]$, which is non-empty in this case.
			Then, $\bar p\sim  q\lambda^\ast r$ and $\bar p\sim \nicefrac12\, p + \nicefrac12\, \bar q$.
			Since $I(\bar p)$ is convex, it follows that $[q\lambda^\ast r, \nicefrac12\, p + \nicefrac12\, \bar q]\subseteq I(\bar p)$.
			If $\bar p\succ q$, then for $q'\in[q\lambda^\ast r, \nicefrac12\, p + \nicefrac12\, \bar q]\cap [\bar p, q]$, we have by~\ref{enumi:betweenness} that $\bar p\succ q'$, which is a contradiction.
			So $\bar p\sim q$.
			Then, again by~\ref{enumi:betweenness}, for $t\in [p,r]$ such that $\bar p\in [t,q]$, $t\sim q$.
			However, since $p\succ q$ and $r\succ q$ by assumption, it follows from convexity of $U(q)$ that $t\succ q$, which is a contradiction.
			
			If $s\in (\bar q, q\lambda^\ast r]$, it follows from symmetry of $\succ$ that $\nicefrac12\, p + \nicefrac12\, \bar r\sim t$ for some $t\in [q\lambda^\ast r, \bar r)$.
			Let $\bar p\in (p,r)$ such that $\nicefrac12\, p + \nicefrac12\, \bar r\in [\bar p, t]$.
			It follows from~\ref{enumi:betweenness} that $\bar p\sim t$.
			Hence, $p\sim \bar p$, $p\sim \nicefrac12\, p + \nicefrac12\, \bar r$, and $\bar p\sim \nicefrac12\, p + \nicefrac12\, \bar r$.
			By~\ref{enumi:triple}, this contradicts $p\succ q$.
		\end{itemize}\label{item:ssb3}
	\end{enumerate}
	The fact that $q\succ p$ and $r\succsim p$ imply $q\lambda r \succ p$ can be shown analogously.
\end{proof}

Noteworthily, \Cref{prop:ssb} does not hold if symmetry is replaced by the weaker notion of symmetry considered by \citet{Fish82c}, which only applies to cases where $p$, $q$, and $r$ as in the definition of symmetry in \Cref{sec:ssb} are linearly ordered (cf. \Cref{fn:symmetry}). The preference relation for $\mathcal U =\{a,b\}$ given in \Cref{sec:ssb} satisfies \citeauthor{Fish82c}'s notion of symmetry (since there are no three outcomes that are linearly ordered) but cannot be represented by an SSB function.

\end{document}